\def \v{\varepsilon}
\def \b{\mathbf}
\def \bs{\boldsymbol}
\def \a{\alpha}
\newcommand{\y}{\b{y}}
\newcommand{\X}{\b{X}}
\newcommand{\x}{\b{x}}
\newcommand{\A}{\mathcal{A}}
\newcommand{\M}{\mathcal{M}}
\newcommand{\Var}{\text{Var}}
\newcommand{\MCV}{\text{MCV}}
\newcommand{\MF}{\text{MF}}
\newcommand{\WMF}{\text{WMF}}
\numberwithin{equation}{section}
\theoremstyle{plain}
\newtheorem{thm}{Theorem}
\newtheorem{remark}{Remark}
\newtheorem{lemma}{Lemma}
\newtheorem{corollary}{Corollary}
\begin{document}
\renewcommand{\baselinestretch}{1}
\def\Quote{\begin{quotation}\normalfont\small}
\def\EndQuote{\end{quotation}\rm}
\def\BigHeading{\bfseries\Large}\def\MediumHeading{\bfseries\large}
\def\bct{\begin{center}}
\def\ect{\end{center}}
\font\BigCaps=cmcsc9 scaled \magstep 1
\font\BigSlant=cmsl10    scaled \magstep 1
\def\lbk{\linebreak}
\def\Report{Prediction Weighted Maximum Frequency Selection}
\def\Author{Liu and Rao}
\pagestyle{myheadings}
\markboth{\Author}{\Report}
\thispagestyle{empty}
\bct{\BigHeading
Prediction Weighted Maximum Frequency Selection}\\\vskip15pt
Hongmei Liu and J. Sunil Rao
\vskip5pt
{\small \it Division of Biostatistics, University of Miami, Miami, FL 33136\\
h.liu7@med.miami.edu\\
JRao@biostat.med.miami.edu}
\ect

\Quote
\vskip-5pt\noindent
Shrinkage estimators that possess the ability to produce sparse solutions have become increasingly important to the analysis of today's complex datasets.  Examples include the LASSO, the Elastic-Net and their adaptive counterparts.  Estimation of penalty parameters still presents difficulties however.  While variable selection consistent procedures have been developed, their finite sample performance can often be less than satisfactory.   We develop a new strategy for variable selection using the adaptive LASSO and adaptive Elastic-Net estimators with $p_n$ diverging.   The basic idea first involves using the trace paths of their LARS solutions to bootstrap estimates of maximum frequency (MF) models conditioned on dimension.  Conditioning on dimension effectively mitigates overfitting, however to deal with underfitting, these MFs are then prediction-weighted, and it is shown that not only can consistent model selection be achieved, but that attractive convergence rates can as well, leading to excellent finite sample performance.  Detailed numerical studies are carried out on both simulated and real datasets.  Extensions to the class of generalized linear models are also detailed. 
\vskip5pt\noindent\sl Key Words: \rm Adaptive LASSO, adaptive Elastic-Net, model selection, bootstrap.
\EndQuote

\section{Introduction}
\label{sec:intro}

Consider the standard linear regression model
\begin{eqnarray}
\b{y}=\b{X}\bs{\beta}+\bs{\v}, 
\label{eqt1}
\end{eqnarray}
where $\b{y} = (y_1, \dots, y_n)^T$ is a vector of responses, $\b{X} = (\X_1, \dots, \X_{p_n})$ is an $n \times p_n$ design matrix of predictors, ${\bs \beta}=(\beta_1, \dots, \beta_{p_n})^T$ is a vector of unknown regression parameters, $\bs \v = (\v_1, \dots, \v_n)^T$ is a vector of independent and identically distributed (i.i.d.) random errors. We allow $p_n$ to increase with $n$.

Because some elements of ${\bs \beta}$ might be $0$, a family of penalized least squares estimators were developed for variable selection and estimation,   
\begin{eqnarray}
\bs{\hat{\beta}} = \arg \min_{\bs \beta} \|\b{y} - \b{X}\bs{\beta}\|^2 + \sum_{j=1}^{p_n}\rho(|\beta_j|, \bs \lambda), 
\label{eqt2}
\end{eqnarray}
where $\|\cdot\|$ is the $L_2$-norm, $\bs \lambda \ge \b{0} $ are regularization parameters, and $\rho(|\beta_j|, \bs{\lambda})$ is positive valued for $\beta_j \ne 0$.  \cite{fan2006statistical} pointed out that through variable selection one can focus on a small number of important predictors for enhanced scientific discovery and potentially improve prediction performance by removing noise variables.

Penalized $L_q$-regression is a special case of (\ref{eqt2}) with
$\rho(|\beta_j|, \bs \lambda) = \lambda|\beta_j|^q, q \ge 0$, which includes the best subset selection for $q=0$; the LASSO \cite{tibshirani1996regression} for $q=1$ and the ridge regression \cite{hoerl1970ridge} for $q=2$.  Best subset selection is known to be computationally infeasible for high dimensional data and inherently discrete in variable selection \cite{breiman1995better}.  Ridge regression does not possess a variable selection property. The LASSO however, can do simultaneous estimation and variable selection because its $L_1$ penalty is singular at the origin and can shrink some coefficients to exact 0 with a sufficiently large $\lambda$ \cite{fan2001variable}.  Other penalized least squares estimators that can do simultaneous estimation and variable selection include the SCAD \cite{fan2001variable} and adaptive LASSO \cite{zou2006adaptive} both enjoying the oracle properties \cite{fan2001variable}; the Elastic-Net \cite{zou2005regularization} capable of detecting grouped effects; the adaptive Elastic-Net \cite{zou2009adaptive} combining advantages of the adaptive LASSO and Elastic-Net; and etc.

Selection of $\bs{\lambda}$ is essential in above penalized least squares estimation procedures.  Although methods such as the SCAD, adaptive LASSO and adaptive Elastic-Net enjoy the oracle properties asymptotically, their optimal properties rely on particular specifications of the $\bs{\lambda}$, whose magnitude controls the complexity of a selected model and trade-off between bias and variance in estimators \cite{fan2010selective}.  The multi-fold cross-validation (CV) and generalized cross-validation (GCV) are frequently applied for the tuning parameters selection  \cite{tibshirani1996regression,fan2001variable,zou2006adaptive,zou2005regularization}.  But they overfit the model asymptotically \cite{zhang1993model}.  For consistent variable selection, \cite{wang2007unified} suggested to use the BIC in adaptive LASSO and a modified BIC when $p_n$ is diverging \cite{wang2009shrinkage}; \cite{chen2008extended} introduced an extended BIC (EBIC) for linear models and then generalized it to generalized linear models \cite{chen2012extended}; \cite{fan2013tuning} put forward a generalized information criterion (GIC) with $p_n$ diverging; \cite{feng2013consistent} provided a consistent cross-validation procedure (CCV) for the LASSO; \cite{meinshausen2010stability} proposed the stability selection (SS) for their randomized LASSO.  Although variable selection consistency was established for these procedures,  their finite sample performance can often be less than optimal (Section~\ref{sec:simu} ahead demonstrates this in simulation studies).  %Slow convergence rates were empirically observed for BIC, EBIC, modified BIC and GIC in a variety of settings.  Performance of the SS is very sensitive to the prespecifications of several unknown parameters.     

\begin{figure}
\centering
\includegraphics[scale=0.38]{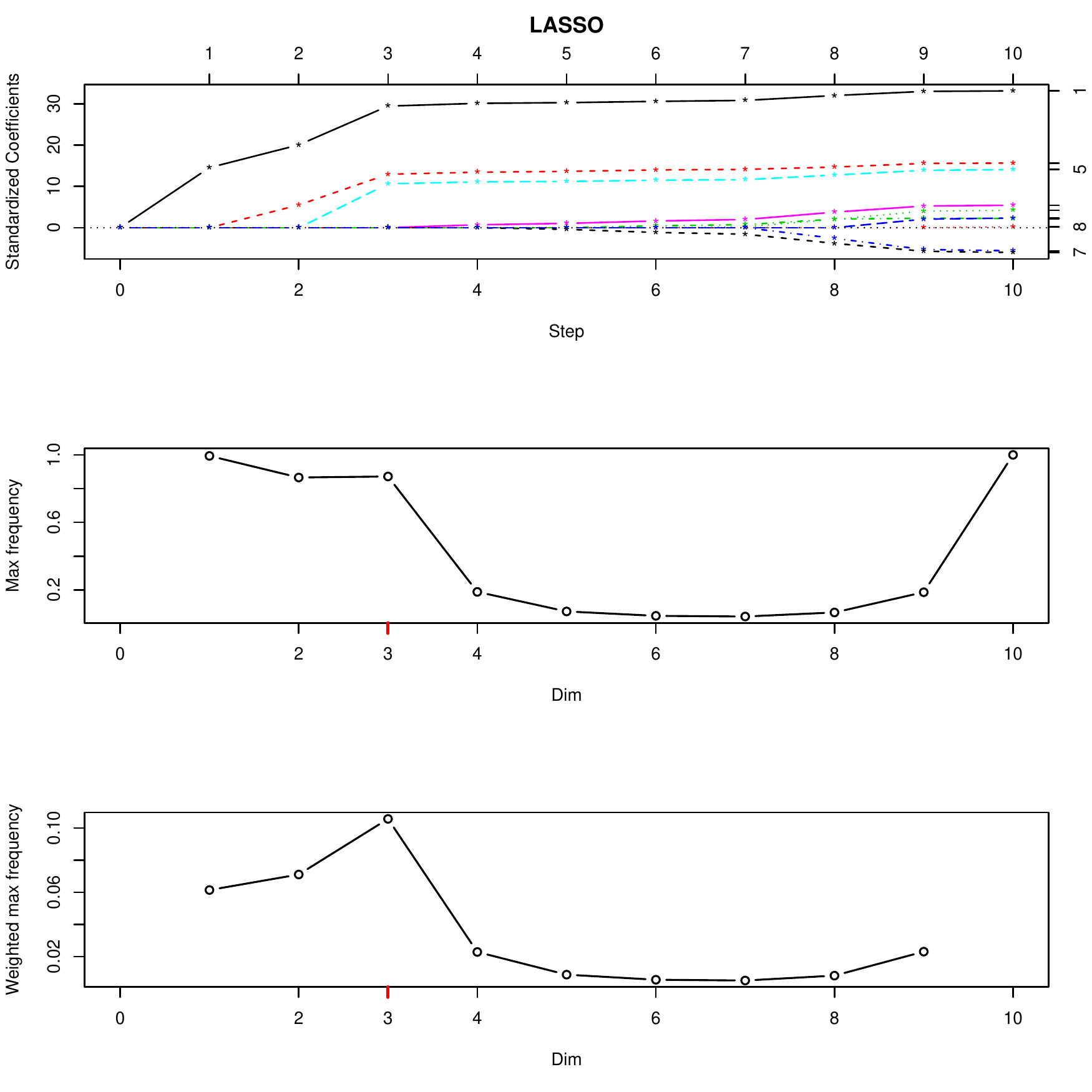}
\caption{The full adaptive LASSO solution path from the LARS (top panel), the estimated maximum frequency at each dimension (middle panel) and the weighted maximum frequency at each dimension (bottom panel). The red tick in $x$-axis indicates the true dimension 3. 
\label{fig:intro1}}
\end{figure}

We propose a new method for tuning parameters selection, focusing in particular on the adaptive LASSO and adaptive Elastic-Net estimators.  A simple example helps to illustrate the basic idea.  Consider the adaptive LASSO in following example.

{\it Example 1.} Data are drawn from model (\ref{eqt1}) with $\bs{\beta} = (3, 1.5, 0, 0, 2, 0, \dots)_{10}^T$,  row vectors of the design matrix $\b{x}_i \overset{iid}{\sim} N_{10}(\b{0}, \bs{\Sigma})$ with $\bs{\Sigma}(i, j)=0.3^{|i-j|}$ and $\v_i \overset{iid}{\sim} N(0, 3^2)$ for $i=1, \dots, 100$.  So the true model size here is 3.

Figure~\ref{fig:intro1} (top) shows the full adaptive LASSO solution path from the LARS algorithm \cite{efron2004least}. In the figure, each step indicates a dimension change in the estimator. These steps are called transition points in \cite{zou2007degrees}. They showed that if using information criteria such as the AIC or BIC to identify the optimal $\lambda$ in adaptive LASSO, it lies in one of the transition points. This result helps to justify uses of the LARS algorithm and our subsequent focus on the transition points.  Then the question remains about how to choose from these transition points.

Figure~\ref{fig:intro1} (middle and bottom) gives a brief look at our proposed method.  The 
middle panel shows the estimated maximum frequency (MF) of a candidate model given the dimension.  The MF estimation is done by a bootstrapping algorithm using the transition points. The strategy of conditioning on dimension has two important consequences:  i) for overfit dimensions, the MFs are dramatically smaller than the true dimension MF (other than the full model of course),  and ii) underfit dimensions can also produce large MF values.  Point i) is important because for variable selection, overfitting is usually much more difficult to deal with.  So one must now deal with the underfitting issue.  We do this by introducing a prediction-based weight to the MFs (labeled as WMF). The results are shown in bottom panel of Figure~\ref{fig:intro1}.  As is evident, now the true dimension, which maps to the true model, stands out beautifully from all others.  

%where the weights are inversely proportional to the prediction accuracy of the maximum frequency models at each dimension. 

%We would like to down-weight the underfitting MFs without significantly up-weighting the overfitting MFs.  

%It is well known that underfit models tend to have higher prediction errors because of increased bias, and that overfit models tend to predict quite well.   

%In words, we simply draw bootstrap samples and re-estimate the regression parameters using LARS.  At each dimension we keep track of which models have maximum frequency.  
%Another important point to note is that the true model tends to stand out compared to any other model (which by construction must be an underfit model) at dimension 3, and that such a thing does not occur at overfit dimensions.   

The rest of the paper is organized as follows. In Section~\ref{sec:RB}, we briefly review the adaptive LASSO and adaptive Elastic-Net estimators and introduce the bootstrap algorithm for each.  In Section~\ref{sec:MF}, the MF procedure itself is described and its underlying properties are carefully examined using a simple orthogonal design.  In Section~\ref{sec:asMF}, asymptotic properties of the MF procedure are established in general settings.  The WMF procedure and its variable selection consistency are presented in Section~\ref{sec:WMF}.  Comprehensive simulation studies are shown in Section~\ref{sec:simu}.  Section~\ref{sec:ext} describes extensions of the MWF procedure to generalized linear models (GLMs).  Applications of the WMF procedure to ultra-high dimensional data are discussed in Section \ref{sec:uhd}.

\section{Bootstrapping the adaptive LASSO and adaptive Elastic-Net estimators}
\label{sec:RB}

Denote $\bs{\beta}_0$ the true value of $\bs{\beta}$ with model size $p_0$, and $\bs{\tilde{\beta}} = (\tilde{\beta}_1, \dots, \tilde{\beta}_{p_n})^T $ a consistent estimate of $\bs{\beta}_0$. The adaptive LASSO \cite{zou2006adaptive} estimator is 
\begin{eqnarray}
\bs{\hat{\beta}}_a = \arg \min_{\bs \beta} ||\b y - \b X \bs \beta||^2 + 2\lambda_n \sum_{j=1}^{p_n} \omega_j|\beta_j|,
\label{eqt3}
\end{eqnarray}
where $\omega_j = |\tilde{\beta}_j|^{-\gamma}$, $\gamma \ge 0$.  It was suggested to use the ordinary least-squares (OLS) estimator or the best ridge estimator (if collinearity exists) for $\bs{\tilde{\beta}}$.  Under certain regularity conditions, $\bs{\hat{\beta}}_a$ was shown to enjoy the oracle properties.

The Elastic-Net estimator \cite{zou2005regularization} is  
\begin{eqnarray}
\bs{\hat{\beta}}_e = (1+{\lambda_{n2} \over n} ) \left \{ \arg \min_{\bs \beta} ||{\b y}-{\b X}{\bs \beta}||^2+\lambda_{n2}\sum_{j=1}^{p_n}|\beta_j|^2+\lambda_{n1}\sum_{j=1}^{p_n}|\beta_j| \right \}. 
\label{eqt4}
\end{eqnarray} 
It overcomes several limitations pertaining to the LASSO: 1) the added $L_2$ penalty is {\it strictly} convex to allow {\it grouping effects}; 2) In a $p_n>n$ case, it can potentially estimate all $p_n$ predictors, while the LASSO can only find at most $n$ predictors.

\cite{zou2009adaptive} proposed the adaptive Elastic-Net to combine strengths of the Elastic-Net and adaptive LASSO. The adaptive Elastic-Net estimator is 
\begin{eqnarray}
\bs{\hat{\beta}}_{ae} = (1+{\lambda_{n2} \over n}) \left \{\arg \min_{\bs \beta} ||{\b y}-{\b X}{\bs \beta}||^2+\lambda_{n2}\sum_{j=1}^{p_n}|\beta_j|^2+\lambda_{n1}^+\sum_{j=1}^{p_n} \omega_j|\beta_j|  \right \},
\label{eqt5}
\end{eqnarray}
where $\omega_j = |\hat{\beta}_{ej}|^{-\gamma}$, $\gamma \ge 0$ and $\bs{\hat{\beta}}_e=(\hat{\beta}_{e1}, \dots, \hat{\beta}_{ep_n})^T$ is the Elastic-Net estimator in (\ref{eqt4}).  Note that $\lambda_{n2}$ takes the same value for the $L_2$ penalty function in (\ref{eqt4}) and (\ref{eqt5}), because the $L_2$ penalty contributes to the same kind of grouping effects.  On the other hand, $\lambda_{n1}$ and $\lambda_{n1}^+$ are allowed to be different as they control the sparsity in estimators.  Under some regularity conditions, $\bs{\hat{\beta}}_{ae}$ was shown to enjoy the oracle properties.  
%Moreover, since Elastic-Net allows for a sparse estimate, it was suggested to use $(|\hat{\beta}_{e,j}|+1/n)^\gamma$ in denominator of the $L_1$ penalty in (2.3) to avoid dividing zeros. 

We now detail bootstrapping for these two estimators.  There are typically two ways of generating bootstrap observations for model (\ref{eqt1}) \cite{shao1996bootstrap}.  

1. {\it Bootstrapping pairs} \cite{efron1982jackknife}.  Let $\hat{F}(\X, \b{y})$ be the empirical distribution putting mass $n^{-1}$ on each data pair $(\x_i, y_i), i=1, \dots, n$.  Generate i.i.d. paired bootstrap data $\{(\x_i^*, y_i^*), i = 1, \dots, n\}$ from $\hat{F}(\X, \b{y})$.  The bootstrap analog of $\bs{\hat{\beta}}_a$, denoted as $\bs{\hat{\beta}}_a^*$, is to replace $(\X, \b{y})$ with $(\X^*, \b{y}^*)$ in (\ref{eqt3}) where $\X^* = (\x_1^*, \dots, \x_n^*)^T$ and $\b{y}^* = (y_1^*, \dots, y_n^*)^T$.  So is the bootstrap analog of $\bs{\hat{\beta}}_{ae}$, denoted as $\bs{\hat{\beta}}_{ae}^*$.  Under the weak condition that $\X^T\X \to \infty$,   $\X^T\X(\X^{*T}\X^*)^{-1} \to 1$ almost surely \cite{shao1996bootstrap}.

2. {\it Bootstrapping residuals} \cite{efron1979computers}.  Calculate the $i$th residual 
$$\hat{\v}_{0i} = y_i - \x_i^T\bs{\hat{\beta}},$$
where $\bs{\hat{\beta}}$ is a ridge estimate of $\bs{\beta}_0$.  Generate i.i.d. bootstrap residuals $\{\v_i^*, i=1, \dots, n\}$ from the empirical distribution that puts mass $n^{-1}$ on each centered residual, $\hat{\v}_i = \hat{\v}_{0i} - \bar{\v}_0$, where $\bar{\v}_0$ is the average of $\hat{\v}_{0i}, i = 1, \dots, n$.  Then the i.i.d. residual bootstrap data is $\{(\x_i, y_i^*), i=1, \dots, n\}$ where $y_i^* = \x_i^T\bs{\hat{\beta}} + \v_i^*$.  The bootstrap analog of $\bs{\hat{\beta}}_a$, denoted as $\bs{\tilde{\beta}}_a^*$, is to substitute $\b{y}$ with $\b{y}^*$ in (\ref{eqt3}).  So is the bootstrap analog of $\bs{\hat{\beta}}_{ae}$, denoted as $\bs{\tilde{\beta}}_{ae}^*$.

In next section, we introduce the MF procedure which takes use of above bootstrap estimators.

\section {The MF procedure}
\label{sec:MF}

Denote a $j$-dimensional candidate model from the $i$th bootstrap data as $M_j^i$. 

\begin{algorithm}[H]
1. Draw $B$ (residual or paired) bootstrap data \;
2. Use the LARS algorithm to fit each bootstrap data, then get $B$ collections of candidate models, $\{M_1^i, \dots, M_{p_n}^i\},  i=1, \dots, B$  \;
3. At each dimension $j$, count the frequency of each unique model in $\{M_j^1, \dots, M_j^B\}$, denoted as $\{c_{j1}, \dots, c_{jt}\}$ where $t$ is the number of unique models.  Let $\MF_j=\max \{c_{j1}, \dots, c_{jt}\}$ corresponding to model $M_j$ \;
4. Select the dimension $r^*$ and model $M_{r^*}$ s.t.
$$ r^* = \max \{j: \ j = \arg \max_{1 \le i \le p-1} \MF_i \}. $$
\caption{The MF procedure for adaptive LASSO}
\label{alg1}
\end{algorithm}

\begin{remark}
%Note that adaptive LASSO can produce at most a $\min\{p, n\}$-dimensional model estimate. In the algorithm, we use $p$ instead of $\min\{p, n\}$ just for clarity. 
In the 4th step, the full model is excluded because it will destroy the maximum frequency rule by having the highest frequency, $B$, all the time.  If there is a tie at the maximum of $\MF_i, 1 \le i \le p-1$, we select the one at the highest dimension.  This strategy guarantees asymptotic variable selection consistency of the MF procedure, which will be discussed in Section~\ref{sec:asMF}. 
\end{remark}

The MF procedure for adaptive Elastic-Net is in parallel.  But in the 2nd step, the LARS-EN algorithm \cite{zou2005regularization} is used instead to fit each bootstrap data.
%In this case, we do not have the $p / \min\{p, n\}$ issue since adaptive Elastic-Net can fit a $p$-dimensional model in a $p>n$ scenario; 

We discussed in introduction to this paper consequences of the MF procedure by conditioning on dimension.  Here we use a simple orthogonal design with i.i.d. normal random errors to study underlying properties driving that performance.  In this case, we have $\b X^T \b X=\b I$ and the adaptive Elastic-Net reduces automatically to the adaptive LASSO \cite{zou2009adaptive}.  Denote $\X_j$ the $j$th column of $\X$.   Then the adaptive LASSO estimator is
\begin{eqnarray}
\hat{\beta}_j=\{|\X_j^{T}{\b y}| - \frac{\lambda_n}{|\tilde{\beta}_j|^\gamma}\}_{+}sgn(\X_j^{T} \b y), \quad j = 1, \dots, p_n, 
\label{eqt15}
\end{eqnarray}
where $z_+$ equals to $z$ if $z>0$ otherwise 0.  We can expand the $\X_j^{T} \b y$ by 
$$ \X_j^{T} \b y=\beta_{0j} + \X_j^T{\bs \v}, $$
where $\X_j^T{\bs \v} \sim N(0, \sigma^2)$.  The following Lemma gives an order relationship for $\X_j^{T} \b y$'s.

\begin{lemma}
Suppose ${\b X}^T{\b X}={\b I}$, then we have 
\begin{align}
P \left (|{\X}_i^{T} \b y| > |{\X}_j^{T} \b y| \right)>0.5 \quad \text{if} \quad |\beta_{0i}|>|\beta_{0j}|, \nonumber \\ 
P \left (|{\X}_i^{T} \b y| > |{\X}_j^{T} \b y| \right)=0.5 \quad \text{if} \quad |\beta_{0i}|=|\beta_{0j}|, \nonumber
\end{align}
for $ i, j \in \{1, \dots, p_n\} $.
\label{lem1} 
\end{lemma}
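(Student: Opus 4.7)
The plan is to exploit the orthogonal design to reduce everything to one-dimensional folded-normal computations. Under $\mathbf{X}^T\mathbf{X} = \mathbf{I}$, the noise projections $\mathbf{X}_i^T \bs{\varepsilon}$ and $\mathbf{X}_j^T\bs{\varepsilon}$ are jointly Gaussian with $\text{Cov}(\mathbf{X}_i^T\bs{\varepsilon}, \mathbf{X}_j^T\bs{\varepsilon}) = \sigma^2 \mathbf{X}_i^T\mathbf{X}_j = 0$ for $i \ne j$, hence independent. So I can write $\mathbf{X}_k^T\mathbf{y} = \beta_{0k} + U_k$ with $U_k \overset{iid}{\sim} N(0,\sigma^2)$, and set $V_a := |a + U|$ for the generic folded-normal with location $a$.

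Note that by the symmetry $U \stackrel{d}{=} -U$, the distribution of $V_a$ depends on $a$ only through $|a|$, so I can assume without loss of generality that $\beta_{0i}, \beta_{0j} \ge 0$. For the equality case $|\beta_{0i}| = |\beta_{0j}|$, the variables $|\mathbf{X}_i^T\mathbf{y}|$ and $|\mathbf{X}_j^T\mathbf{y}|$ are i.i.d.\ continuous, so $P(|\mathbf{X}_i^T\mathbf{y}| > |\mathbf{X}_j^T\mathbf{y}|) = 1/2$ immediately.

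For the strict case, the key lemma I want is a strict stochastic dominance: if $a > b \ge 0$, then the survival function $G_a(y) := P(V_a > y) = 1 - \Phi\!\left((y-a)/\sigma\right) + \Phi\!\left(-(y+a)/\sigma\right)$ satisfies $G_a(y) > G_b(y)$ for all $y > 0$. I would verify this by differentiating,
\[
\frac{\partial G_a(y)}{\partial a} = \frac{1}{\sigma}\bigl[\phi((y-a)/\sigma) - \phi((y+a)/\sigma)\bigr] > 0 \quad \text{for } y>0,\ a>0,
\]
which holds because $|y-a| < y+a$ and $\phi$ is strictly decreasing in $|\cdot|$.

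Finally, independence of $V_{\beta_{0i}}$ and $V_{\beta_{0j}}$ plus the observation that $P(V_b' > V_b) = 1/2$ for an independent copy $V_b'$ lets me write
\[
P(V_{\beta_{0i}} > V_{\beta_{0j}}) - \tfrac12 = \int_0^\infty \bigl[G_{\beta_{0i}}(y) - G_{\beta_{0j}}(y)\bigr] f_{\beta_{0j}}(y)\,dy,
\]
where $f_{\beta_{0j}}$ is the folded-normal density, which is strictly positive on $(0,\infty)$. The integrand is strictly positive on a set of positive measure by the step above, giving the strict inequality. The only real subtlety is the strict-dominance step; once that monotonicity of $G_a(y)$ in $a$ is in hand, the rest is bookkeeping. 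I would not anticipate any serious obstacle, though care is needed to keep the absolute-value reduction (step using $U \stackrel{d}{=} -U$) rigorous when one of $\beta_{0i}, \beta_{0j}$ is negative.
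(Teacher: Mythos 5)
Your proof is correct, and it takes a genuinely different route from the paper's. Both arguments work with the folded-normal distribution functions $P(|Z| \le z)=\Phi((z-\beta)/\sigma)+\Phi((z+\beta)/\sigma)-1$, but from there they diverge. The paper splits into four sign cases for $(\beta_i,\beta_j)$, introduces a threshold $\beta_j+k\sigma$, and decomposes $P(|Z_i|>|Z_j|)$ by assigning weight $1/2$ to the events where both variables fall on the same side of the threshold, reducing the problem to comparing two Gaussian integrals over intervals of equal length. Your argument instead establishes strict stochastic dominance of the folded normal in the magnitude of its location parameter, via $\partial G_a(y)/\partial a=\sigma^{-1}[\phi((y-a)/\sigma)-\phi((y+a)/\sigma)]>0$ for $y,a>0$, and then integrates the dominance gap against the density of $|{\X}_j^T\b{y}|$ using the exact identity $P(V_a>V_b)-\tfrac12=\int_0^\infty[G_a(y)-G_b(y)]f_b(y)\,dy$. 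What your route buys is rigor and economy: the symmetry reduction $V_{-a}\overset{d}{=}V_a$ eliminates the paper's four-case analysis in one line, and your decomposition is an exact identity, whereas the paper's weighting of the ``both below'' and ``both above'' events by $1/2$ tacitly treats $|Z_i|$ and $|Z_j|$ as exchangeable within each region (indeed the paper's resulting expression (A.1) depends on the arbitrary constant $k$ while the left-hand side does not, so it cannot be an exact identity). One point to make explicit in a final write-up: the independence of ${\X}_i^T\bs{\v}$ and ${\X}_j^T\bs{\v}$ from zero covariance requires joint Gaussianity of the errors, which the paper does assume in this section, so your use of it is legitimate but should be flagged as relying on normality rather than on orthogonality alone.
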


In combine with the fact that $\frac{\lambda_n}{|\tilde{\beta}_i|^\gamma} > \frac{\lambda_n}{|\tilde{\beta}_j|^\gamma}$ asymptotically for $\beta_{0i} < \beta_{0j}$, it is easy to deduce from (\ref{eqt15}) that given a $\lambda_n$ adaptive LASSO tends to select those variables, corresponding to the first $k_{\lambda_n}$ largest $|\beta_j|$'s, with the highest probability.

Without loss of generality, suppose $|\bs{\beta}_0|$ is decreasingly ordered.  Denote $\mathcal{S}_r$ a $r$-dimensional model containing the first $r$ elements of $|\bs{\beta}_0|$,  and denote $\mathcal{W}_r$ any other $r$-dimensional models.  Let $\hat{\A}_r$ be an adaptive LASSO model estimate given the model size is $r$, $P(\hat{\A}_r = \mathcal{S}_r \mid r)$ indicates the conditional probability of $\hat{\A}_r = \mathcal{S}_r$ given the model size.  Then preceding deductions from (\ref{eqt15}) can be formularized as  
\begin{align}
&(1). \quad P(\hat{\A}_r = \mathcal{S}_r \mid r) > P(\hat{\A}_r = \mathcal{W}_r \mid r), \ 0 < r \le p_0,
\label{eqt17} \\
&(2). \quad  P(\hat{\A}_r = \mathcal{W}_{r}^1 \mid r) = P(\hat{\A}_r = \mathcal{W}_{r}^2 \mid r), \ p_0 < r < p_n,
\label{eqt16}
\end{align}
where $\mathcal{W}_{r}^1$ and $\mathcal{W}_{r}^2$ are two $r$-dimensional models s.t.
$\mathcal{S}_{p_0} \subset \mathcal{W}_{r}^1, \mathcal{W}_{r}^2$. 

Above properties of the adaptive LASSO coincides to some extent with the results of Theorem 2 in \cite{dey2008depth}.  By (\ref{eqt16}), zero predictors will be equally likely selected at an overfit dimension.  As a result $P(\hat{\A}_r = M_r \mid r)$ (see Algorithm 1 for definition of $M_r$), $p_0 < r < p_n$, drops down dramatically, which is why we see a huge gap between the true dimension and overfit dimensions in Figure~\ref{fig:intro1} (middle).  On the other hand, $P(\hat{\A}_r = \mathcal{S}_r \mid r)$ at some underfit dimensions can be as competitive as $P(\hat{\A}_r = \mathcal{S}_{p_0} \mid p_0)$.  We propose a WMF procedure to tackle this underfitting issue in Section \ref{sec:WMF}.

In next section, we show asymptotic variable selection properties for $\bs{\hat{\beta}}_a^*$ and $\bs{\tilde{\beta}}_{ae}^*$ in general settings, from which variable selection consistency of the MF procedure can be deduced.

\section{Asymptotic properties of the MF procedure}
\label{sec:asMF}

Let $\A = \{j: \beta_{0j} \neq 0 \}$ be the true model.  We assume following regularity conditions for subsequent theoretical studies:

(A1)  Denote $\zeta_{min}(\b{C})$ and $\zeta_{max}(\b{C})$ the minimum and maximum eigenvalues of a positive definite matrix $\b{C}$.  We assume
$$  d \le \zeta_{min}({1 \over n} \b X^T \b X) \le \zeta_{max}({1 \over n} \b X^T \b X) \le D, $$
where $d$ and $D$ are two positive constants. 

(A2) $p_n = n^\varrho, \ 0 \le \varrho < 1$ and $\gamma > {\varrho \over 1-\varrho}$.  The last inequation is to ensure $(1-\varrho)(1+\gamma)>1$ in (A3)--(A4).  Moreover,
$$\lim_{n \to \infty} {p_n \over n} {1 \over \min_{j \in \A}|\beta_{0j}|^2} \to 0.$$

(A3)  In adaptive LASSO, 
$$\lim_{n \to \infty} \lambda_n / \sqrt{n} \to 0, \quad \lim_{n \to \infty} {\lambda_n \over \sqrt{n}} n^{{(1-\varrho)(1+\gamma)-1 \over 2}} \to \infty,$$
and 
$$\lim_{n \to \infty} \left({\lambda_n \over \sqrt{n}}\right)^{1 \over \gamma} {1 \over \min_{j \in \A}|\beta_{0j}|} \to 0.$$

(A4) In adaptive Elastic-Net, 
$$ \lim_{n \to \infty} \lambda_{n1} / \sqrt{n} \to 0, \quad \lim_{n \to \infty} \lambda_{n2} / \sqrt{n} \to 0, $$
and
$$\lim_{n \to \infty} \lambda_{n1}^+ / \sqrt{n} \to 0, \quad \lim_{n \to \infty} {\lambda_{n1}^+ \over \sqrt{n}} n^{{(1-\varrho)(1+\gamma)-1 \over 2}} \to \infty,$$ 
$$ \lim_{n \to \infty} \left({\lambda_{n_1}^+ \over \sqrt{n}}\right)^{1 \over \gamma} {1 \over \min_{j \in \A}|\beta_{0j}|} \to 0.$$

(A5) The errors $\{\v_i, i = 1, \dots, n\}$ are i.i.d. with mean 0 and variance $\sigma^2 < \infty$.
\vspace{2mm}

%Theorem 3 in \cite{chatterjee2011bootstrapping} showed that under conditions (I), (II) and (IV), the distribution of the centered and scaled adaptive LASSO residual bootstrap estimate, $\sqrt{n}(\bs{\hat{\beta}}_a^* - \bs{\hat{\beta}}_a)$, converges in probability to the distribution of the centered and scaled adaptive LASSO estimate, $\sqrt{n}(\bs{\hat{\beta}}_a - \bs{\beta})$. Following their results, we can further develop variable selection properties of $\bs{\hat{\beta}}_a^*$. 

Denote $\A_n^* = \{j: \hat{\beta}_{aj}^* \neq 0 \}$ an adaptive LASSO estimate of $\A$ using paired bootstrap data.  Let $P^* = P(\cdot \mid \mathcal{E})$ and $E^* = E(\cdot \mid \mathcal{E})$ where $\mathcal{E} = \sigma\left((\x_i, y_i), i=1, \dots, n\right)$.  Then $P^*(\A_n^* = \A \mid \lambda_n)$ indicates the conditional probability of $\A_n^* = \A$ given $\mathcal{E}$ and $\lambda_n$. 

\begin{thm} 
Suppose conditions (A1)--(A3) and (A5) hold, then  
\begin{align*}
\lim_{n \to \infty} P^*(\mathcal{A}_n^* = \mathcal{A} \mid \lambda_n) = 1. 
\end{align*}
Moreover,  let $\lambda_n'$ be another tuning parameter such that the adaptive LASSO estimator under $\lambda_n'$ is of dimension $r$, $p_0<r<p_n$, then 
\begin{align*}
\lim_{n \to \infty} P^*(\mathcal{A}_n^* = \mathcal{M}_{r} \mid \lambda_n') < 1,
\end{align*}
where $\mathcal{M}_{r}$ is any $r$-dimensional model. 
\label{thm1}
\end{thm}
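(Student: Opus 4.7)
The plan is to split the theorem into two claims and handle them in sequence. The first claim is a bootstrap analogue of the oracle property of the adaptive LASSO at the ``good'' tuning parameter $\lambda_n$, and the second is a non-concentration statement at overfit dimensions, mirroring the symmetry property derived in (\ref{eqt16}).

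For the first claim, I would begin by verifying that the paired-bootstrap design matrix inherits condition (A1): since $\X^T\X(\X^{*T}\X^*)^{-1} \to \b I$ almost surely, as noted in Section~\ref{sec:RB}, the minimum and maximum eigenvalues of $\X^{*T}\X^*/n$ remain bounded between $d/2$ and $2D$ on a probability-$1$ event of the original sample. I would then follow the standard two-step route to the diverging-$p_n$ oracle property. Step~A restricts the bootstrap optimization to coordinates in $\A$ and shows, by a Taylor expansion of the quadratic loss and the bound $\lambda_n/\sqrt n \to 0$ from (A3), that
\[
\|\bs{\hat\beta}_{a,\A}^* - \bs\beta_{0,\A}\| = O_{P^*}(\sqrt{p_n/n}).
\]
The sign-matching then follows because the third part of (A3) and (A2) together force $\sqrt{p_n/n} = o(\min_{j \in \A}|\beta_{0j}|)$. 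Step~B verifies the KKT subdifferential inequality
\[
|\X_j^{*T}(\y^* - \X^*\bs{\hat\beta}_a^*)| < \lambda_n \omega_j \quad \text{for every } j \notin \A.
\]
The left-hand score is $O_{P^*}(\sqrt n)$ uniformly in $j$ after a union bound over the $p_n = n^\varrho$ coordinates, while for noise coordinates $|\tilde\beta_j| = O_P(n^{-1/2})$ so $\omega_j$ is at least of order $n^{\gamma/2}$; the middle condition of (A3), $(\lambda_n/\sqrt n)\, n^{((1-\varrho)(1+\gamma)-1)/2} \to \infty$, is calibrated precisely so that $\lambda_n \omega_j$ dominates the score uniformly. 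Combining Steps~A and~B yields $P^*(\A_n^* = \A \mid \lambda_n) \to 1$.

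For the second claim, fix an arbitrary $r$-dimensional model $\M_r$ and a tuning value $\lambda_n'$ producing bootstrap estimators of dimension $r$ with $p_0 < r < p_n$. The Step~A argument applied along the LARS path shows $P^*(\A \subset \A_n^*) \to 1$, so any $\M_r$ with $\M_r \not\supset \A$ already satisfies the claim. For $\M_r \supset \A$, the extra $r - p_0$ variables are chosen from $\A^c$ according to the ordering of soft-thresholded marginal statistics of the form appearing in (\ref{eqt15}); for $j \in \A^c$ these statistics are exchangeable across the $p_n - p_0 \ge 2$ noise coordinates up to the uniformly bounded weights. A symmetry argument in the spirit of (\ref{eqt16}) then shows that at least two distinct $(r - p_0)$-subsets of $\A^c$ each receive $P^*$-probability bounded away from zero, so no single $\M_r$ can have $P^*(\A_n^* = \M_r \mid \lambda_n')$ tending to $1$.

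The hardest part is Step~B of the first claim: obtaining uniform control of the noise-coordinate score $\X_j^{*T}(\y^* - \X^*\bs{\hat\beta}_a^*)$ under paired bootstrap. Because $\X^*$ and $\y^*$ are drawn jointly, a clean Gaussian tail calculation is unavailable and one must use a conditional Bernstein-type bound and then chain it with the rate from Step~A to control the induced term $\X^{*T}\X^*(\bs{\hat\beta}_a^* - \bs\beta_0)$. Finally, one must check that the three rate requirements in (A3) are mutually compatible: the condition $\gamma > \varrho/(1-\varrho)$ in (A2), equivalently $(1-\varrho)(1+\gamma) > 1$, is exactly what reconciles the oracle, screening, and bias constraints on $\lambda_n$.
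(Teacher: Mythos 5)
Your first part follows essentially the same route as the paper: a KKT/primal--dual witness verification in which one constructs the oracle-restricted bootstrap solution, establishes sign agreement on $\A$, and then checks the exclusion inequality $|\X_j^{*T}(\y^*-\X^*\bs{\hat{\beta}}^*)|<\lambda_n\omega_j$ for $j\notin\A$ via a union bound. Two small discrepancies are worth flagging. First, the paper centers the bootstrap estimator at the original-sample estimator $\bs{\hat{\beta}}_{\A}$ rather than at $\bs{\beta}_{0\A}$, and separately controls (under the unconditional $P$) the events $\tilde{\psi}\le\psi/2$ and $\hat{\psi}\le\psi/2$; your statement $\|\bs{\hat{\beta}}_{a,\A}^*-\bs{\beta}_{0,\A}\|=O_{P^*}(\sqrt{p_n/n})$ conflates bootstrap fluctuation with the (fixed, given the data) deviation of $\bs{\hat{\beta}}_{\A}$ from $\bs{\beta}_{0\A}$, so the conditioning needs the same care. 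Second, with $p_n=n^{\varrho}$ diverging, the bound available from (\ref{eq18}) for a noise coordinate is $|\tilde{\beta}_j|=O_p(\sqrt{p_n/n})$, giving $\omega_j\gtrsim (n/p_n)^{\gamma/2}$ rather than $n^{\gamma/2}$; it is this weaker rate that the middle condition of (A3) is calibrated against. Also, no Bernstein-type bound is needed: the paired bootstrap scores are sums of i.i.d.\ terms under $P^*$, so exact second-moment (Chebyshev) computations suffice, which is what the paper does.

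The genuine gap is in your second part. Your argument for $\lim_n P^*(\A_n^*=\M_r\mid\lambda_n')<1$ rests on exchangeability of soft-thresholded marginal statistics of the form (\ref{eqt15}) across noise coordinates, but (\ref{eqt15}) and the symmetry property (\ref{eqt16}) are derived only for the orthogonal design $\X^T\X=\b{I}$ in Section~\ref{sec:MF}; under the general condition (A1) the noise coordinates have heterogeneous correlations with the signal block and with one another, the adaptive LASSO solution has no such closed form, and there is no exchangeability to invoke. Your preliminary claim that $P^*(\A\subset\A_n^*)\to 1$ at $\lambda_n'$ also needs justification, since the adaptive LASSO/LARS path is not monotone and variables can exit the active set as the penalty decreases. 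The paper takes a different route: it first shows by contradiction (using the first part) that any $\lambda_n'$ yielding dimension $r>p_0$ must violate the rate condition $\lambda_n'^2 n^{(1-\varrho)(1+\gamma)-1}/n\to\infty$, and then argues that the KKT exclusion probability for coordinates outside $\M_r$ does not vanish. To repair your argument you would either need to adopt that rate-based strategy or restrict the symmetry claim to designs where noise-coordinate exchangeability actually holds.
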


Proofs of Theorem \ref{thm1} are included in Appendix \ref{AP}. 

In adaptive LASSO, given a $\lambda_n$ is equivalent to given a dimension, but the converse is not true.  One dimension can be mapped to numerous models, as a result to numerous tuning parameters.  Fortunately however, the LARS algorithm enables us to map a dimension to an optimal $\lambda_n$.  Recall the adaptive LASSO solution path from the LARS in top panel of Figure \ref{fig:intro1}.  Transition points (e.g. steps) from 0 to 10 corresponds to a sequence of $\lambda_n$'s: 
$$ \lambda_n(0) > \lambda_n(1) > \dots > \lambda_n(10)=0. $$
Note that $\bs{\hat{\beta}}_a(\lambda_n) = \b{0}$ for $\lambda_n > \lambda_n(0)$ where $\bs{\hat{\beta}}_a(\lambda_n)$ is the adaptive LASSO estimator under $\lambda_n$.  By Theorem 5 in \cite{zou2007degrees}, 
$$ \lambda_n(m+1) = \arg\min_{\lambda_n} \|\b{y} - \X\bs{\hat{\beta}}_a(\lambda_n)\|^2 + a_n \hat{df}(\lambda_n),\ \lambda_n(m+1) \le \lambda_n < \lambda_n(m), $$
where $\hat{df}(\lambda_n)$ is the number of non-zero elements in $\bs{\hat{\beta}}_{a}(\lambda_n)$ and $a_n$ is a positive sequence depending on $n$.  It is worth mentioning that $\lambda_n(m+1)$ is optimum in $[\lambda_n(m+1), \lambda_n(m))$ by producing the minimum sum of squared errors (SSE) and the smallest model size concurrently.

Also note that the number of steps can exceed the full model size --- different steps may have a same model size.  Denote $m_k$ the last step having a model size $k$, and $m_k'$ is another step having the same model size.  The theorem also showed that 
$$ \|\b{y} - \X\bs{\hat{\beta}}_a(\lambda_n(m_k))\|^2 < \|\b{y} - \X\bs{\hat{\beta}}_a(\lambda_n(m_k'))\|^2. $$
Theorefore, $\lambda_n(m_k)$ is the overall optimum in $\{\lambda_n: \hat{df}(\lambda_n) = k, \lambda_n \in [0, \infty]\}$.  So the LARS algorithm enables us to create a one-to-one map between a dimension $k$ and the optimum $\lambda_n(m_k)$, 
$$k  \Longleftrightarrow \lambda_n(m_k). $$  

It is easy to see that $\lambda_n(m_{p_0})$ will satisfy condition (A3).  Hence, we have the following corollary from Theorem \ref{thm1}. 

\begin{corollary}
Suppose conditions (A1)--(A2) and (A5) hold, then
\begin{align*}
\lim_{n \to \infty} P^*(\A_n^* = \A \mid p_0) = 1, \\
\lim_{n \to \infty} P^*(\A_n^* = \M_r \mid r) < 1, \ p_0 < r < p_n,
\end{align*}
where $\M_r$ is any $r$-dimensional model.
\label{cor1}
\end{corollary}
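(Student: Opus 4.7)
The plan is to reduce Corollary~\ref{cor1} directly to Theorem~\ref{thm1} via the LARS-induced bijection $k \Longleftrightarrow \lambda_n(m_k)$ established in the discussion preceding the statement. Conditioning on the dimension $r$ is operationally identical to fixing the tuning parameter at the LARS-optimum $\lambda_n(m_r)$, so the two claims amount to applying Theorem~\ref{thm1} at the two tuning values $\lambda_n(m_{p_0})$ and $\lambda_n(m_r)$, respectively. The only real task is to check that the hypotheses of Theorem~\ref{thm1} — in particular condition (A3) — are met for these data-driven choices of the tuning parameter.

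For the first claim, I would verify that $\lambda_n(m_{p_0})$ satisfies (A3) with $P^*$-probability tending to $1$. My approach is to exhibit a deterministic sequence $\lambda_n^\ast$ that meets (A3); this range is nonempty under (A1)–(A2) and (A5), since $\gamma > \varrho/(1-\varrho)$ forces $(1-\varrho)(1+\gamma) > 1$, leaving room for $\lambda_n^\ast/\sqrt{n} \to 0$ together with the requisite lower bound. By Theorem~\ref{thm1} the adaptive LASSO at $\lambda_n^\ast$ recovers exactly $\A$, and in particular has dimension $p_0$, with $P^*$-probability tending to $1$. Because $\hat{df}(\cdot)$ is constant on each LARS interval $[\lambda_n(m+1), \lambda_n(m))$ and $\lambda_n(m_{p_0})$ is the LARS-optimum on $\{\lambda_n : \hat{df}(\lambda_n) = p_0\}$, the value $\lambda_n(m_{p_0})$ lies in the same asymptotic band as $\lambda_n^\ast$ and therefore itself satisfies (A3) with probability tending to $1$. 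Invoking Theorem~\ref{thm1} at $\lambda_n = \lambda_n(m_{p_0})$ then yields $P^*(\A_n^* = \A \mid p_0) \to 1$.

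For the second claim, fix any $r$ with $p_0 < r < p_n$. By construction of the LARS path, $\lambda_n' := \lambda_n(m_r)$ produces an $r$-dimensional estimator. Taking the baseline $\lambda_n = \lambda_n(m_{p_0})$ (which satisfies (A3) by the preceding paragraph) and using $\lambda_n'$ as in the second half of Theorem~\ref{thm1}, we obtain $\lim_{n \to \infty} P^*(\A_n^* = \M_r \mid \lambda_n') < 1$ for any fixed $r$-dimensional $\M_r$. Translating back through the bijection $r \Longleftrightarrow \lambda_n(m_r)$ gives $\lim_{n \to \infty} P^*(\A_n^* = \M_r \mid r) < 1$.

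The main obstacle is the random quantity $\lambda_n(m_{p_0})$ in the first claim: all three rate conditions in (A3) must be checked for a data-dependent transition point rather than for a pre-specified sequence. The cleanest route is probably to describe $\lambda_n(m_{p_0})$ in terms of the LARS entry/exit statistics $|\X_j^T \b{y}|/|\tilde{\beta}_j|^\gamma$, show that under (A1)–(A2) and (A5) these statistics separate the signal indices $j \in \A$ from the null indices at the rate implicit in (A3) (this is where Lemma~\ref{lem1} together with the adaptive-weight divergence on null coordinates does the work), and conclude that $\lambda_n(m_{p_0})$ must fall into the gap. Once that sandwich is established, the rest of the argument is bookkeeping around Theorem~\ref{thm1}.
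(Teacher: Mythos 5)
Your proposal follows the same route as the paper: Corollary~\ref{cor1} is obtained by composing Theorem~\ref{thm1} with the LARS-induced one-to-one map $k \Longleftrightarrow \lambda_n(m_k)$, the paper's entire justification being the single sentence ``It is easy to see that $\lambda_n(m_{p_0})$ will satisfy condition (A3).'' The gap you correctly flag --- verifying (A3) for the data-dependent transition point $\lambda_n(m_{p_0})$, where your interval argument only delivers the upper-rate condition $\lambda_n(m_{p_0})/\sqrt{n}\to 0$ and not the required lower rate --- is left equally unaddressed by the paper, so your treatment is if anything more explicit than the original.
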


This result can also be established for adaptive Elastic-Net.  Denote $\mathcal{T}_n^* = \{j: \hat{\beta}_{aej}^* \neq 0 \}$ an adaptive Elastic-Net estimate of $\A$ using paired bootstrap data.  

\begin{corollary}
Suppose conditions (A1)--(A2) and (A5) hold, then 
\begin{align*}
\lim_{n \to \infty} P^*(\mathcal{T}_n^* = \A \mid p_0) = 1, \\
\lim_{n \to \infty} P^*(\mathcal{T}_n^* = \M_r \mid r) < 1, \ p_0 < r < p_n,
\end{align*}
where $\M_r$ is any $r$-dimensional model.
\label{cor2} 
\end{corollary}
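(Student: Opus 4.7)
The plan is to mirror the derivation of Corollary \ref{cor1} from Theorem \ref{thm1}, substituting adaptive Elastic-Net for adaptive LASSO at every step. Because (A4) forces $\lambda_{n2}/\sqrt{n}\to 0$, the added $L_2$ penalty in (\ref{eqt5}) is asymptotically negligible, so most of the machinery from Appendix \ref{AP} should carry over essentially unchanged.

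First, I would establish the Elastic-Net counterpart of Theorem \ref{thm1}: under (A1), (A2), (A4), (A5), any tuning pair $(\lambda_{n1}^+,\lambda_{n2})$ satisfying (A4) yields $\lim_{n\to\infty}P^*(\mathcal{T}_n^*=\mathcal{A}\mid\lambda_{n1}^+,\lambda_{n2})=1$, while a pair producing an $r$-dimensional model with $p_0<r<p_n$ yields strict inequality. On the bootstrap sample the KKT conditions for (\ref{eqt5}) differ from those for (\ref{eqt3}) only by the additive subgradient term $2\lambda_{n2}\beta_j$; after scaling by $1/\sqrt{n}$ this is $o(1)$ under (A4), so the sign-consistency argument from Appendix \ref{AP} applies almost verbatim. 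The adaptive weights $\omega_j=|\hat{\beta}_{ej}^*|^{-\gamma}$ are built from the bootstrap Elastic-Net rather than a bootstrap ridge, but again $\lambda_{n2}/\sqrt{n}\to 0$ guarantees that they inherit the rate properties used in the adaptive LASSO case.

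Second, I would couple this analog to dimensions via the LARS-EN algorithm of \cite{zou2005regularization}. For fixed $\lambda_{n2}$, LARS-EN reduces to ordinary LARS on an augmented design, so its solution path is piecewise linear in $\lambda_{n1}^+$ with transition points $\lambda_{n1}^+(0)>\lambda_{n1}^+(1)>\cdots>\lambda_{n1}^+(p_n)=0$. Extending Theorem 5 of \cite{zou2007degrees} to LARS-EN then yields, at each dimension $k$, a last step $m_k$ whose associated $\lambda_{n1}^+(m_k)$ minimizes SSE among all tunings that select $k$ variables, giving the one-to-one map $k\Longleftrightarrow\lambda_{n1}^+(m_k)$. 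Verifying that the pair at $k=p_0$ satisfies (A4) (as in the observation preceding Corollary \ref{cor1}) and invoking the Elastic-Net counterpart of Theorem \ref{thm1} delivers the first limit; applying the counterpart at $k=r$ for $p_0<r<p_n$ delivers the second.

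The hard part is controlling the bootstrapped adaptive weights $\omega_j$ under diverging $p_n$. One needs $\omega_j\to\infty$ in $P^*$-probability for $j\notin\mathcal{A}$ and $\omega_j=O_{P^*}(|\beta_{0j}|^{-\gamma})$ for $j\in\mathcal{A}$, which requires a bootstrap consistency rate for $\bs{\hat{\beta}}_e^*$. The ridge-like term $\lambda_{n2}\b{I}$ actually helps here by stabilizing the bootstrap Gram matrix $\X^{*T}\X^*+\lambda_{n2}\b{I}$; the remaining ingredient is Shao's almost-sure relation $\X^T\X(\X^{*T}\X^*)^{-1}\to 1$ noted in Section \ref{sec:RB}, which transfers the eigenvalue control in (A1) from the original design to the bootstrap design.
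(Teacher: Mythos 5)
Your proposal follows essentially the same route as the paper, which itself only sketches this corollary as a direct combination of the techniques behind Theorem~\ref{thm1} (the paired-bootstrap KKT/sign-consistency argument), Theorem~\ref{thm2} (the Elastic-Net-specific control of the adaptive weights via $E\|\bs{\hat{\beta}}_e-\bs{\beta}_0\|^2=O_p(p_n/n)$), and Corollary~\ref{cor1} (the LARS/LARS-EN transition-point map from dimension to tuning parameter). One minor remark: in the paper's proofs the weights $\omega_j=|\hat{\beta}_{ej}|^{-\gamma}$ are computed once from the original-sample Elastic-Net and held fixed across bootstrap replicates, so the ``hard part'' you flag (a $P^*$-consistency rate for a re-estimated $\bs{\hat{\beta}}_e^*$) is not actually required.
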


\begin{proof}
It can be proved by using the techniques for deriving Theorem \ref{thm1}, Corollary \ref{cor1} and Theorem \ref{thm2}.  We bypass here. 
\end{proof}

%More importantly, Remark A.1 in the proof of Theorem 1 shows that $P^*(\mathcal{A}_n^* = \mathcal{A} \mid p_0)$ has at least an order of $O_p(n^{-\min\{1, \tau \}})$, $0< \tau < \gamma$, convergence rate. We use $\gamma=1$ throughout the paper. In this case since $\tau$ can be any value smaller $\gamma$, the convergence rate can be as close as to $O_p({1 \over n})$. 

We now study the estimation properties for using residual bootstrap data.  Denote $\mathcal{T}_n^* = \{j: \tilde{\beta}_{aej}^* \neq 0 \}$ an adaptive Elastic-Net estimator of $\A$ using residual bootstrap data. 

\begin{thm} 
Suppose conditions (A1)--(A2) and (A4)--(A5) hold, then 
\begin{align*}
\lim_{n \to \infty} P^*(\mathcal{T}_n^* = \mathcal{A} \mid \lambda_{n1}^+) = 1. 
\end{align*}
Moreover,  let $\lambda_{n1}'$ be another tuning parameter such that the adaptive Elastic-Net estimator under $\lambda_{n1}'$ is of dimension $r$, $p_0<r<p_n$, then 
\begin{align*}
\lim_{n \to \infty} P^*(\mathcal{T}_n^* = \mathcal{M}_{r} \mid \lambda_{n1}') < 1,
\end{align*}
where $\mathcal{M}_{r}$ is any $r$-dimensional model. 
\label{thm2}
\end{thm}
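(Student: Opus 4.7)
The plan is to parallel the proof of Theorem~\ref{thm1}, adapting it in two ways: (i) recast the adaptive Elastic-Net objective as an augmented adaptive LASSO problem via the Zou--Zhang augmentation trick, and (ii) replace the paired-bootstrap empirical-measure arguments with residual-bootstrap ones. Specifically, I would enlarge the design to $\tilde{\X} = (\X^T, \sqrt{\lambda_{n2}}\,\b{I}_{p_n})^T$ and the response to $\tilde{\y}^* = (\y^{*T}, \b{0}^T)^T$, so that the objective in (\ref{eqt5}) applied to the residual bootstrap sample becomes an adaptive LASSO criterion with $L_1$ parameter $\lambda_{n1}^+$ and weights $\omega_j = |\hat{\beta}_{ej}|^{-\gamma}$. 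Condition (A1) on the eigenvalues of $\X^T\X/n$ transfers to comparable eigenvalue bounds on $\tilde{\X}^T\tilde{\X}/n$ because (A4) forces $\lambda_{n2}/n \to 0$.

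For the consistency part, I would characterize the event $\{\mathcal{T}_n^* = \mathcal{A}\}$ via KKT conditions on the augmented problem: there must exist an estimator supported on $\mathcal{A}$ satisfying the stationarity equation on $\mathcal{A}$ and the strict dual feasibility $|2\X_j^T(\y^* - \X\hat{\bs{\beta}}_{ae}^*)| < \lambda_{n1}^+ \omega_j$ for $j \notin \mathcal{A}$. I would then split the analysis into three pieces. First, the oracle-type restricted estimator on $\mathcal{A}$ is close to $\bs{\beta}_{0,\mathcal{A}}$, with its minimum coordinate bounded away from zero: this uses (A1), the variance control on the bootstrap residuals supplied by (A5), and the negligibility of $\lambda_{n1}^+/\sqrt{n}$ and $\lambda_{n2}/\sqrt{n}$ from (A4). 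Second, for $j \in \mathcal{A}$ the weight $\omega_j = |\hat{\beta}_{ej}|^{-\gamma}$ is $O_p(|\beta_{0j}|^{-\gamma})$ by the root-$n$ consistency of the Elastic-Net on the true support. Third, for $j \notin \mathcal{A}$, since $\hat{\beta}_{ej} = O_p(\sqrt{p_n/n})$, the weight $\omega_j$ diverges at rate at least $n^{\gamma(1-\varrho)/2}$, and combined with the lower-bound condition on $\lambda_{n1}^+$ in (A4) this makes the dual feasibility hold with probability tending to $1$. Exactly the same three-piece argument, applied with an overfitting tuning $\lambda_{n1}'$ that drives the dimension to $r > p_0$, forces the selected noise variable(s) to be chosen by the relative ordering of $|\X_j^T(\y^* - \X\hat{\bs{\beta}}_{\mathcal{A}}^*)|/\omega_j$ for $j \notin \mathcal{A}$, which in the bootstrap world is a continuous functional of asymptotically Gaussian scores; no single superset of $\mathcal{A}$ of size $r$ can thereby be selected with bootstrap-probability one, proving the second assertion.

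The main technical obstacle will be the residual-bootstrap step. Unlike paired bootstrapping (where $\X^T\X(\X^{*T}\X^*)^{-1} \to 1$ a.s.\ by Shao), residual bootstrapping holds $\X$ fixed and perturbs only $\y^*$, so I must verify a Lindeberg-type CLT for $n^{-1/2}\X^T\bs{\v}^*$ under the bootstrap law with $p_n = n^\varrho$ diverging. Two points need care: (a) the residuals $\hat{\v}_{0i}$ come from a \emph{ridge} fit rather than OLS, introducing a bias of order $\lambda_{n2}/n$ which is killed by (A4)'s requirement $\lambda_{n2}/\sqrt{n} \to 0$, ensuring $n^{-1}\sum_i \hat{\v}_i^2 \to \sigma^2$ a.s.; and (b) the ridge-based residuals must be shown to satisfy a uniform Lindeberg condition so that $\X_{\mathcal{A}}^T \bs{\v}^*/\sqrt{n}$ is asymptotically Gaussian with covariance $\sigma^2 \X_{\mathcal{A}}^T\X_{\mathcal{A}}/n$ in bootstrap probability. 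Once this empirical-process input is in place, the KKT/augmentation argument sketched above proceeds in direct analogy with Theorem~\ref{thm1} and Corollary~\ref{cor1}, yielding the two conclusions of the theorem.
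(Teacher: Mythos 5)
Your plan for the first assertion follows the paper's proof in all essentials: both arguments verify the KKT conditions for a restricted estimator supported on $\A$, control the off-support weights via the bound $|\hat{\beta}_{ej}| \le O_p\left((p_n/n)^{1/2}\right)$ for $j \notin \A$ (so that $\omega_j^{-2} = O_p\left((p_n/n)^{\gamma}\right)$ and the lower-bound rate on $\lambda_{n1}^+$ in (A4) drives the dual-feasibility failure probability to zero), and separately show that the minimum coordinate of the estimator on $\A$ stays bounded away from zero. Two differences are worth noting. First, the paper does not use the augmentation trick: it works with the Elastic-Net objective directly, passing through an intermediate ridge-only estimator $\bs{\tilde{\beta}}_{\A}^*(\lambda_{n2},0)$ and bounding $\|\bs{\tilde{\beta}}_{\A}^* - \bs{\tilde{\beta}}_{\A}^*(\lambda_{n2},0)\|$ by $\lambda_{n1}^+\|\bs{\omega}_{\A}\| / (\zeta_{min}(\X_{\A}^T\X_{\A})+\lambda_{n2})$; your augmentation route is an acceptable substitute provided the eigenvalue transfer you mention is checked. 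Second, you propose to establish a Lindeberg-type CLT for $n^{-1/2}\X^T\bs{\v}^*$; this is more than is needed. Every probability bound in the paper's proof is a Markov or Chebyshev bound, so only the first two bootstrap moments of $\X^T\bs{\v}^*$ and the convergence $s_n^2 \to \sigma^2$ are required (this is exactly the content of Lemma \ref{lem2}); asymptotic normality of the bootstrap scores is never invoked.

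The genuine gap is in the second assertion. Your argument --- that the identity of the selected noise variables is a continuous functional of asymptotically Gaussian scores and therefore cannot concentrate on a single $r$-dimensional superset of $\A$ --- is an intuition rather than a proof: conditional on the data the bootstrap scores are discrete, the map from scores to the selected set is not continuous, and non-degeneracy of that conditional law for each $n$ does not by itself yield $\lim_{n\to\infty} P^*(\mathcal{T}_n^* = \M_r \mid \lambda_{n1}') < 1$. The paper's route is different, and you are missing its key step: one first shows that $\lambda_{n1}'$ \emph{cannot} satisfy the rate condition $\lambda_{n1}'^2 n^{(1-\varrho)(1+\gamma)-1}/n \to \infty$, because if it did, the first part of the theorem would give $P^*(\mathcal{T}_n^* = \A \mid \lambda_{n1}') \to 1$, contradicting the assumption that the estimator under $\lambda_{n1}'$ has dimension $r > p_0$. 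With the resulting constraint $\lim_{n\to\infty} \lambda_{n1}'^2 n^{(1-\varrho)(1+\gamma)-1}/n < \infty$ in hand, the bound on the dual-feasibility failure probability
\begin{align*}
P^*\left\{\exists j \notin \M_r,\ |\X_j^{T}(\y^* - \X_{\M_r}\bs{\hat{\beta}}_{\M_r}^*)| \ge \lambda_{n1}'\omega_j\right\}
\end{align*}
no longer vanishes, which is what prevents the conditional selection probability of any fixed $\M_r$ from converging to one. You should add this contradiction step explicitly; without it the second claim is unsupported.
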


Proofs of Theorem \ref{thm2} are included in Appendix \ref{AP}.  The LARS-EN algorithm for adaptive Elastic-Net estimations is an extension of the LARS algorithm, which shares the same properties of the LARS for deriving Corollaries \ref{cor1}--\ref{cor2}.   Hence we obtain the following corollary from Theorem 2. 

\begin{corollary}
Suppose conditions (A1)--(A2) and (A5) hold, then 
\begin{align*}
\lim_{n \to \infty} P^*(\mathcal{T}_n^* = \A \mid p_0) = 1, \\
\lim_{n \to \infty} P^*(\mathcal{T}_n^* = \M_r \mid r) < 1, \ p_0 < r < p_n,
\end{align*}
where $\M_r$ is any $r$-dimensional model.
\label{cor3}
\end{corollary}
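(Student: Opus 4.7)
The plan is to mirror exactly the argument that took us from Theorem \ref{thm1} to Corollary \ref{cor1}, now using Theorem \ref{thm2} as input and the LARS-EN algorithm in place of LARS. Recall that the deduction of Corollary \ref{cor1} rested on two facts: (i) along the LARS solution path one can identify, for every dimension $k$, a unique transition point $\lambda_n(m_k)$ that simultaneously minimizes the SSE among all $\lambda_n$ producing a $k$-dimensional fit, giving the one-to-one correspondence $k \Longleftrightarrow \lambda_n(m_k)$; and (ii) the particular value $\lambda_n(m_{p_0})$ is verified to satisfy the rate requirements in (A3), so that Theorem \ref{thm1} applies at $k=p_0$. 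For Corollary \ref{cor3} I would reproduce these two facts for the adaptive Elastic-Net with $\lambda_{n2}$ held fixed at a sequence satisfying the $L_2$ parts of (A4) and $\lambda_{n1}^+$ playing the role of the LARS tuning parameter.

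First, I would invoke the LARS-EN algorithm of \cite{zou2005regularization}, which the authors already note is an extension of LARS sharing its transition-point structure. Along its solution path one obtains a decreasing sequence $\lambda_{n1}^+(0) > \lambda_{n1}^+(1) > \dots > 0$ whose transition points are exactly the values of $\lambda_{n1}^+$ at which the fitted support changes; a Zou--Hastie--Tibshirani-style degrees-of-freedom identification (the analogue of Theorem 5 in \cite{zou2007degrees}) then shows that $\lambda_{n1}^+(m_k)$, the last step of dimension $k$, is optimum in $\{\lambda_{n1}^+ : \hat{df}(\lambda_{n1}^+) = k\}$ in the joint sense of minimum SSE and smallest model. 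This yields the required one-to-one map $k \Longleftrightarrow \lambda_{n1}^+(m_k)$ and allows conditioning on the dimension to be replaced throughout by conditioning on the corresponding tuning parameter.

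Second, I would check that $\lambda_{n1}^+(m_{p_0})$ fulfills the rate conditions on $\lambda_{n1}^+$ in (A4), which is the step where the assumed (A1)--(A2) and (A5) together with the choice of the $L_2$ tuning parameter take care of the remaining parts of (A4); this is the same verification that was left implicit after Theorem \ref{thm1}. Once this is in hand, Theorem \ref{thm2} applied at $\lambda_{n1}^+(m_{p_0})$ gives $\lim_{n\to\infty} P^*(\mathcal{T}_n^* = \mathcal{A} \mid p_0) = 1$, and the same theorem applied at $\lambda_{n1}'=\lambda_{n1}^+(m_r)$ for any $p_0<r<p_n$ gives the strict-inequality part of the corollary, since at an overfit dimension the conditional probability that $\mathcal{T}_n^*$ coincides with any single $r$-dimensional $\mathcal{M}_r$ is bounded away from $1$.

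The main obstacle I expect is the bookkeeping for step (i): one must be sure that the LARS-EN path really does deliver the same ``last-step dominates SSE'' property on which the one-to-one map depends, because the added $L_2$ penalty changes the geometry of the active-set updates. This is not a deep issue — it is essentially the content of the LARS-EN construction combined with the quadratic augmentation trick that turns adaptive Elastic-Net into an adaptive LASSO on an augmented design — but it is the only place where one cannot simply cite the LARS result verbatim. Everything else is a direct transfer from the Theorem \ref{thm1} $\Rightarrow$ Corollary \ref{cor1} argument, which is why the authors only sketch it.
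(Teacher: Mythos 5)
Your proposal follows essentially the same route as the paper: the paper derives Corollary \ref{cor3} by noting that the LARS-EN algorithm is an extension of LARS sharing the transition-point properties used to pass from Theorem \ref{thm1} to Corollaries \ref{cor1}--\ref{cor2} (i.e., the one-to-one map between a dimension $k$ and the optimal tuning parameter at the last step of that dimension, with $\lambda_{n1}^+(m_{p_0})$ satisfying (A4)), and then invokes Theorem \ref{thm2}. Your additional remark about verifying the ``last-step dominates SSE'' property for LARS-EN via the augmented-design reduction is a reasonable elaboration of a point the paper asserts without proof, but it does not change the argument.
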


This result can also be established for adaptive LASSO.  Denote $\A_n^* = \{j: \tilde{\beta}_{aj}^* \neq 0 \}$ an adaptive LASSO estimate of $\A$ using residual bootstrap data. 

\begin{corollary}
Suppose conditions (A1)--(A2) and (A5) hold, then 
\begin{align*}
\lim_{n \to \infty} P^*(\A_n^* = \A \mid p_0) = 1, \\
\lim_{n \to \infty} P^*(\A_n^* = \M_r \mid r) < 1, \ p_0 < r < p_n,
\end{align*}
where $\M_r$ is any $r$-dimensional model.
\label{cor4}
\end{corollary}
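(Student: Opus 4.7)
The plan is to reduce Corollary \ref{cor4} to a residual-bootstrap analog of Theorem \ref{thm1}, and then lift the $\lambda_n$-conditional statement to a dimension-conditional statement via the LARS one-to-one correspondence already exploited in deriving Corollary \ref{cor1}. In effect, Corollary \ref{cor4} is to Theorem \ref{thm2} what Corollary \ref{cor1} is to Theorem \ref{thm1}, with the roles of ``paired bootstrap/adaptive LASSO'' and ``residual bootstrap/adaptive Elastic-Net'' crossed.

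First I would establish the $\lambda_n$-conditional analog: under (A1)--(A3) and (A5), if $\bs{\tilde{\beta}}_a^*$ is the residual-bootstrap adaptive LASSO estimator and $\A_n^* = \{j : \tilde{\beta}_{aj}^* \neq 0\}$, then $\lim_{n\to\infty} P^*(\A_n^* = \A \mid \lambda_n) = 1$ and for any $\lambda_n'$ producing a model of dimension $r$ with $p_0 < r < p_n$, $\lim_{n\to\infty} P^*(\A_n^* = \M_r \mid \lambda_n') < 1$. The proof would mirror Theorem \ref{thm1} with three adaptations drawn straight from the proof of Theorem \ref{thm2}: (i) the design matrix is fixed under residual bootstrap, so (A1) for $\tfrac{1}{n}\X^T\X$ is automatically preserved; (ii) the bootstrap errors $\v_i^*$ are i.i.d.\ from the empirical distribution of centered residuals $\hat{\v}_i = \hat{\v}_{0i} - \bar{\v}_0$ with $\hat{\v}_{0i} = y_i - \x_i^T\bs{\hat{\beta}}$ for a ridge estimator $\bs{\hat{\beta}}$, so $E^*\v_i^* = 0$ and $E^*(\v_i^*)^2 \to \sigma^2$ almost surely under (A1)--(A2) and (A5) by the consistency of $\bs{\hat{\beta}}$ when $p_n/n \to 0$; and (iii) with these moments in hand, the concentration inequalities underlying the oracle-type arguments in Theorem \ref{thm1} go through verbatim after replacing $\X^{*T}\X^*$ bounds with the deterministic $\X^T\X$ bounds, which is actually a simplification.

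Second, I would invoke the LARS dimension-to-$\lambda_n$ map introduced right after Theorem \ref{thm1}: because the LARS algorithm assigns to each attainable dimension $k$ an optimal transition point $\lambda_n(m_k)$ minimising the SSE over $\{\lambda_n : \hat{df}(\lambda_n) = k\}$, the conditioning event ``dimension $= k$'' is, up to a negligible remainder, the conditioning event ``$\lambda_n = \lambda_n(m_k)$''. A direct check (identical to the argument used to pass from Theorem \ref{thm1} to Corollary \ref{cor1}) shows that $\lambda_n(m_{p_0})$ satisfies (A3), whereas $\lambda_n(m_r)$ for $p_0 < r < p_n$ corresponds to overfit regimes for which the preceding analog yields $P^*(\A_n^* = \M_r \mid \lambda_n(m_r)) < 1$ asymptotically. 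Combining both statements delivers the two displays of Corollary \ref{cor4}.

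The main obstacle I expect is the verification step (ii): showing that the empirical distribution of centered residuals behaves like a genuine $(0,\sigma^2)$ error distribution tightly enough that the moment and tail bounds invoked in Theorem \ref{thm1}'s proof still hold under residual bootstrap when $p_n$ diverges. This needs a careful control of $\bar{\v}_0$ and of $n^{-1}\sum_i \hat{\v}_{0i}^2 - \sigma^2$, both of which require the ridge estimator $\bs{\hat{\beta}}$ to be consistent uniformly enough under (A1)--(A2); all other pieces of the argument are essentially bookkeeping on top of Theorems \ref{thm1} and \ref{thm2}.
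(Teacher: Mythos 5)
Your proposal is correct, but it takes a more laborious route than the paper. The paper's own proof of Corollary~\ref{cor4} is a two-line reduction: since the adaptive LASSO estimator is exactly the adaptive Elastic-Net estimator with $\lambda_{n2}=0$, Theorem~\ref{thm2} (which is already the residual-bootstrap result, proved with the fixed design and with the moment control of the centered residuals supplied by Lemma~\ref{lem2}) applies verbatim to $\bs{\tilde{\beta}}_a^*$, and the passage from the $\lambda$-conditional statement to the dimension-conditional one is the same LARS transition-point argument used for Corollary~\ref{cor3}. You instead propose to rebuild the $\lambda$-conditional statement from scratch by transplanting the KKT machinery of Theorem~\ref{thm1} into the residual-bootstrap setting, re-verifying the bootstrap error moments along the way. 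That verification is precisely what the paper's Lemma~\ref{lem2} already provides ($E^*[\X^T\bs{\v}^*]=\b{0}$ and $\Var^*[\X^T\bs{\v}^*]\to \X^T\X\sigma^2$ with probability 1), so the ``main obstacle'' you flag is real but already resolved in the paper and feeds into the proof of Theorem~\ref{thm2}; your route buys a self-contained derivation at the cost of duplicating that work, while the paper's specialization $\lambda_{n2}=0$ gets the same intermediate result for free. Both approaches then coincide in the final step, and your observation that $\lambda_n(m_{p_0})$ satisfies (A3) while $\lambda_n(m_r)$ for $p_0<r<p_n$ does not is exactly how the paper discharges the apparent mismatch between the hypotheses of the corollary and those of the $\lambda$-conditional theorems.
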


\begin{proof}
Note that the adaptive LASSO estimator is a special case of the adaptive Elastic-Net estimator with $\lambda_{n2} = 0$.  Theorem \ref{thm2} holds automatically for $\A_n^*$, from which Corollary \ref{cor4} can be deduced. 
\end{proof}

Variable selection consistency of the MF procedure can then be deduced from Corollaries \ref{cor1}--\ref{cor4}.  

\begin{corollary}
Suppose conditions (A1)--(A2) and (A5) hold.  Then the MF procedure is variable selection consistent, e.g.
$$ \lim_{n \to \infty} P(M_{r^*} = \A) = 1,$$
where $M_{r^*}$ is the model selected from the MF procedure. 
\end{corollary}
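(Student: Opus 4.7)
The strategy is to convert the bootstrap-probability statements of Corollaries \ref{cor1}--\ref{cor4} into statements about empirical bootstrap frequencies across dimensions, and then analyze which dimension wins the argmax in step 4 of the MF algorithm. The essential picture is: at dimension $p_0$, the bootstrap concentrates on $\A$, so $M_{p_0} = \A$ and $\MF_{p_0}/B \to 1$; at each overfit dimension the bootstrap cannot concentrate on any single model, so $\MF_r/B$ is bounded away from one; and although underfit dimensions may also give $\MF_r/B$ close to one, the tie-breaking rule (highest dimension in the argmax) resolves any competition in favor of $p_0$.

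First I would show that on an event of probability tending to one, $P^*(\A_n^* = \A \mid p_0) \ge 1 - \omega$ for arbitrary $\omega > 0$, applying Corollary \ref{cor1} (paired bootstrap, adaptive LASSO), Corollary \ref{cor2} (paired, adaptive Elastic-Net), or Corollaries \ref{cor3}--\ref{cor4} in the residual-bootstrap cases. Conditional on the data, the $B$ bootstrap draws at dimension $p_0$ are i.i.d., so a Hoeffding bound shows that with $B = B_n \to \infty$, the empirical frequency of $\A$ concentrates within $\omega$ of its bootstrap probability except on a set of exponentially small conditional probability. Letting $\omega \to 0$ gives $\MF_{p_0}/B \to 1$ in probability and $M_{p_0} = \A$ with probability tending to one. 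Second, for each overfit $r$ the corollaries give $P^*(\A_n^* = \M_r \mid r) < 1$ for every $r$-dimensional $\M_r$, and the argument following (\ref{eqt16}) (that the mass spreads over the $\binom{p_n - p_0}{r - p_0}$ overfit models containing $\A$) forces $\MF_r/B$ to stay strictly below one in the limit. Consequently no overfit dimension lies in $\arg\max_i \MF_i$ asymptotically; the argmax is contained in $\{1,\dots,p_0\}$, and the tie-breaking rule selects $r^* = p_0$, yielding $M_{r^*} = \A$.

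The main obstacle I foresee is making the overfit bound uniform in $r$. The corollaries only give the pointwise strict inequality $P^*(\A_n^* = \M_r \mid r) < 1$, whereas for the argmax argument one wants a single constant $c < 1$ (or at least $c_r$ bounded away from one over the whole overfit range) so that every overfit $\MF_r/B$ eventually lies below $\MF_{p_0}/B$. I expect this requires extracting from the proof of Theorem \ref{thm1} a quantitative version of the dilution factor --- essentially bounds involving $1/\binom{p_n - p_0}{r - p_0}$ together with control over the residual mass on non-containing models --- and then using that at least two overfit models always compete for the bootstrap mass at any $p_0 < r < p_n$, so the maximum is bounded away from one. Once that uniformity is in hand, combining with $\MF_{p_0}/B \to 1$ and the tie-breaking convention immediately yields $P(M_{r^*} = \A) \to 1$.
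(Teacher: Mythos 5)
Your proposal follows essentially the same route as the paper's proof: identify $\MF_j/B$ with the bootstrap conditional probability $P^*(\A_n^* = M_j \mid j)$, invoke Corollaries \ref{cor1}--\ref{cor4} to conclude $\MF_{p_0}$ asymptotically dominates every overfit $\MF_r$, and dispose of underfit dimensions via the highest-dimension tie-breaking rule. Your added Hoeffding concentration step for finite $B$ and your flagged concern about uniformity of the overfit bound over $r$ are legitimate refinements of details the paper leaves implicit, but they do not change the argument's structure.
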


\begin{proof}
By definition, $\A_n^*$ is an adaptive LASSO estimate of $\A$ using paired or residual bootstrap data.  It is easy to see that 
$$E^*\left({\MF_j \over B}\right) = P^*(\mathcal{A}_n^* = M_j \mid j), \ \lim_{B \to \infty}{\MF_j \over B} = P^*(\mathcal{A}_n^* = M_j \mid j).$$
Combining with Corollaries \ref{cor1} or \ref{cor4}, 
$$ \lim_{n \to \infty} P(\MF_{p_0} > \MF_r) = 1, \quad p_0 < r < p_n.$$
Thus the MF procedure for adaptive LASSO can consistently identify the true dimension and true model via selecting the maximum of $\MF_j, j \in \{1, \dots, p-1\}$, with the highest dimension (if there is a tie).  Similarly, Corollary \ref{cor2} and \ref{cor3} imply variable selection consistency of the MF procedure for adaptive Elastic-Net. 
\end{proof}

%\begin{remark}
%Similarly Theorem 2 implies the model selection consistency of the MF procedure for adaptive Elastic-Net. In particular from the proof of Theorem 2, it shows that $P^*(\mathcal{T}_n^* = \mathcal{A} \mid p_0)$ has as least an order of $O_p(n^{-{\tau \over 1+\gamma}})$, $0 < \tau < \gamma$ convergence rate. In the case $\gamma=1$, the convergnce rate can be as close as to $O_p({1 \over \sqrt{n}})$. 
%\end{remark}

However, the MF procedure has potential issues in application.  In Figure~\ref{fig:intro1} (middle) excluding the full model case, the maximum occurs at dimension 1 instead of 3 although their MFs are both close to 1.  In next section, we propose a WMF procedure to tackle this underfitting issue in application.

%\begin{figure}
%\centering
%\includegraphics*[scale=0.6]{}
%\caption{The maximum frequency of a model being selected at each dimension (at top), prediction based weight at each dimension and corresponding weighted maximum frequency (at bottom) for model in Example 1. The red tick at $x$-axis indicates the true dimension 3.}
%\label{fig:intro2}
%\end{figure}

\section{The WMF procedure}
\label{sec:WMF}
\subsection{Method and Asymptotic properties}

The underfitting issue in MF procedure can be deduced from Corollaries \ref{cor1}--\ref{cor4}.  Take  $\A_n^*$ for an example.  Although it was shown that $\lim_{n \to \infty} P^*(\A_n^* = \A \mid p_0) = 1$,  the conditional probability at some underfit dimensions can also reach one, e.g. $\lim_{n \to \infty} P^*(\mathcal{A}_n^* = M_{r} \mid r) = 1, 0<r<p_0$.  Note that the tuning parameter leading to an underfit $r$-dimensional estimator, denoted as $\lambda_n'$, fulfills $\lambda_n' > \lambda_n$.  Hence, the convergence rate of $P^*(\mathcal{A}_n^* = M_{r} \mid r)$ at some underfit dimensions can exceed the one at the true dimension.  Therefore, the MF procedure would select an underfit model even with a sufficiently large $n$.  

In order to fix things, we introduce a weight to the MF procedure.  An effective weight should be able to down-weight the underfitting MFs asymptotically, i.e. the weight is able to identify underfit dimensions and its effects does not vanish as $n \to \infty$, without significantly up-weighting the overfitting MFs.

\cite{shao1993linear} showed that the overall unconditional (on $\b y$) expected squared prediction error for the OLS estimator of $\bs{\beta}_0$ under model $\alpha$ is 
\begin{align}
T_{\alpha, n} = \sigma^2 + n^{-1}p_\alpha\sigma^2 + \Delta_{\alpha, n}, 
\label{eqt6}
\end{align}
where $p_\alpha$ indicates the size of $\alpha$,  
$\Delta_{\alpha, n} = \bs{\beta}_0^T \b X^T(\b I - \b P_{\alpha}) \b X \bs{\beta}_0 / n$, \\
$\b P_\alpha = \b X_\alpha(\b X_\alpha^T \b X_\alpha)^{-1} \b X_\alpha^T$, $\b X_\alpha$ is a sub-matrix of $\b X$ whose columns are indexed by the components of $\alpha$ and $\b{I}$ is an identity matrix. 

When $\alpha$ is a true or overfit model, it has $\b X \bs{\beta}_0 = \b X_\alpha \bs \beta_\alpha$ and thus 
\begin{align}
\Delta_{\alpha, n} = 0. 
\label{eqt7}
\end{align}

However, if $\alpha$ is an underfit model,  then $\Delta_{\alpha, n} > 0$ for any fixed $n$.  He further assumed that 
\begin{align}
\lim \inf_{n \to \infty} \Delta_{\alpha, n} > 0,
\label{eqt8}
\end{align}
which is argued in the paper to be a minimal type of asymptotic model identifiability condition.  Under assumption (\ref{eqt8}) and by (\ref{eqt6})--(\ref{eqt7}), 
\begin{align}
\lim_{n \to \infty} {T_{\nu, n} \over T_{\kappa, n}} > 1, 
\label{eqt11}
\end{align}
where $\nu$ is an underfit model and $\kappa$ is a true or overfit model.  By (\ref{eqt11}) a formula inversely proportional to $T_{\alpha, n}$ will be an ideal choice for the weight.

\cite{rao1997out} proposed such a formula for estimating the posterior probability of the model size given the data
\begin{eqnarray}
\hat{P}(j \mid \mathbf{y}) = {\exp[-\hat{T}_n(j) / c\sigma^2] \over \sum_{j=1}^p \exp[-\hat{T}_n(j) / c\sigma^2]}, 
\label{eqt9}
\end{eqnarray}
where $\hat{T}_n(j)$ is an estimate of $T_{\alpha, n}$ using a $j$-dimensional model and $c, 1 \le c \le 2$, is a constant.  We use the multi-fold CV for $\hat{T}_n(j)$ and define 
\begin{eqnarray} 
\WMF_j = \hat{P}(j \mid \mathbf{y}) \times \MF_j. 
\label{eqt10}
\end{eqnarray}
Figure~\ref{fig:intro1} (bottom) shows the effect of weights in Example 1,  which heavily punish underfitting MFs and have little effect on true and overfitting MFs. The WMF procedure then selects the dimension $r^*$ and model $M_{r^*}$ s.t.
$$r^* = \arg \max_{1 \le j \le p-1}  \WMF_j.$$

Recall that $\MF_j/B$ is a bootstrap version estimate of the posterior probability of model $M_j$ given the data and dimension, i.e. $P(M_j \mid \mathbf{y}, j)$,  along with (\ref{eqt10}) it has  
$$\WMF_j = \hat{P}(j \mid \mathbf{y}) \times \hat{P}(M_j \mid \mathbf{y}, j)=\hat{P}(M_j \mid \mathbf{y}).$$
Note that BIC is a Laplace approximation to $P(M_j \mid \mathbf{y})$ under a flat prior assumption and is variable selection consistent for adaptive LASSO \cite{wang2007unified,wang2009shrinkage}, but no convergence rate has been studied.  Simulation studies in Section~\ref{sec:simu} show that BIC has a much slower empirical convergence rate than the WMF procedure.  %Later in this section we show that the WMF procedure enjoys a very fast convergence rate.  

Next we show properties of the multi-fold CV using adaptive LASSO or adaptive Elastic-Net estimators.  Then variable selection consistency of the WMF procedure can be established.  Let $K$ be a fixed integer and suppose $n = Kt$.   In multi-fold CV,  one randomly divides a sample of $n$ observations into $K$ mutually exclusive subgroups $s_1, \dots, s_K$ with each subgroup containing $t$ observations, and selects the model by minimizing the following sum of squared errors  
$$ \MCV_{\M} = {1 \over n} \sum_{i=1}^K \|\y_{s_i} - \X_{s_i, \M} \bs{\hat{\beta}}_{s_i^c, \M} \|^2, $$
where $\bs{\hat{\beta}}_{s_i^c, \M}$ is an adaptive LASSO or adaptive Elastic-Net estimator under model $\M$ using samples not in $s_i$.  Let $\alpha$ and $\alpha'$ be the true or overfit models and $\nu$ be an underfit model.  We assume following condition for asymptotic studies of the multi-fold CV procedure. 

\vspace{2.5mm}

(A6) $\sup_{t \to \infty}\sup_{s_i}\|t^{-1}\X_{s_i, \M}^T\X_{s_i, \M} - \b{V}_{\M}\| = o(1),$ where $\b{V}_{\M}$ is a positive definite matrix.

\vspace{3mm}

\begin{thm}
Suppose conditions (A1)--(A2) and (A5)--(A6) hold, then \\
1. the multi-fold CV for adaptive LASSO or adaptive Elastic-Net satisfies
\begin{align*}
&\lim_{n \to \infty} |\MCV_{\a} - \MCV_{\a'}| = \lim_{n \to \infty} \Big|O_p\Big({p_\alpha - p_{\alpha'} \over n}\Big)\Big| = 0, \\
&\lim_{n \to \infty} \MCV_{\nu} - \MCV_{\alpha} 
\ge {d\|\bs{\beta}_{0\nu^c}\|^2 \over 2} + O_p \left(\|\bs{\beta}_{0\nu^c}\| \sqrt{{p_n \over n}}\right) - O_p\left({p_{\a} \over n}\right) > 0,
\end{align*}
2.  model $M_{r^*}$ selected from the $\WMF$ procedure fulfills
$$\lim_{n \to \infty} P(M_{r^*} = \mathcal{A}) = 1.$$ 
\label{thm3}
\end{thm}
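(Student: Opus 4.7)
The plan is to handle the two parts in sequence: first establish the prediction-error comparison that drives the weight, then combine it with the bootstrap-conditional consistency already proved in Corollaries \ref{cor1}--\ref{cor4} to get the overall WMF consistency.

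For part 1, I would start by decomposing the residual on the left-out fold. Write
\begin{align*}
\y_{s_i} - \X_{s_i,\M}\bs{\hat\beta}_{s_i^c,\M}
= \bs{\v}_{s_i} + \X_{s_i,\M}\bigl(\bs{\beta}_{0,\M}-\bs{\hat\beta}_{s_i^c,\M}\bigr) + \X_{s_i,\M^c}\bs{\beta}_{0,\M^c},
\end{align*}
so that
\begin{align*}
\MCV_{\M} = \tfrac{1}{n}\sum_i \|\bs{\v}_{s_i}\|^2 + \text{cross terms} + \tfrac{1}{n}\sum_i\|\X_{s_i,\M}(\bs{\beta}_{0,\M}-\bs{\hat\beta}_{s_i^c,\M})\|^2 + \tfrac{1}{n}\sum_i\|\X_{s_i,\M^c}\bs{\beta}_{0,\M^c}\|^2.
\end{align*}
For a true or overfit model $\a$ the last block vanishes because $\bs{\beta}_{0,\a^c}=\b0$, and under (A6) together with condition (A1) the quadratic form in the third block is controlled by $\zeta_{\max}(\b V_\a)\|\bs\beta_{0,\a}-\bs{\hat\beta}_{s_i^c,\a}\|^2$. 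Using the oracle rate $\|\bs{\hat\beta}_{s_i^c,\a}-\bs\beta_{0,\a}\|^2=O_p(p_\a/n)$ for the adaptive LASSO / adaptive Elastic-Net (which is available under (A1)--(A5) from \cite{zou2009adaptive}), and noting the cross terms are of smaller order by a standard martingale/Chebyshev bound on $\bs\v_{s_i}^T\X_{s_i,\a}(\bs\beta_{0,\a}-\bs{\hat\beta}_{s_i^c,\a})$, we get $\MCV_\a = n^{-1}\sum_i\|\bs\v_{s_i}\|^2 + O_p(p_\a/n)$. Differencing with $\a'$ kills the common noise term and leaves the claimed $O_p((p_\a-p_{\a'})/n)\to 0$.

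For an underfit $\nu$ the fourth block no longer vanishes; under (A1) its lower bound is
\begin{align*}
\tfrac{1}{n}\sum_i\|\X_{s_i,\nu^c}\bs\beta_{0,\nu^c}\|^2 \;\ge\; d\,\|\bs\beta_{0,\nu^c}\|^2,
\end{align*}
and in fact, because the oracle property fails on $\nu$, the estimation block only contributes an additional $O_p(\|\bs\beta_{0,\nu^c}\|\sqrt{p_n/n})$ cross term whose sign is controlled by a Cauchy-Schwarz bound. Collecting these, I get $\MCV_\nu-\MCV_\a \ge d\|\bs\beta_{0,\nu^c}\|^2/2 + O_p(\|\bs\beta_{0,\nu^c}\|\sqrt{p_n/n}) - O_p(p_\a/n)$, which is strictly positive in the limit because $\|\bs\beta_{0,\nu^c}\|^2 \ge \min_{j\in\A}|\beta_{0j}|^2$ dominates $p_n/n$ by the last line of (A2). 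The main technical nuisance here is keeping the Bonferroni/union-bound work tidy when $p_n$ diverges; the eigenvalue condition (A1), the growth condition (A2), and (A6) together make this go through.

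For part 2, I would combine the above with the bootstrap corollaries. Set $\hat T_n(j)=\MCV_{M_j}$ and write
\begin{align*}
{\WMF_r \over \WMF_{p_0}}
= {\exp\{-[\hat T_n(r)-\hat T_n(p_0)]/c\sigma^2\} \over 1}\cdot{\MF_r/B \over \MF_{p_0}/B}.
\end{align*}
For overfit $r$ with $p_0<r<p_n$, part 1 gives $\hat T_n(r)-\hat T_n(p_0)=O_p((r-p_0)/n)\to 0$, so the exponential factor tends to $1$; meanwhile Corollaries \ref{cor1}--\ref{cor4} yield $\MF_{p_0}/B\to 1$ while $\MF_r/B$ stays bounded away from $1$ in probability, hence the ratio is strictly below $1$ asymptotically. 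For underfit $r<p_0$, part 1 gives $\hat T_n(r)-\hat T_n(p_0)\ge d\|\bs\beta_{0,M_r^c}\|^2/2+o_p(1)$ bounded away from zero, so the exponential factor drives the whole ratio to $0$ regardless of how close $\MF_r/B$ gets to $1$. Combining the two cases shows $P(\WMF_{p_0}>\WMF_r)\to 1$ for every $r\ne p_0$, and since by the MF consistency result the arg-max at dimension $p_0$ is the true model $\A$, we conclude $P(M_{r^*}=\A)\to 1$.

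The hardest piece will be the underfit lower bound in part 1: the adaptive LASSO/Elastic-Net estimator on a wrong model is not a standard OLS projection, so one has to argue carefully that the penalty does not absorb enough of the misspecification bias to kill the $d\|\bs\beta_{0,\nu^c}\|^2$ term. I expect that using the KKT conditions to bound $\|\bs{\hat\beta}_{s_i^c,\nu}\|$ and then invoking (A1) on the full design to separate the signal into $\X_\nu$ and $\X_{\nu^c}$ components will be the cleanest route; the rest is careful bookkeeping of orders under (A2)--(A5).
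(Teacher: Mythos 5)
Your overall architecture matches the paper's: decompose $\MCV_{\M}$ on each left-out fold, use an oracle-type rate $\|\bs{\hat\beta}_{s_i^c,\a}-\bs\beta_{0\a}\|^2=O_p(p_\a/n)$ for true/overfit models (the paper isolates this as a separate lemma for the sub-sample estimators), and then run the ratio $\WMF_{p_0}/\WMF_r$ through the exponential weight together with Corollaries \ref{cor1}--\ref{cor4}. Part 2 of your argument is essentially identical to the paper's and is fine.

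The genuine gap is in the underfit lower bound of part 1, and you half-identify it yourself at the end. Your decomposition splits the misfit into an estimation block $A_i=\X_{s_i,\nu}(\bs\beta_{0\nu}-\bs{\hat\beta}_{s_i^c,\nu})$ and an omitted-signal block $B_i=\X_{s_i,\nu^c}\bs\beta_{0\nu^c}$, and you claim the cross term $2A_i^TB_i/n$ is only $O_p(\|\bs\beta_{0\nu^c}\|\sqrt{p_n/n})$ "controlled by Cauchy--Schwarz." That is not true: on an underfit model the penalized estimator converges to a pseudo-true parameter, not to $\bs\beta_{0\nu}$, with bias of order $\|\bs\beta_{0\nu^c}\|$ induced by the correlation between included and excluded columns; hence $\|A_i\|$ is of order $\sqrt{t}\,\|\bs\beta_{0\nu^c}\|$, the cross term is of order $\|\bs\beta_{0\nu^c}\|^2$ with uncontrolled (possibly negative) sign, and it can cancel a constant fraction of your $d\|\bs\beta_{0\nu^c}\|^2$ term. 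The $O_p(\|\bs\beta_{0\nu^c}\|\sqrt{p_n/n})$ remainder in the theorem statement comes from the cross term with the \emph{noise} $\bs\v_{s_i}$, not from the $A$--$B$ interaction. The paper avoids this entirely by never separating bias from estimation error: it lower-bounds the whole quantity via $\|u-v\|^2\ge \tfrac12\|u\|^2-\|v\|^2$ with $u=\X_{s_i}\bigl[\bs\beta_0-(\bs{\hat\beta}_\nu^T,\b0^T)^T\bigr]$ and $v=\X_{s_i,\nu}(\bs{\hat\beta}_{s_i^c,\nu}-\bs{\hat\beta}_\nu)$, then uses the minimum-eigenvalue condition plus the fact that the $\nu^c$-block of $\bs\beta_0-(\bs{\hat\beta}_\nu^T,\b0^T)^T$ is exactly $\bs\beta_{0\nu^c}$ no matter what the estimator does on $\nu$; the fold-versus-full discrepancy $v$ is $o_p(1)$. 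This is also where the factor $1/2$ in $d\|\bs\beta_{0\nu^c}\|^2/2$ comes from. Your suggested fallback (KKT bounds on $\|\bs{\hat\beta}_{s_i^c,\nu}\|$ and separating $\X_\nu$ from $\X_{\nu^c}$) does not obviously repair this; you should replace the block decomposition for the underfit case with an eigenvalue bound on the full coefficient discrepancy as above.
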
 

Proofs of Theorem \ref{thm3} are included in Appendix \ref{AP}.  Denote $r'$ an underfit dimension.  The ratio of ${\WMF_{p_0} \over \WMF_{r'}}$ is exponentially proportional to the bias term, ${d \over 2c\sigma^2}\|\bs{\beta}_{0\M_{r'}^c}\|^2$, which is larger than 0 and does not fade as $n \to \infty$.  This guarantees a good finite sample performance of the WMF procedure and a fast vanishing rate of its underfitting issues, which will be confirmed in simulation studies in Section \ref{sec:simu}.

\subsection{Computation}
\label{sec:comp}

In adaptive Elastic-Net, $\lambda_{n2}$ takes the same value in Elastic-Net for calculating the weights $\omega_j$'s, where the tuning parameters are chosen by minimizing the two-dimensional BIC \cite{zou2005regularization}.  Then computational efforts remain the same for adaptive LASSO and adaptive Elastic-Net, which are to compute a full solution path against $\lambda_n$'s or $\lambda_{n1}^+$'s.  Computational complexity of creating an entire adaptive LASSO solution path is of order $O(np_n^2)$ \cite{zou2006adaptive}.  It is of order $O(np_n^2+p_n^3)$ for adaptive Elastic-Net\cite{zou2005regularization}.  Since the optimal value often occurs at an early stage, we could stop the algorithms after $m, m<p_n,$ steps.  In this case, the computational cost reduces to $O(nm^2)$ for adaptive LASSO and $O(m^3+nm^2)$ for adaptive Elastic-net.  

Computational cost of a WMF procedure is then $B$ times the cost of computing an adaptive LASSO or adaptive Elastic-Net solution path.

%\begin{remark}
%Remark A.2 in the proof of Theorem 3 shows that the WMF procedure has at least an order of $O_p(n^{-\kappa})$ variable selection convergence rate, where $\kappa = \min\{1, \tau \}$  for adaptive Lasso and $\kappa = {\tau \over 1+\gamma}$ for adaptive Elastic-Net, $0 < \tau < \gamma$. By discussions in Remark 2 and 3, when $\gamma=1$ this rate can be as close as to $O_p({1 \over n})$ for adaptive LASSO and to $O_p({1 \over \sqrt{n}})$ for adaptive Elastic-Net. It is worth mentioning that the convergence rates discussed here are just the lower bound values, as discussed in Remark A.2, the WMF procedure actually enjoys a much faster convergence rate and guarantees a good finite sample performance. 

%Moreover for a larger value of $\gamma$, say $\gamma \gg 1$, WMF procedure for adaptive LASSO will have at least an order of $O_p({1 \over n})$ convergence rate and the rate for adaptive Elastic-Net will at least approximate to $O_p({1 \over n})$. To best of our knowledge, little light to date has been shed on the convergence rates of model selection criterion, in particular for penalized least squares estimations. One exception is by \cite{zhang1993model}, where he shows that the BIC has an order of $O_p((nlogn)^{-1/2})$ convergence rate for OLS estimates, whereas the AIC has an order of $O_p(n^{-1})$ convergence rate to an overfit model.
%\end{remark}

\begin{figure}
\centering
\includegraphics[width=0.6\textwidth]{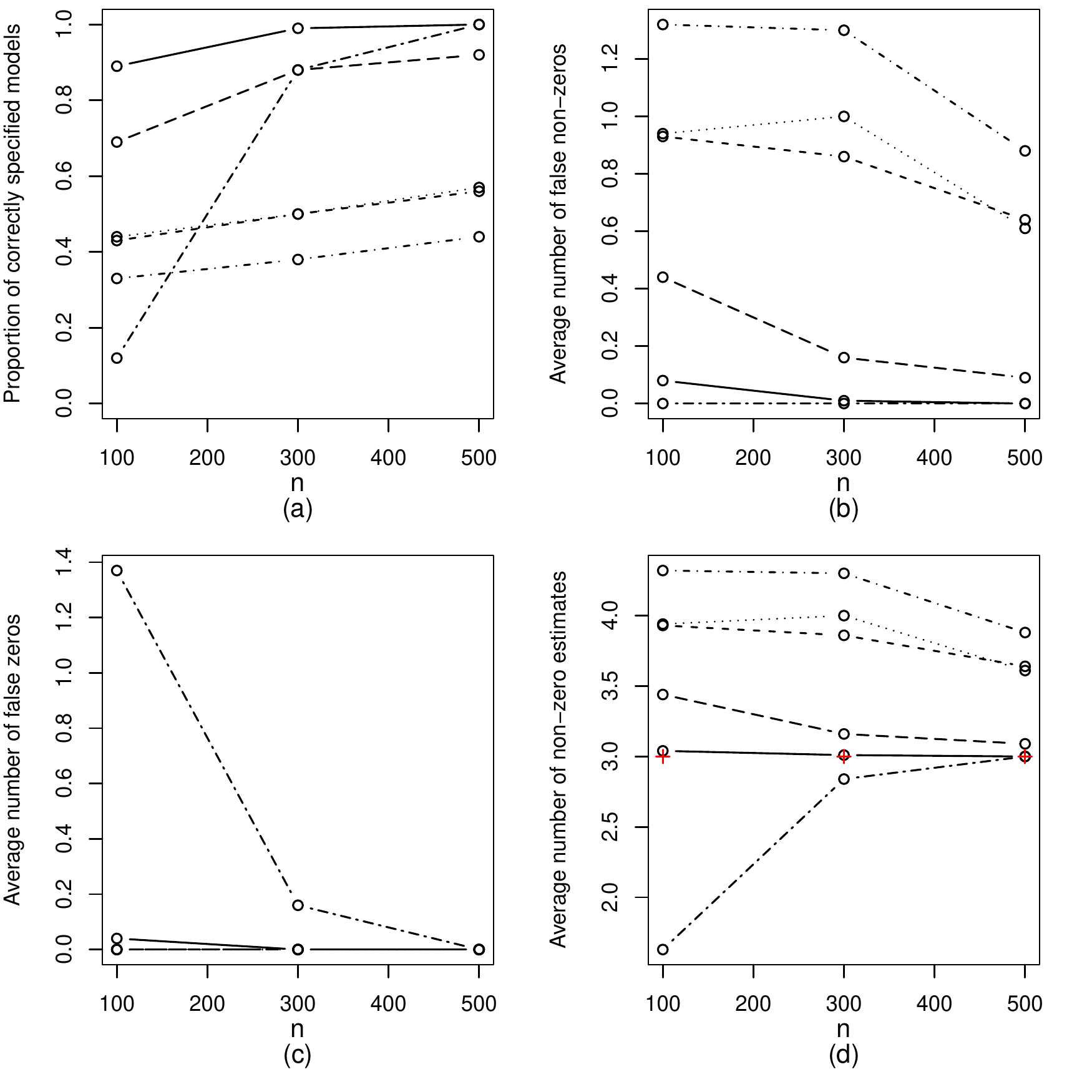}
\includegraphics[width=0.2\textwidth]{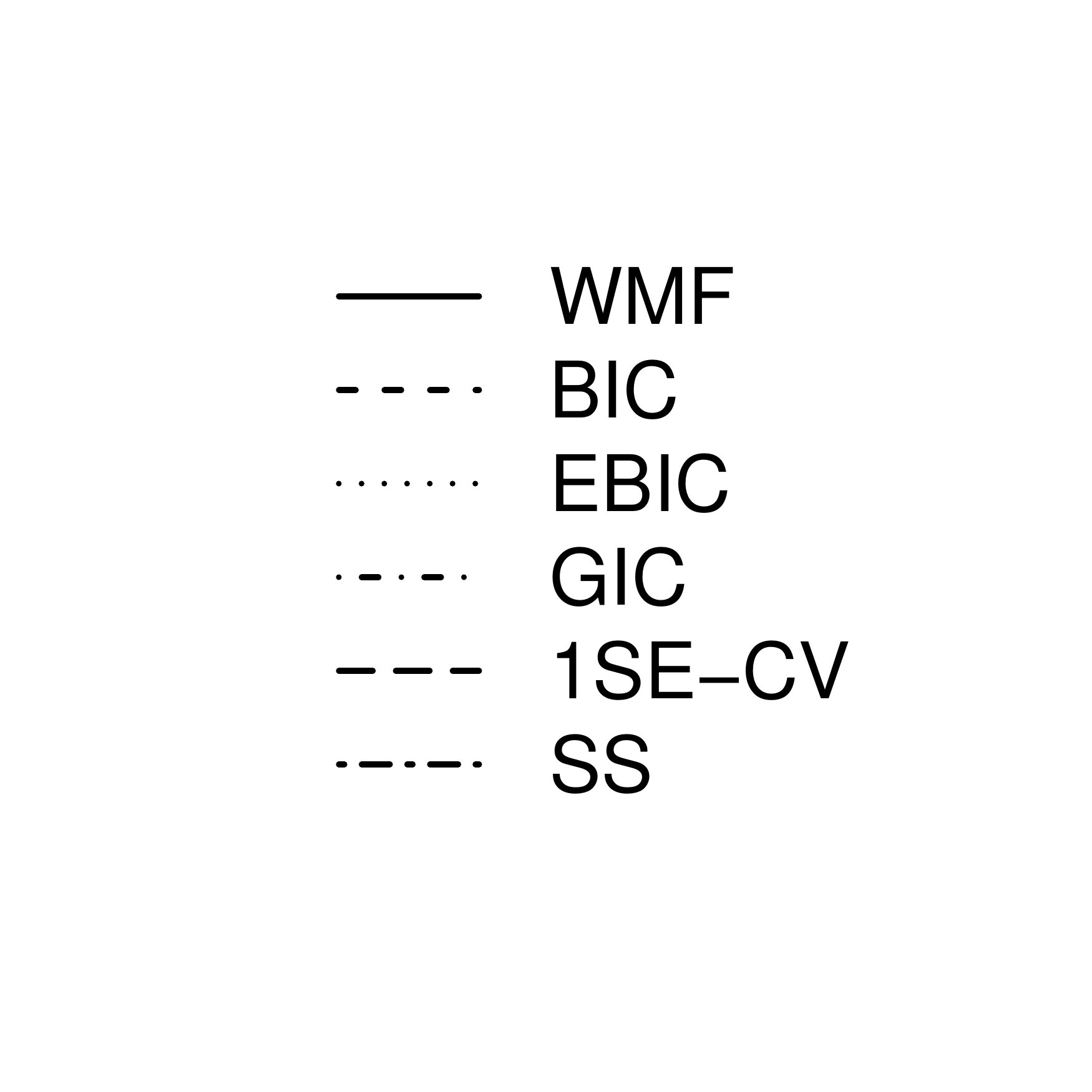}
\caption{Results of scenario 1: (a) proportion of correctly specified models; (b) average number of false non-zeros; (c) average number of false zeros; (d) average value of estimated model sizes.}
\label{fig:case1}
\end{figure}

\section{Empirical studies}
\label{sec:simu}

We now investigate empirical performances of the WMF procedure and show it outperforms the BIC, EBIC, GIC, SS, Cp, and 1se-CV (which is often recommended for variable selection) in a wide range of situations for both adaptive LASSO and adaptive Elastic-Net. The Cp did very poor in all scenarios, thus is excluded in the presentation. 

In all simulations, data were generated from
\begin{eqnarray}
y_i = \b x_i^T{\bs \beta}+\sigma\v_i, \quad i = 1, \dots,  n,
\end{eqnarray}
where $\b x_i \overset{iid}{\sim} N_{p_n}(\b{0}, \bs{\Sigma})$ and $\v_i \overset{iid}{\sim} N(0, 1)$.  Let $p_n=O(n^{\kappa})$ for some constant $\kappa$, $0 \le \kappa <1$, $n = 100, 300, 500$.  Results were averaged over 100 times of replications.  

\begin{figure}
\centering
\includegraphics[width=0.6\textwidth]{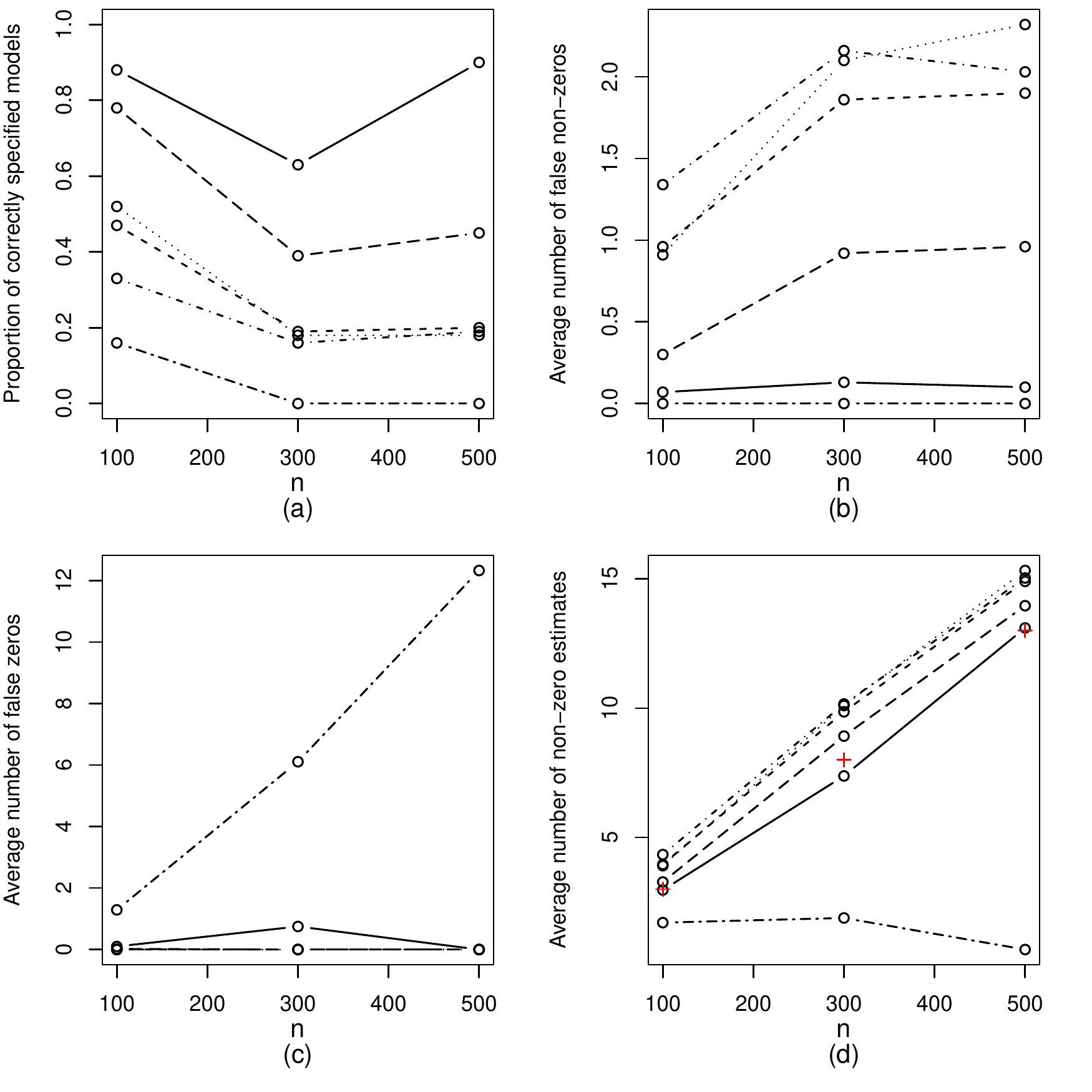}
\includegraphics[width=0.2\textwidth]{legend1.pdf}
\caption{Results of scenario 2: (a) proportion of correctly specified models; (b) average number of false non-zeros; (c) average number of false zeros; (d) average value of estimated model sizes.}
\label{fig:case2}
\end{figure}

\subsection{Simulations of the adaptive LASSO WMF procedure}

Three scenarios were designed for the adaptive LASSO WMF procedure.  In each scenario, $\bs \Sigma(i, j)=0.3^{|i-j|}$ and $\sigma=3$.

Scenario 1: {\it Fixed low dimension and moderate proportion of true covariates.} More specifically, set $p_n=10$ and $\bs{\beta}=(3, 1.5, 0, 0, 2, 0, \dots)_{10}^T$.  Then the proportion of true covariates is 0.3, and the signal to noise ratios (SNR) are respectively 2.03, 2 and 1.98 for various $n$.

Scenario 2: {\it Low dimension, moderate proportion of true covariates and weak signals for some true covariates.} Specifically, set $p_n=O(\sqrt{n})$, then $p_n$ equals to 10, 17, 22 accordingly.  Let $p_0$ grow with $n$ as follows.  Initially $p_0=3$ and $\bs{\beta} = (3, 1.5, 0, 0, 2, 0, \dots)^T$.  Afterwards, $p_0$ increases by 1 for every 40-unit increment in $n$ and the new element equals to 1.  As a result, the proportions of true covariates are respectively 0.3, 0.47, and 0.59, and the SNRs are 2, 2.85 and 3.69.

Scenario 3: {\it High dimension, sparse proportion of true covariates and relatively large signals for all true covariates.} In detail, set $p_n=O(n^{3/4})$, then $p_n$ equals to 32, 72, 106 accordingly.  Let $p_0$ grow in the same manner as in scenario 2, but the new elements equal to 2.  Accordingly, the proportions of true covariates are 0.09, 0.11 and 0.12, and the SNRs are 2, 5.07, and 8.5.

\begin{figure}
\centering
\includegraphics[width=0.6\textwidth]{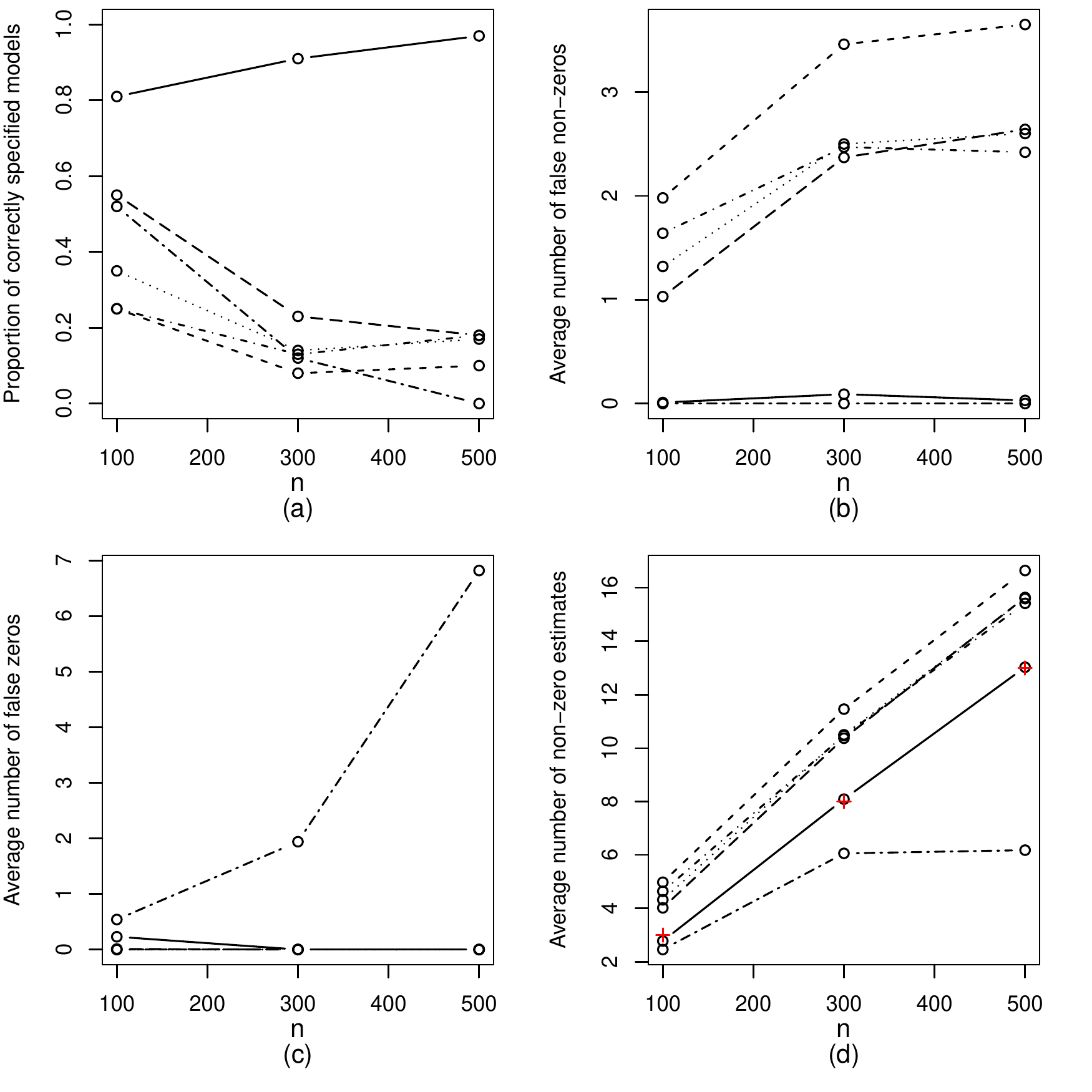}
\includegraphics[width=0.2\textwidth]{legend1.pdf}
\caption{Results of scenario 3: (a) proportion of correctly specified models; (b) average number of false non-zeros; (c) average number of false zeros; (d) average value of estimated model sizes.}
\label{fig:case3}
\end{figure}

{\it Paired bootstrapping} was used in the adaptive LASSO WMF procedure.  Simulation results are summarized in Figures \ref{fig:case1}--\ref{fig:case3}.  In all scenarios, the proposed method has the highest degree of accuracy in identifying the true model and also enjoys a much faster convergence rate than other compared methods.  The WMF procedure has an underfitting issue which vanishes quickly as $n$ increases.  Other methods (except for the SS) however suffer from an overfitting issue. The sparser the model is, the more serious the issue tends to be.  Performance of the SS relies on particular specifications of several unknown parameters.  Although we have followed instructions in \cite{meinshausen2010stability} for setting those parameters throughout the simulations, its performance remains erratic and unsatisfactory.

Simulations for using {\it residual bootstrapping} in the adaptive LASSO WMF procedure were also conducted.  The results are presented in Appendix \ref{sec:asr}, which are similar to those in above paired bootstrapping simulations.

\begin{figure}
\centering
\includegraphics[width=0.6\textwidth]{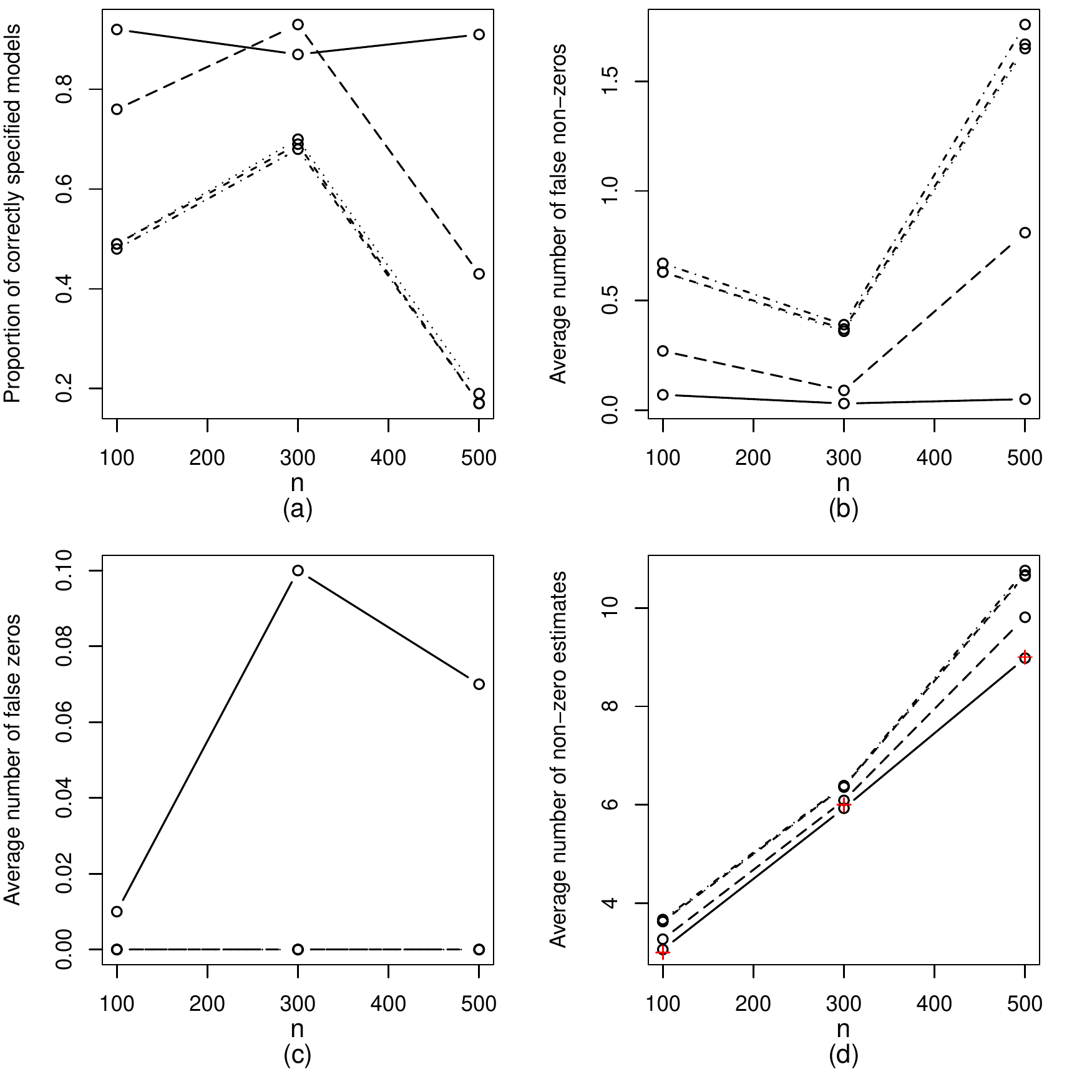}
\includegraphics[width=0.2\textwidth]{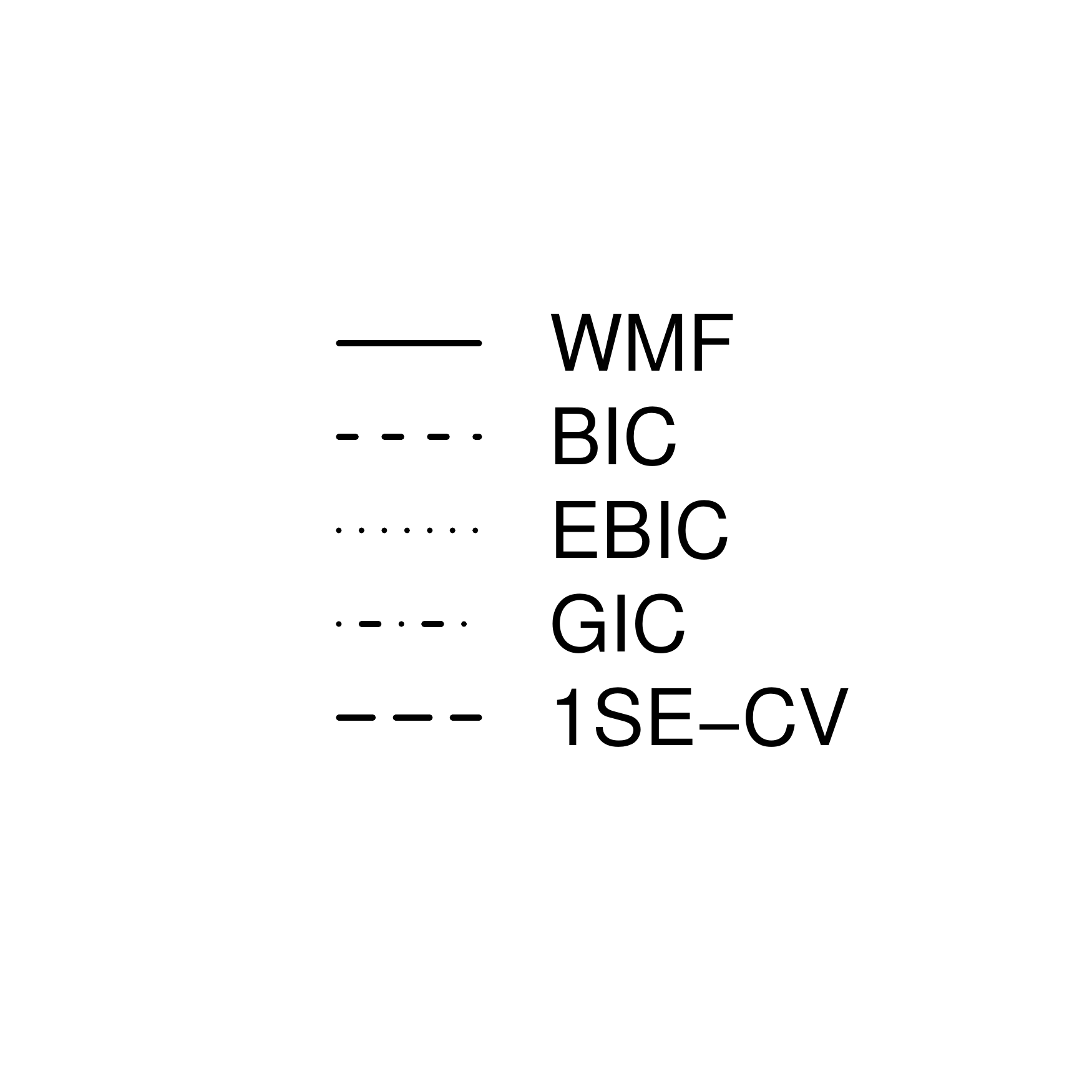}
\caption{Results of scenario 4: (a) proportion of correctly specified models; (b) average number of false non-zeros; (c) average number of false zeros; (d) average value of estimated model sizes.}
\label{fig:case4}
\end{figure}

\subsection{Simulations of the adaptive Elastic-Net WMF procedure}

We also designed three scenarios for the adaptive Elastic-Net WMF procedure, each of which mimics a typical structure in applications.  Since the adaptive Elastic-Net fits data with grouping effects, in following simulations true covariates will be added in blocks with size 3.  The SS is excluded due to its poor performance. 

\begin{figure}
\centering
\includegraphics[width=0.6\textwidth]{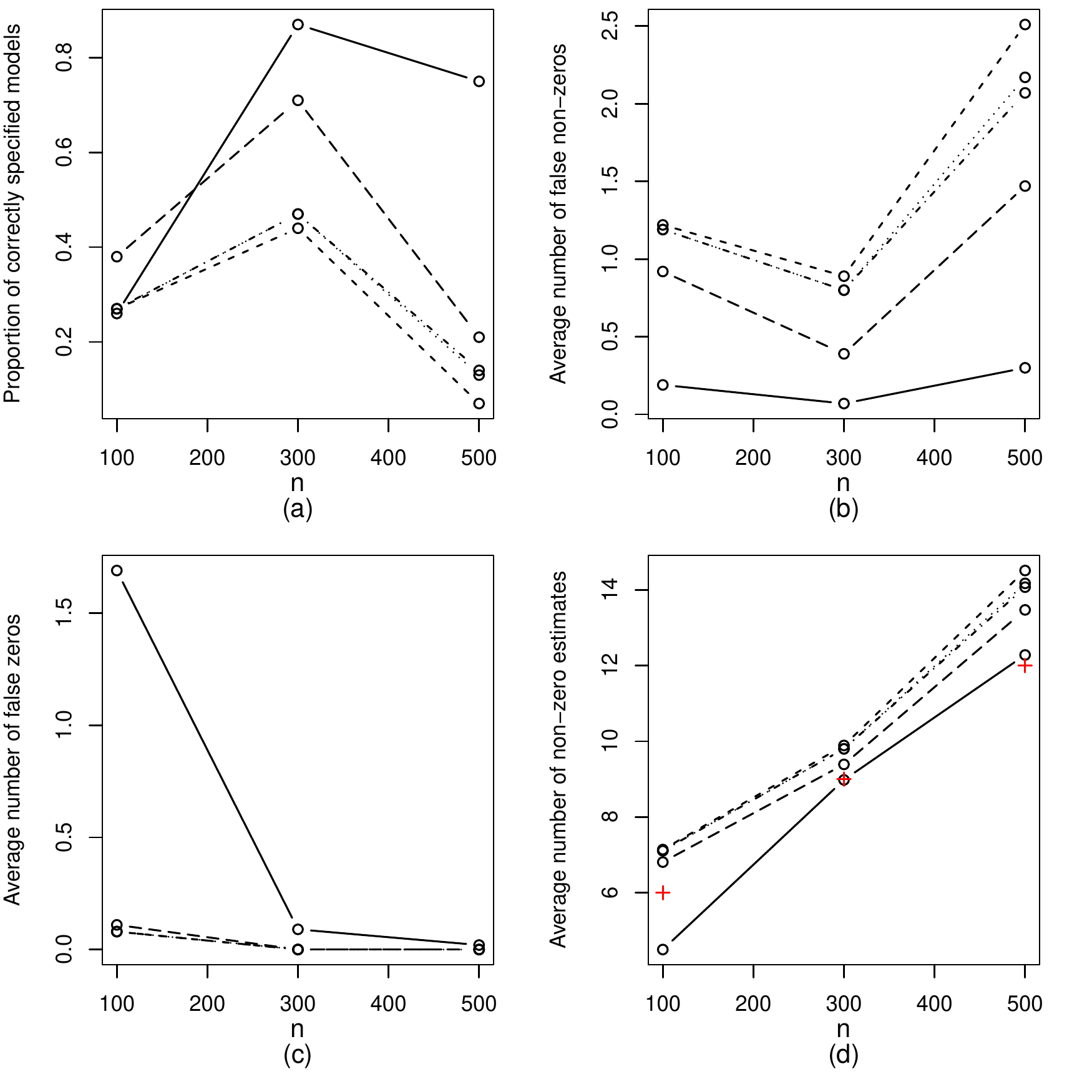}
\includegraphics[width=0.2\textwidth]{legend2.pdf}
\caption{Results of scenario 5: (a) proportion of correctly specified models; (b) average number of false non-zeros; (c) average number of false zeros; (d) average value of estimated model sizes.}
\label{fig:case5}
\end{figure}

Scenario 4: {\it Low dimension, moderate proportion of true covariates, weak signals for some true covariates and moderate correlations between covariates.} More specifically, let $\bs \Sigma (i, j)=0.5^{|i-j|}$, $\sigma=3$, and $p_n=O(\sqrt{n})$.  Initially we have one block of true covariates, then $p_0=3$.  Elements of $\bs{\beta}$ in the block equal to 2, the rest are 0.  Afterwards, we add 1 block of true covariates for every 200-unit increment in $n$ and the new elements equal to 1.  Respectively, the proportions of true covariates are 0.3, 0.35 and 0.41, and the SNRs are 2.45, 3.72 and 3.67.

Scenario 5: {\it High dimension, sparse proportion of true covariates, relatively large signals for all true covariates and moderate correlations between covariates.}  In detail, let $\bs \Sigma(i, j)=0.5^{|i-j|}$, $\sigma=5$, and $p_n=O(n^{3/4})$.  Initially set $p_0=6$.  Then true covariates follow the same adding scheme as in scenario 4.  All non-zero elements in $\bs{\beta}$ equal to 2.  Respectively, the proportions of true covariates are 0.19, 0.13 and 0.11, and the SNRs are 1.79, 3.09 and 3.51.

Scenario 6: {\it High dimension, sparse proportion of true covariates, relatively large signals for all true covariates and high correlations between grouped covariates.} Specifically, let $\sigma=5$ and $p_n=O(n^{3/4})$.  True covariates follow the same adding scheme as in Scenario 5,  all non-zero elements in $\bs{\beta}$ equal to 2.  Moreover, true covariates within each block have correlations almost 1, while true covariates between the blocks have correlation 0.  All noise covariates are i.i.d from $N(0, 1)$.  Respectively, the proportions of true covariates are 0.19, 0.13 and 0.11, and the SNRs are 2.84, 4.35 and 5.75.

\begin{figure}
\centering
\includegraphics[width=0.6\textwidth]{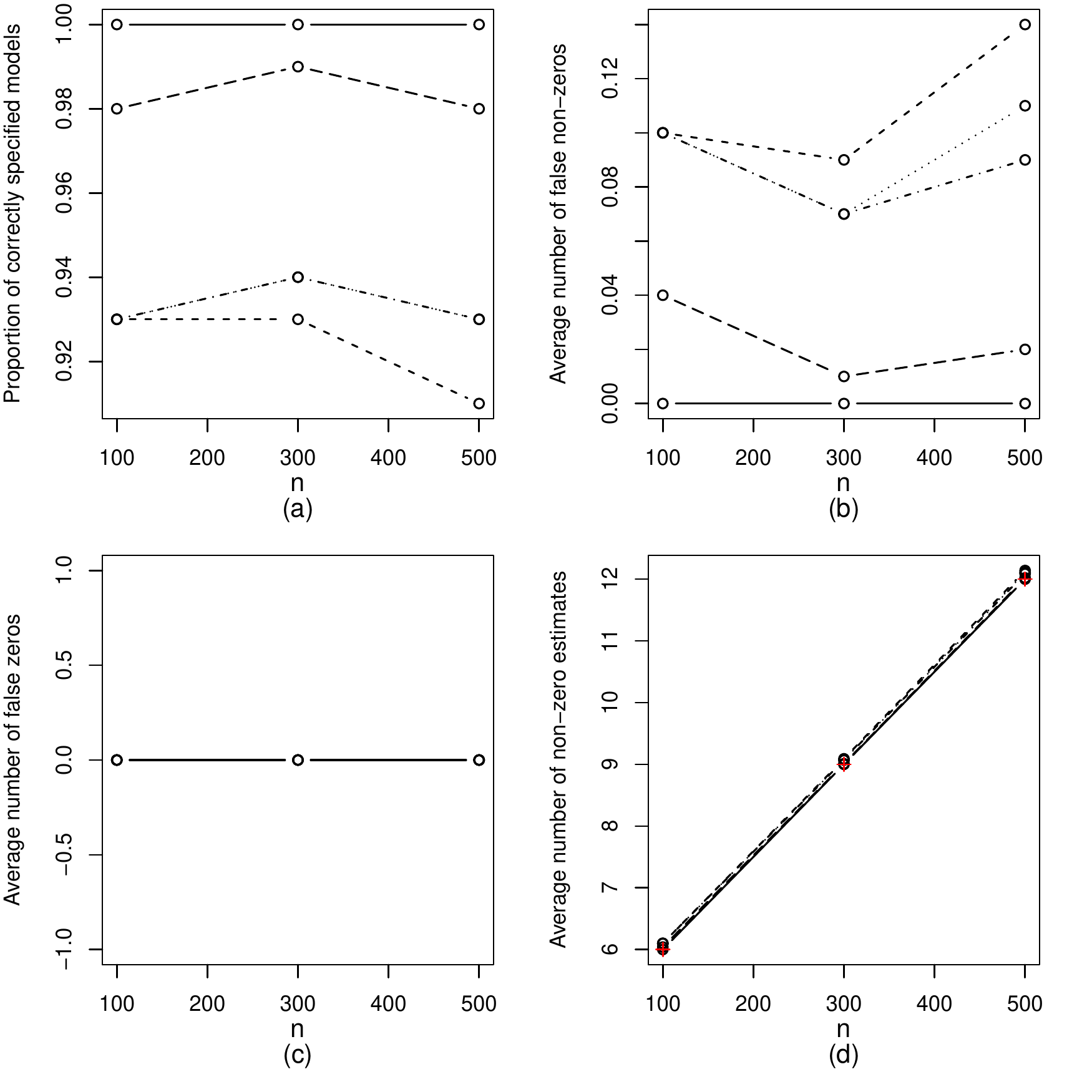}
\includegraphics[width=0.2\textwidth]{legend2.pdf}
\caption{Results of scenario 6: (a) proportion of correctly specified models; (b) average number of false non-zeros; (c) average number of false zeros; (d) average value of estimated model sizes.}
\label{fig:case6}
\end{figure}

{\it Residual bootstrapping} was used in the adaptive Elastic-Net WMF procedure.  Simulation results are summarized in Figures \ref{fig:case4}--\ref{fig:case6}.  In scenarios 4 and 5, the proposed method has the best performance over other compared methods: on average the highest degree of accuracy in indentifying the true model; a faster convergence rate; the underfitting issue vanishes quickly.  On the other hand, other methods suffer from an overfitting issue.  The sparser the model is, the more serious the issue tends to be.   In scenario 6, all methods do equally well because the adaptive Elastic-Net well fit the data with highly grouped effects.    

Simulation results for using {\it paired bootstrapping} in the adaptive Elastic-Net WMF procedure are presented in Appendix \ref{sec:asr}, which are similar to those in above residual bootstrapping simulations.

\begin{table}
\caption{The leukaemia classification using adaptive LASSO \label{tab:1}}
\begin{center}
\begin{tabular}{ |c c c c| } 
\hline
\it Criteria & \it Ten-fold CV error & \it Test error & \it Number of genes \\
\hline
WMF & 0/38 & 5/34 & 5 \\
CV & 0/38 & 4/34 & 13 \\
Cp & 0/38 & 4/34 & 18 \\
BIC & 0/38 & 4/34 & 18 \\
EBIC & 1/38 & 6/34 & 5 \\
GIC & 1/18 & 6/34 & 5 \\
\hline
\end{tabular}
\end{center}
\end{table}

\begin{table}
\caption{The leukaemia classification using adaptive Elastic-Net \label{tab:2}}
\begin{center}
\begin{tabular}{ |c c c c| } 
\hline
\it Criteria & \it Ten-fold CV error & \it Test error & \it Number of genes \\
\hline
WMF & 0/38 & 4/34 & 10 \\
CV & 1/38 & 6/34 & 42 \\
Cp & 1/38 & 6/34 & 36 \\
BIC & 1/38 & 7/34 & 34 \\
EBIC & 1/38 & 7/34 & 34 \\
GIC & 1/38 & 7/34 & 21 \\
\hline
\end{tabular}
\end{center}
\end{table}

\subsection{Classification analysis of the leukaemia data}
We now demonstrate the WMF procedure in a real data application.  
The leukaemia data \cite{golub1999molecular} contains $p_n=7129$ genes and $n=72$ samples. We have 38 out of the 72 samples from the training dataset with 27 ALL's (acute lymphoblastic leukaemia) and 11 AML's (acute myeloid leukaemia).  The remaining 34 samples are from the test dataset with 20 ALL's and 14 AML's.  The goal of this analysis is to identify a subset of genes that can accurately predict the type of leukaemia for future data.  Similar to \cite{zou2005regularization}, we coded the type of leukaemia as a binary response variable, denoted as $y$, and defined the classification function as $I(\hat{y} > 0.5)$, where $I(\cdot)$ is the indicator function.

To improve computational efficiency, we selected 1000 candidate genes as the predictors using the sure independence screening (SIS) procedure \cite{fan2008sure}.  The adaptive LASSO and adaptive Elastic-Net were then applied to explore the data.   The screening and variable selection were carried out on the training dataset,  while classification errors were examined on the test dataset.  Both the LARS and LARS-EN algorithms were stopped after 200 steps of estimation to further reduce the computational costs.  Note that since the optimal steps selected by various types of methods are much smaller than the stopping step, this strategy will not affect the variable selection.

Classification results are summarized in Tables \ref{tab:1}--\ref{tab:2}.  For adaptive LASSO, although the Cp, CV and BIC have obtained the minimal classification errors for both training and test datasets,  the WMF has classification errors close to the minimum using the least number of genes.   For adaptive Elastic-Net, the WMF has the minimal classification errors for both training and test datasets using the least number of genes.  Thus we conclude that the WMF procedure is able to find the set of ``important'' genes that can largely improve the prediction accuracy.

\section{Extensions}
\label{sec:ext}

Here we investigate extensions of the WMF procedure to GLMs, which has the following generic density fuction \cite{mccullagh1989generalized}
$$ f(y \mid \b x, \bs \beta) = h(y)\exp(y \b x^T \bs \beta - \phi(\b x^T \bs \beta)). $$
\cite{zou2006adaptive} had extended the adaptive LASSO to GLMs.  Its estimator, $\bs{\hat{\beta}}_a$, is obtained by maximizing the penalized log-likelihood, 
$$
\bs{\hat{\beta}}_a = \arg \min_{\bs \beta} \sum_{i=1}^n (-y_i \b x_i^T \bs \beta + \phi(\b x_i^T \bs \beta)) + \lambda_n \sum_{j=1}^p \hat{w}_j|\beta_j|, 
$$
where $\hat{w}_j = 1/|\tilde{\beta}_j|^{\gamma}$, $\gamma>0$ and $\bs{\tilde{\beta}} = (\tilde{\beta}_1, \dots, \tilde{\beta}_p)^T$ is the maximum likelihood estimator.  Under certain regularity conditions, $\bs{\hat{\beta}}_a$ was shown to enjoy the oracle properties . 

The generalization of Multi-fold CV to GLMs is straightforward \cite{shao1993linear}.  Define, 
$$
\MCV_{\alpha} = {1 \over n} \sum_{i=1}^k Q(\b y_{s_i},  \b{\hat{y}}_{s_i^c, \alpha}), 
$$
where $Q(\cdot, \cdot)$ is a loss function, $\b{\hat{y}}_{s_i^c, \alpha}$ is the prediction of $\b y_{s_i}$ under model $\alpha$ using samples not in $s_i$.

Then we can extend the WMF procedure to GLMs for adaptive LASSO.  In this case, we draw $B$ paired bootstrap samples in step 1 of Algorithm \ref{alg1}.  Note that the LARS algorithm does not fit for GLMs, but we can use the coordinate descent algorithm \cite{friedman2010regularization} instead, which generates a solution path similar to the LARS.  Hence in step 2, we use the coordinate descent algorithm to fit each bootstrap data.  The rest remain the same.   Asymptotic properties of the adaptive LASSO WMF procedure for GLMs can also be established by using some similar techniques for showing Theorem \ref{thm1} in this paper and Theorem 4 in \cite{zou2006adaptive}.

\begin{figure}
\centering
\includegraphics[width=0.6\textwidth]{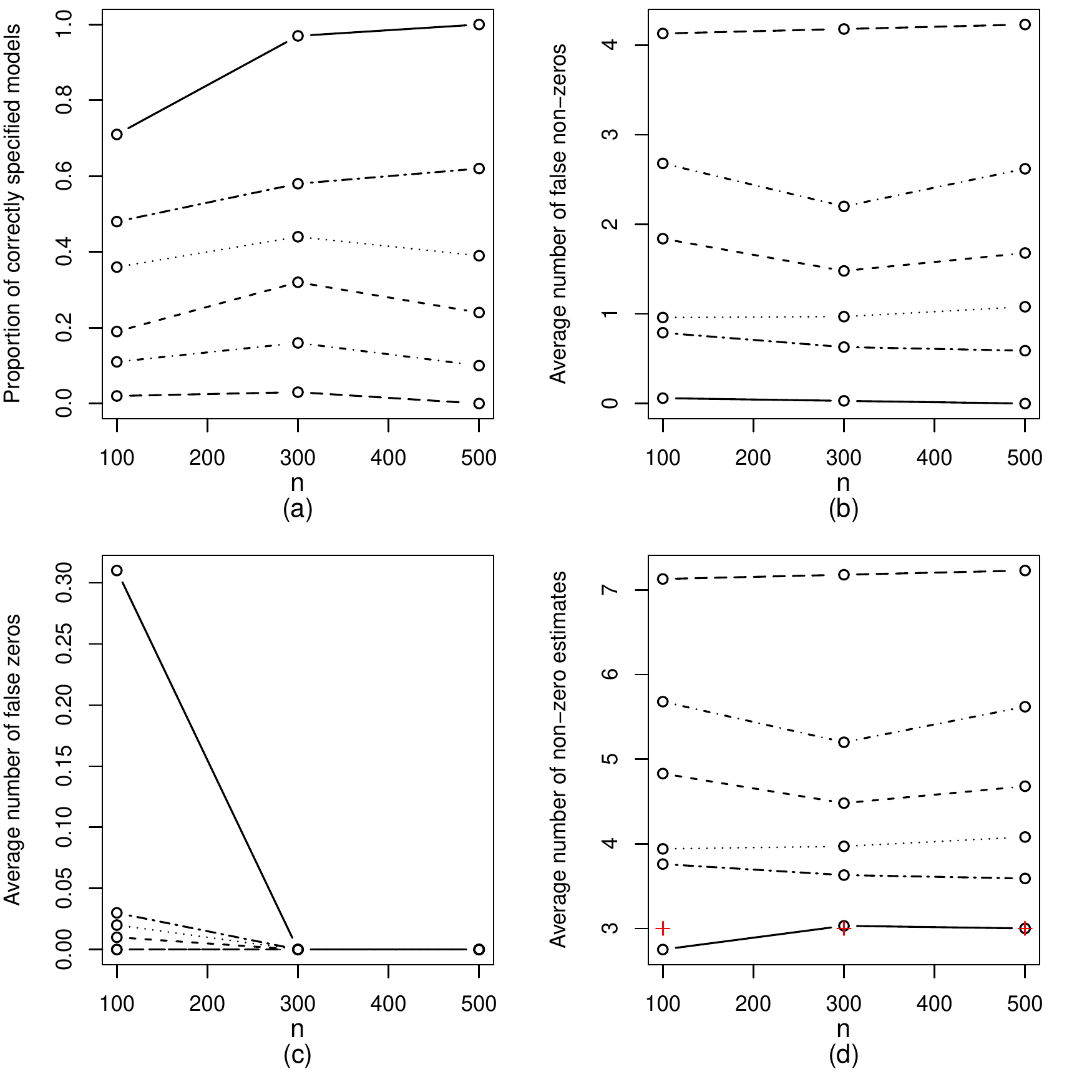}
\includegraphics[width=0.2\textwidth]{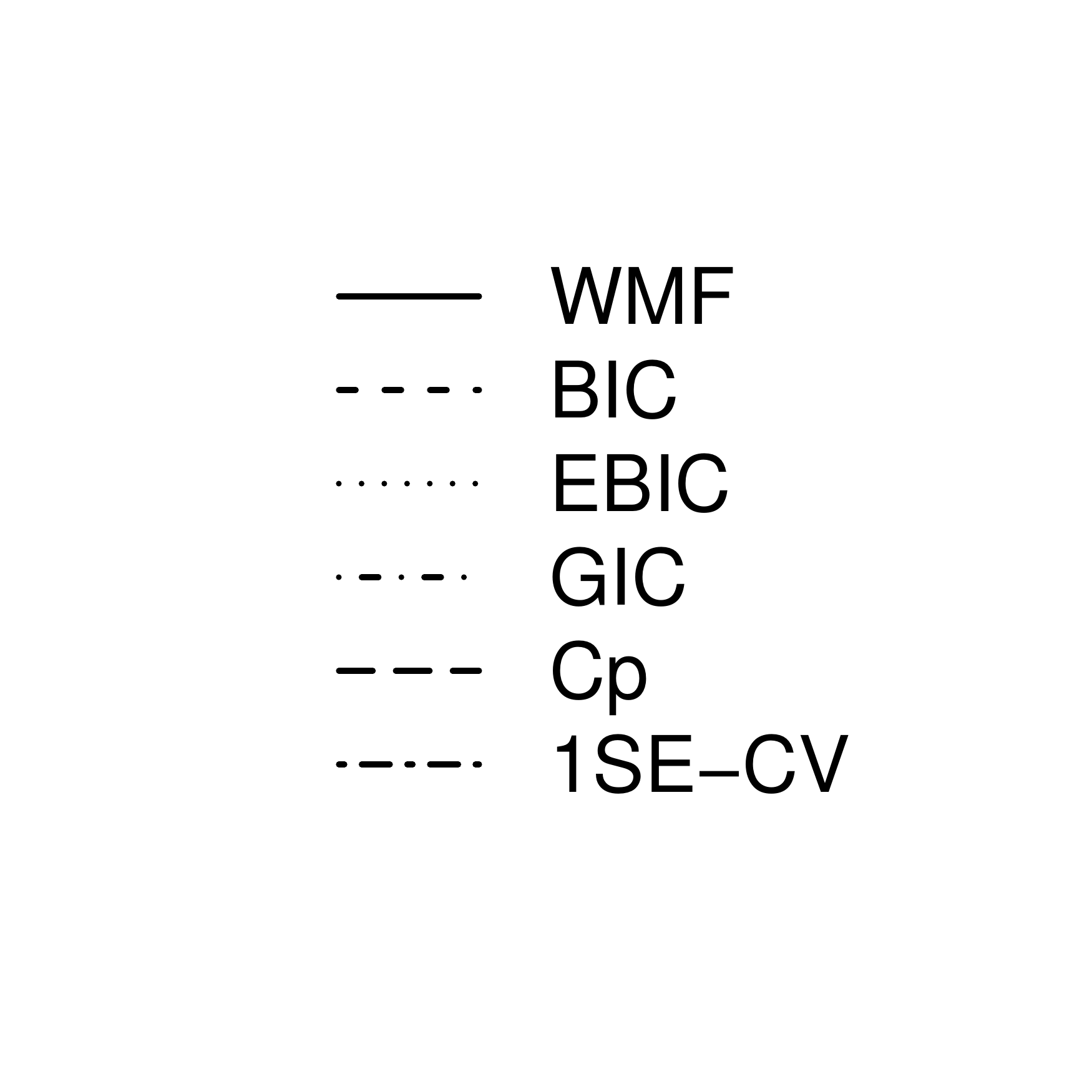}
\caption{Results of the GLM example: (a) correctly specified models; (b) average number of false non-zeros; (c) average number of false zeros; (d) average value of estimated model sizes.}
\label{fig:glm}
\end{figure}

We demonstrate this extension through one simple example, where binary responses were generated from the logistic regression model 
$$ P(y_i \mid \b x_i) = {1 \over 1+exp(- \b x_i^T \bs \beta)}, \quad i=1, \dots, n,$$
where $\b x_i \overset{iid}{\sim} N_{10}(\b{0}, \bs \Sigma)$, $\bs \Sigma(i, j) = 0.3^{|i-j|}$, and $\bs \beta=(3, 1.5, 0,0,2,0,\dots)_{10}^T$.  Simulation results were averaged over 100 times of replications and summarized in Figure~\ref{fig:glm}.  It shows that the WMF procedure is much more accurate in variable selection and also enjoys a faster convergence rate than other compared methods.

Extension of the adaptive Elastic-Net WMF procedure to GLMs is similar.  Define the adaptive Elastic-Net estimator for GLMs as
\begin{align}
\bs{\hat{\beta}}_{ae} = (1+{\lambda_{n2} \over n}) \bigg\{\arg \min_{\bs \beta} &\sum_{i=1}^n (-y_i \b x_i^T \bs \beta + \phi(\b x_i^T \bs \beta))  \nonumber \\
&+ \lambda_{n2}\sum_{j=1}^{p_n}|\beta_j|^2 + \lambda_{n1}^+ \sum_{j=1}^{p_n} w_j|\beta_j|\bigg\}, 
\label{eqt13}
\end{align}
where $w_j = |\hat{\beta}_{ej}|^{-\gamma}$, $\gamma>0$ and $\bs{\hat{\beta}}_e  = (\hat{\beta}_{e1}, \dots, \hat{\beta}_{ep_n})^T$ is defined in (\ref{eqt13}) with $\hat{w}_j=1$ for all $j$'s. The rest follow the same procedures for extension of the adaptive LASSO WMF procedure.

\section{Ultra-high dimensional data}
\label{sec:uhd}

In this section, we discuss applications of the WMF procedure to ultra-high dimensional data in which $p_n > n$.   \cite{fan2008sure} proposed the sure independence screening (SIS) method for ultra-high dimensional data to reduce their dimensionality to a moderate scale, $d_n$, s.t. $d_n < n$.   Afterwards a lower dimensional estimation method such as the SCAD can be applied to the reduced data.  This process is called SIS+SCAD.  Under some regularity conditions, they showed that the SIS has an exponentially small probability to omit true features and the SIS+SCAD retains the oracle properties if $d_n = o_p(n^{1/3})$.  By replacing the SCAD with adaptive Elastic-Net, the new procedure is refered to as SIS+AEnet \cite{zou2009adaptive}, which holds the oracle properties if $d_n = O_p(n^\varrho), 0 \le \varrho < 1$.  Here we recommend to combine SIS with the WMF procedure when $p_n > n$.  We first use the SIS to reduce the dimensionality to $d_n, d_n < n$, and then apply the WMF procedure to the reduced data.  We call this procedure SIS+WMF.

\begin{corollary}
Suppose conditions for Theorem 1 in \cite{fan2008sure} and Theorem 3 in this paper hold.  Let $d_n = n^\varrho, 0 \le \varrho<1$.  Then the SIS+WMF procedure is variable selection consistent. 	
\label{cor5}
\end{corollary}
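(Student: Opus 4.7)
The plan is to decompose the event $\{\text{SIS+WMF selects } \mathcal{A}\}$ into two pieces that can be controlled separately: first, the screening step must retain all truly active variables; second, the WMF procedure applied to the screened predictors must pick out $\mathcal{A}$. Let $\widehat{\mathcal{D}}_n$ denote the set of $d_n$ indices retained by SIS, and let $M_{r^*}^{(\widehat{\mathcal{D}}_n)}$ denote the model chosen by WMF on the reduced design $\mathbf{X}_{\widehat{\mathcal{D}}_n}$. Then
\begin{equation*}
P(M_{r^*}^{(\widehat{\mathcal{D}}_n)} = \mathcal{A}) \;\ge\; P\!\left(\{M_{r^*}^{(\widehat{\mathcal{D}}_n)} = \mathcal{A}\} \cap \{\mathcal{A} \subset \widehat{\mathcal{D}}_n\}\right),
\end{equation*}
so it suffices to show both $P(\mathcal{A} \subset \widehat{\mathcal{D}}_n) \to 1$ and, conditionally on the sure-screening event, $P(M_{r^*}^{(\widehat{\mathcal{D}}_n)} = \mathcal{A} \mid \mathcal{A} \subset \widehat{\mathcal{D}}_n) \to 1$.

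First I would invoke Theorem 1 of \cite{fan2008sure}: under its regularity conditions the SIS step has the sure screening property, i.e.\ there is a constant $\theta < 1 - 2\varrho$ such that $P(\mathcal{A} \subset \widehat{\mathcal{D}}_n) = 1 - O(\exp(-c n^{1-2\kappa}))$ for appropriate $\kappa$, which in particular tends to $1$. Since the choice $d_n = n^\varrho$ with $0 \le \varrho < 1$ is exactly the regime allowed in \cite{zou2009adaptive}, the reduced design has dimension strictly less than $n$ and satisfies the growth rate that Theorem \ref{thm3} (and its underlying conditions (A1)--(A2), (A5)--(A6)) assumes for $p_n$.

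Second, I would argue that on the event $\{\mathcal{A} \subset \widehat{\mathcal{D}}_n\}$ the WMF procedure run on $\mathbf{X}_{\widehat{\mathcal{D}}_n}$ falls precisely within the framework of Theorem \ref{thm3}. The key point is that once $\mathcal{A} \subset \widehat{\mathcal{D}}_n$, the true regression structure on the reduced design is identical (all active coefficients are preserved), and the inactive indices inside $\widehat{\mathcal{D}}_n$ play the role of the zero-coefficient predictors whose MFs are driven down by conditioning on dimension, while inactive indices outside $\widehat{\mathcal{D}}_n$ simply do not appear. One must verify that the regularity conditions (A1), (A2), (A5), (A6) transfer to the reduced design of dimension $d_n$: (A2), (A5), (A6) are inherited directly, and (A1) on the eigenvalues of $n^{-1}\mathbf{X}_{\widehat{\mathcal{D}}_n}^T\mathbf{X}_{\widehat{\mathcal{D}}_n}$ follows from the corresponding assumption on the full design because any principal submatrix of a matrix satisfying the lower/upper eigenvalue bound inherits the same bounds. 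Then Theorem \ref{thm3} gives $P(M_{r^*}^{(\widehat{\mathcal{D}}_n)} = \mathcal{A} \mid \widehat{\mathcal{D}}_n) \to 1$ uniformly over the screening event.

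Finally, combining the two bounds via
\begin{equation*}
P(M_{r^*}^{(\widehat{\mathcal{D}}_n)} = \mathcal{A}) \;\ge\; P(M_{r^*} = \mathcal{A} \mid \mathcal{A} \subset \widehat{\mathcal{D}}_n)\,P(\mathcal{A} \subset \widehat{\mathcal{D}}_n) \;\to\; 1 \cdot 1 = 1
\end{equation*}
gives the claimed consistency. The one subtle step — and the main obstacle — is the conditioning argument: strictly speaking, $\widehat{\mathcal{D}}_n$ is random, so one must be careful that the conclusion of Theorem \ref{thm3} holds \emph{uniformly} over all realizations of the screened set that contain $\mathcal{A}$, rather than for a single fixed design. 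This is handled by observing that Theorem \ref{thm3}'s conditions are quantitative (eigenvalue bounds, rate of $p_n$, identifiability bias $\|\bs{\beta}_{0\nu^c}\|$), each of which holds uniformly on the screening event, so the convergence $\to 1$ can be made uniform and the conditioning is harmless.
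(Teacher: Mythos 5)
Your proposal is correct and follows exactly the route the paper intends: the paper's own ``proof'' is a one-line remark that the corollary is a direct consequence of the sure screening property (Theorem 1 of \cite{fan2008sure}) combined with Theorem \ref{thm3}, which is precisely your decomposition into the event $\{\mathcal{A} \subset \widehat{\mathcal{D}}_n\}$ and the WMF consistency on the reduced design. Your write-up simply supplies the details the paper omits (eigenvalue inheritance for principal submatrices, uniformity over realizations of the screened set), and these are sound.
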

Note that Corollary \ref{cor5} is a direct conclusion of Theorem 1 in \cite{fan2008sure} and Theorem 3 in this paper.

\section{Discussion}
We proposed a prediction-weighted maximal frequency procedure to estimate the amount of regularization for adaptive LASSO and adaptive Elastic-Net. Asymptotic properties were studied with a diverging $p_n$. 

Central idea of the WMF procedure is the importance of conditioning on dimension, which mitigates overfitting. Underfitting can then be handled by using prediction-based weights estimated by multi-fold cross-validation. This simple recipe can also be applied to other regularization methods, say the SCAD and fused LASSO, making the WMF procedure a unified model selection criterion in regularization problems. However, asymptotic properties have yet to be studied, which will be a future topic.

\appendix
\numberwithin{remark}{section}

\section{Proofs}
\label{AP}

\begin{proof}[Proof of Lemma 1] 
Assume $\mid\beta_i\mid > \mid\beta_j\mid$ and $\mid\beta_i\mid-\mid\beta_j\mid=m\sigma,\ m>0$. We have 4 cases for $\beta_i, \beta_j$
\begin{eqnarray}
\beta_i=
\begin{cases}
\beta_j+m\sigma\quad \mathrm{or}\quad -\beta_j-m\sigma, &\quad \beta_j \ge 0, \\
-\beta_j+m\sigma\quad \mathrm{or}\quad \beta_j-m\sigma, &\quad \beta_j < 0.  \nonumber
\end{cases}
\end{eqnarray}

 Let $Z_i=\beta_i+{x}_i^T\varepsilon \sim N(\beta_i, \sigma^2)$ and $Z_j=\beta_j+{x}_j^T\varepsilon \sim N(\beta_j, \sigma^2)$. We have
\begin{align}
&P(\mid Z_i\mid \le z)=\Phi(\frac{z-\beta_i}{\sigma})+\Phi(\frac{z+\beta_i}{\sigma})-1,\quad z \ge 0, \nonumber \\ 
&P(\mid Z_j\mid \le z)=\Phi(\frac{z-\beta_j}{\sigma})+\Phi(\frac{z+\beta_j}{\sigma})-1,\quad z \ge 0. \nonumber
\end{align}

Consider case 1:\quad $\beta_j \ge 0$ and $\beta_i=\beta_j+m\sigma$, $m > 0$.

Let $k$ be a positive constant. The point $\beta_j+k\sigma$ separates the domain of $Z_i$ and $Z_j$ into two parts: $(-\infty,\; \beta_j+k\sigma]$ and $(\beta_j+k\sigma,\; \infty]$. The cumulative probabilities of $Z_i$ and $Z_j$ in first part of the domain are respectively
\begin{align}
&P(\mid Z_i\mid \le \beta_j+k\sigma)=\Phi(k-m)+\Phi(m+k+\frac{2\beta_j}{\sigma})-1, \nonumber \\
&P(\mid Z_j\mid \le \beta_j+k\sigma)=\Phi(k)+\Phi(k+\frac{2\beta_j}{\sigma})-1. \nonumber
\end{align}

The probability $P(\mid Z_i\mid>\mid Z_j\mid)$ can then be calculated from
\begin{align}
P(\mid Z_i\mid>\mid Z_j\mid) & = {1/2}P(\mid Z_i\mid \le \beta_j+k\sigma, \mid Z_j\mid \le \beta_j+k\sigma) \nonumber \\
& + {1/2}P(\mid Z_i\mid > \beta_j+k\sigma, \mid Z_j\mid > \beta_j+k\sigma) \nonumber \\
& + P(\mid Z_i\mid > \beta_j+k\sigma, \mid Z_j\mid \le \beta_j+k\sigma). \nonumber
\end{align}

After some simple deductions, we get,
\begin{align}
&P(\mid Z_i\mid>\mid Z_j\mid) \nonumber \\
= &\frac{1}{2} \left \{\Phi(k+\frac{2\beta_j}{\sigma})+\Phi(k)-\Phi(k-m)-\Phi(k+m+\frac{2\beta_j}{\sigma}) \right \} +\frac{1}{2}. 
\label{eq20}
\end{align}

If $m=0$ i.e. $\mid\beta_i\mid=\mid\beta_j\mid$, from (\ref{eq20}) we have 
$$P(\mid Z_i\mid>\mid Z_j\mid)=\frac{1}{2}.$$

However if $m>0$ i.e. $\mid\beta_i\mid>\mid\beta_j\mid$, 
\begin{align}
&P(\mid Z_i\mid>\mid Z_j\mid) \nonumber \\
= &\frac{1}{2} \left \{ \int_{k-m}^k \frac{1}{\surd{(2\pi)}}e^{-x^2/2}\, dx - \int_{k+2\beta_j/\sigma}^{k+2\beta_j/\sigma+m} \frac{1}{\surd{(2\pi)}}e^{-x^2/2}\, dx \right \} + \frac{1}{2}.
\label{eq21}
\end{align}

Since $m, k, \beta_j, \sigma>0$, we have 
$$\max \left \{ \mid k-m\mid,\; \mid k\mid \right \} < \max \left \{ \mid k+(2\beta_j)/{\sigma}\mid,\; \mid k+(2\beta_j)/{\sigma}+m\mid \right \}.$$
Note that two integrals in (\ref{eq21}) have equal length of the integral intervals. Moreover the integral function is an monotonically decreasing function of $x$ for $x \ge 0$, and monotonically increasing for $x < 0$. Hence 
\begin{eqnarray}
\int_{k-m}^k \frac{1}{\surd{(2\pi)}}e^{-x^2/2}\, dx - \int_{k+2\beta_j/\sigma}^{k+2\beta_j/\sigma+m} \frac{1}{\surd{(2\pi)}}e^{-x^2/2}\, dx >0.
\label{eq22}
\end{eqnarray}
Combining (\ref{eq21}) with (\ref{eq22}), we get 
$$ P(\mid Z_i\mid>\mid Z_j\mid) > \frac{1}{2}.$$  
Other three cases can be proved in the same way.  We avoid the repetitions here. 
\end{proof}

\begin{proof}[Proof of Theorem 1]

By \cite{zou2009adaptive},  $\bs{\hat{\beta}}_a$ enjoys the oracle properties under certain regularity conditions.  And $\bs{\hat{\beta}}_a^*$ is a paired bootstrap analog of $\bs{\hat{\beta}}_a$ by replacing $(\X, \y)$ with $(\X^*, \y^*)$ in estimation.  To simplify notations in the proof, we drop the subscript `$a$' in $\bs{\hat{\beta}}_a$ and $\bs{\hat{\beta}}_a^*$. 

By the KKT regularity conditions, $\bs{\hat{\beta}}^*$ is the unique solution of adaptive LASSO given $(\b{X}^*, \b{y}^*)$ if
\begin{align}
\begin{cases}
\X_j^{*T} (\y^* - \X^* \bs{\hat{\beta}}^*) = \lambda_n \omega_j sgn(\hat{\beta}_j^*), \quad \hat{\beta}_j^* \neq 0 \\
|\X_j^{*T}(\y^* - \X^* \bs{\hat{\beta}}^*)| < \lambda_n \omega_j, \quad \hat{\beta}_j^* = 0
\end{cases}
\label{eq1}
\end{align}
where $\X_j^*$ is the $j$th column of $\b{X}^*$ and 

\begin{align*}
sgn(x) = 
\begin{cases}
1, \quad x > 0, \\
0, \quad x = 0, \\
-1, \quad x < 0.
\end{cases}
\end{align*}

Let $\b{\tilde{s}}_{\A} = (\omega_j sgn(\hat{\beta}_j), j \in \A)^T$ and $\bs{\hat{\beta}}_{\A}^* = (\X_{\A}^{*T} \X_{\A}^*)^{-1} (\X_{\A}^{*T} \y^* - \lambda_n \b{\tilde{s}}_{\A})$.  We show that $(\bs{\hat{\beta}}_{\A}^*, \b{0})$ satisfies (\ref{eq1}) with probability tending to 1, which is equivalent to prove
\begin{align}
\begin{cases}
sgn(\hat{\beta}_j) (\hat{\beta}_j - \hat{\beta}_j^*) < |\hat{\beta}_j|, \quad j \in \A, \\
|\X_j^{*T}(\y^* - \X_{\A}^* \bs{\hat{\beta}}_{\A}^*)| < \lambda_n \omega_j, \quad j \notin \A,
\end{cases}
\label{eq2}
\end{align} 
where the first inequation implies $sgn(\bs{\hat{\beta}}_{\A}^*) =  sgn(\bs{\hat{\beta}}_{\A})$. 

Note that $\omega_j = |\tilde{\beta}_j|^{-\gamma}$, where $\bs{\tilde{\beta}} = (\tilde{\beta}_1, \dots, \tilde{\beta}_{p_n})^T$ is an OLS or best ridge estimate of $\bs{\beta}_0$,  
$$ \bs{\tilde{\beta}}(\lambda_{n2}) = \arg\min_{\bs{\beta}} \|\y - \X\bs{\beta}\|^2 + \lambda_{n2}\sum_{j=1}^{p_n}|\beta_j|^2. $$
By Theorem 3.1 in \cite{zou2009adaptive}, 
\begin{align}
E\|\bs{\tilde{\beta}}(\lambda_{n2}) - \bs{\beta}_0\|^2 \le 2{\lambda_{n2}^2\|\bs{\beta}_0\|^2 + np_nD\sigma^2 \over (nd+\lambda_{n2})^2} = O_p\left({p_n \over n}\right)
\label{eq18}
\end{align}
under assumption that $\lim_{n \to \infty}{\lambda_{n2} \over \sqrt{n}} = 0$.  It is satisfied automatically for the OLS estimate.

Denote $\x_{i\A}^*$ the $i$th row of $\X_{\A}^*$, and $\otimes$ the element-wise product.  We have 
\begin{align*}
\bs{\hat{\beta}}_{\A}^* - \bs{\hat{\beta}}_{\A} &= (\X_{\A}^{*T} \X_{\A}^*)^{-1} (\X_{\A}^{*T} \y^* - \X_{\A}^{*T} \X_{\A}^* \bs{\hat{\beta}}_{\A} - \lambda_n \b{\tilde{s}}_{\A}) \\
&= (\X_{\A}^{T} \X_{\A})^{-1} \left [\sum_{i=1}^n \x_{i\A}^* (y_i^* - \x_{i\A}^{*T} \bs{\hat{\beta}}_{\A}) - \lambda_n \bs{\omega}_{\A} \otimes sgn(\bs{\hat{\beta}}_{\A}) \right] (1+o_p(1)).
\end{align*}

Hence under conditions (A1) and (A5), 
\begin{align*}
E^*\|\bs{\hat{\beta}}_{\A}^* - \bs{\hat{\beta}}_{\A}\|^2 &\le {E^* \left \|\sum_{i=1}^n \left[\x_{i\A}^* (y_i^* - \x_{i\A}^{*T} \bs{\hat{\beta}}_{\A}) - {\lambda_n \bs{\omega}_{\A} \over n} \otimes sgn(\bs{\hat{\beta}}_{\A}) \right] \right \|^2 \over \zeta_{min}^2 (\X_{\A}^{T} \X_{\A})} \\
&= {\sum_{i=1}^n E^* \left\| \x_{i\A}^* (y_i^* - \x_{i\A}^{*T} \bs{\hat{\beta}}_{\A}) - {\lambda_n \bs{\omega}_{\A} \over n} \otimes sgn(\bs{\hat{\beta}}_{\A}) \right \|^2 \over \zeta_{min}^2 (\X_{\A}^{T} \X_{\A})} \\
&= {\sum_{i=1}^n \left\| \x_{i\A} (y_i - \x_{i\A}^{T} \bs{\hat{\beta}}_{\A}) - {\lambda_n \bs{\omega}_{\A} \over n} \otimes sgn(\bs{\hat{\beta}}_{\A}) \right \|^2 \over \zeta_{min}^2 (\X_{\A}^{T} \X_{\A})} \\
&\le {1 \over (nd)^2} \left[ \sum_{i=1}^n 2 \x_{i\A}^T \x_{i\A} (y_i - \x_{i\A}^{T} \bs{\hat{\beta}}_{\A})^2 + {2 \lambda_n^2 \|\bs{\omega}_{\A}\|^2 \over n} \right] \\
&\le {2p_0D\sigma^2 \over nd^2} + {2\lambda_n^2\|\bs{\omega}_{\A}\|^2 \over n^3d^2}. 
\end{align*}

Let $\psi = \min_{j \in \A}|\beta_{0j}|$, $\tilde{\psi} = \min_{j \in \A}|\tilde{\beta}_j|$ and $\hat{\psi} = \min_{j \in \A}|\hat{\beta}_j|$.  Under conditions (A1)--(A3) and (A5), the first inequation in (\ref{eq2}) can be proved by 
\begin{align*}
&P^* \left\{ \exists j \in \A, sgn(\hat{\beta}_j) (\hat{\beta}_j - \hat{\beta}_j^*) \ge |\hat{\beta}_j| \right\} \\
\le &\sum_{j \in \A} P^* \left\{ sgn(\hat{\beta}_j) (\hat{\beta}_j - \hat{\beta}_j^*) \ge |\hat{\beta}_j|, \tilde{\psi}>\psi/2, \hat{\psi}>\psi/2 \right\} \\
&+ P(\tilde{\psi} \le \psi/2) + P(\hat{\psi} \le \psi/2) + P(\tilde{\psi} \le \psi/2, \hat{\psi} \le \psi/2) \\
\le & {4E^*\big(\|\bs{\hat{\beta}}_{\A}^* - \bs{\hat{\beta}}_{\A}\|^2 I(\tilde{\psi}>\psi/2)\big) \over \psi^2} + c_1 + c_2 + \min\{c_1, c_2\} \\
\le & {8 \over \psi^2} \left( {p_0D\sigma^2 \over nd^2} + {\lambda_n^2p_0(\psi/2)^{-2\gamma} \over n^3d^2} \right) + c_1 + c_2 + \min\{c_1, c_2\} \\
= &O_p\left({p_0 \over n\psi^2}\right) + o_p\bigg(\Big({\lambda_n \over \sqrt{n}\psi^\gamma}\Big)^2{p_0 \over n\psi^2}\bigg) + c_1 + c_2 +\min\{c_1, c_2\} \\
\to &0,    
\end{align*}
where
$$ c_1 \le P(\|\bs{\tilde{\beta}} - \bs{\beta}_0\| \ge \psi/2) \le {4E\|\bs{\tilde{\beta}} - \bs{\beta}_0\|^2 \over \psi^2}. $$
By (\ref{eq18}), it has
$$ c_1 \le 8{\lambda_{n2}^2\|\bs{\beta}_0\|^2 + np_nD\sigma^2 \over \psi^2(nd+\lambda_{n2})^2} = O_p\left({p_n \over n\psi^2}\right) \to 0$$
Similarly, 
\begin{align*}
c_2 &\le P(\|\bs{\hat{\beta}}_{\A} - \bs{\beta}_{0\A}\| \ge \psi/2) \le {4E\|\bs{\hat{\beta}}_{\A} - \bs{\beta}_{0\A}\|^2I(\tilde{\psi}>\psi/2) \over \psi^2} + c_1.
\end{align*}
By Theorem 3.1 in \cite{zou2009adaptive},  
\begin{align}
c_2 &\le 16{np_nD\sigma^2 + \lambda_n^2p_0(\psi/2)^{-2\gamma} \over \psi^2n^2d^2} + c_1 \nonumber \\
&= O_p\left({p_n \over n\psi^2}\right) + O_p\bigg(\Big({\lambda_n \over \sqrt{n}\psi^\gamma}\Big)^2{p_0 \over n\psi^2}\bigg) \nonumber \\
&\to 0.
\label{eq23}
\end{align}

For proof of the second inequation in (\ref{eq2}), it suffices to show
$$ P^* \left\{ \exists j \notin \A, |\X_j^{*T}(\y^* - \X_{\A}^* \bs{\hat{\beta}}_{\A}^*)| \ge \lambda_n \omega_j \right\} \to 0. $$

Since 
$$ |\X_j^{*T}(\y^* - \X_{\A}^* \bs{\hat{\beta}}_{\A}^*)| \le |\X_j^{*T}(\y^* - \X_{\A}^* \bs{\hat{\beta}}_{\A})| + |\X_j^{*T}\X_{\A}^*(\bs{\hat{\beta}}_{\A} - \bs{\hat{\beta}}_{\A}^*)|, $$
it follows that 
\begin{align*}
&P^* \left\{ \exists j \notin \A, |\X_j^{*T}(\y^* - \X_{\A}^* \bs{\hat{\beta}}_{\A}^*)| \ge \lambda_n \omega_j \right\} \\
\le &\sum_{j \notin \A} P^* \left\{ |\X_j^{*T}(\y^* - \X_{\A}^* \bs{\hat{\beta}}_{\A})| \ge (1-\kappa) \lambda_n \omega_j \right\}  \\
&+ \sum_{j \notin \A} P^* \left\{ |\X_j^{*T}\X_{\A}^*(\bs{\hat{\beta}}_{\A} - \bs{\hat{\beta}}_{\A}^*)| \ge \kappa \lambda_n \omega_j \right\} \\
= &B_1 + B_2,
\end{align*}
where $\kappa$, $0<\kappa<1$, is a constant.

For $B_1$, 
\begin{align*}
&\sum_{j \notin \A} E^* |\X_j^{*T}(\y^* - \X_{\A}^* \bs{\hat{\beta}}_{\A})|^2 = \sum_{j \notin \A} E^* \left|\sum_{i=1}^n x_{ij}^*(y_i^* - \x_{i\A}^{*T}\bs{\hat{\beta}}_{\A}) \right|^2 \\
= &\sum_{j \notin \A} E^* \left[ \sum_{i=1}^n x_{ij}^{*2}(y_i^* - \x_{i\A}^{*T}\bs{\hat{\beta}}_{\A})^2 + \sum_{i \neq k} x_{ij}^*(y_i^* - \x_{i\A}^{*T}\bs{\hat{\beta}}_{\A}) x_{kj}^*(y_k^* - \x_{k\A}^{*T}\bs{\hat{\beta}}_{\A}) \right] \\
= &\sum_{j \notin \A} \left\{ \sum_{i=1}^n x_{ij}^{2}(y_i - \x_{i\A}^{T}\bs{\hat{\beta}}_{\A})^2 + n(n-1)\left[ {1 \over n} \sum_{i=1}^n x_{ij}(y_i - \x_{i\A}^{T}\bs{\hat{\beta}}_{\A}) \right]^2 \right\} \\
= &np_{\A^c}\sigma^2 + {n-1 \over n} \|\X_{\A^c}^{T}(\y - \X_{\A} \bs{\hat{\beta}}_{\A})\|^2 \\
\le &np_{\A^c}\sigma^2 + (n-1)p_{\A^c}D\sigma^2, 
\end{align*}
where $p_{\A^c}$ indicates the size of $\A^c$.  By (\ref{eq18}), $\forall j \in \A^c$, $E|\tilde{\beta}_j|^2 \le E\|\bs{\tilde{\beta}} - \bs{\beta}_0\|^2 = O_p\left({p_n \over n}\right)$, which indicates $|\tilde{\beta}_j| \le O_p\left({p_n \over n}\right)^{1/2}$.  Then under condition (A3), $B_1$ fulfills
\begin{align*}
B_1 &\le \sum_{j \notin \A} {E^* |\X_j^{*T}(\y^* - \X_{\A}^* \bs{\hat{\beta}}_{\A})|^2  \over (1-\kappa)^2\lambda_n^2\omega_j^2} \\
&\le {np_{\A^c}\sigma^2 + (n-1)p_{\A^c}D\sigma^2 \over (1-\kappa)^2\lambda_n^2O_p\left({p_n \over n}\right)^{-\gamma}} \\
&= O_p\left({n \over \lambda_n^2 n^{(1-\varrho)(1+\gamma)-1}}\right) \\
&\to 0.
\end{align*}

Also since 
\begin{align*} 
&\sum_{j \notin \A} E^* \big(|\X_j^{*T}\X_{\A}^*(\bs{\hat{\beta}}_{\A} - \bs{\hat{\beta}}_{\A}^*)|^2I(\tilde{\psi}>\psi/2)\big) \\
= &E^* \big(\| \X_{\A^c}^{*T}\X_{\A}^*(\bs{\hat{\beta}}_{\A} - \bs{\hat{\beta}}_{\A}^*)\|^2 I(\tilde{\psi}>\psi/2)\big) \\ 
\le &(nD)^2 E^*\big(\|\bs{\hat{\beta}}_{\A} - \bs{\hat{\beta}}_{\A}^*\|^2 I(\tilde{\psi}>\psi/2)\big) (1+o_p(1)) \\
\le &\left({2np_0D^3\sigma^2 \over d^2} + {2\lambda_n^2p_0(\psi/2)^{-2\gamma}D^2 \over nd^2}\right) (1+o_p(1)), 
\end{align*}
we have for $B_2$, 
\begin{align*}
B_2 &\le \sum_{j \notin \A} {E^*\big(|\X_j^{*T}\X_{\A}^*(\bs{\hat{\beta}}_{\A} - \bs{\hat{\beta}}_{\A}^*)|^2I(\tilde{\psi}>\psi/2)\big) \over \kappa^2\lambda_n^2\omega_j^2} + c_1 \\
&\le \left( {2np_0D^3\sigma^2 \over \lambda_n^2O_p\left({p_n \over n}\right)^{-\gamma}\kappa^2d^2} + {2\lambda_n^2p_0(\psi/2)^{-2\gamma}D^2 \over n\lambda_n^2O_p\left({p_n \over n}\right)^{-\gamma}\kappa^2d^2} \right) (1+o_p(1)) + O_p\left({p_n \over n\psi^2}\right)\\
&\le O_p\left({n \over \lambda_n^2 n^{(1-\varrho)(1+\gamma)-1}}\right) + O_p\left({p_0 \over n}\big({p_n \over n\psi^2}\big)^\gamma\right) + O_p\left({p_n \over n\psi^2}\right) \\
&\to 0. 
\end{align*}

Hence (\ref{eq2}) is proved.  We have shown that $\bs{\hat{\beta}}^*=(\bs{\hat{\beta}}_{\A}^*, \b{0})$ and $sgn(\bs{\hat{\beta}}_{\A}^*) = sgn(\bs{\hat{\beta}}_{\A})$ with probability tending to 1, where $\bs{\hat{\beta}}^*$ is the adaptive LASSO estimate using paired bootstrap data.  Also it can be deduced from (\ref{eq23}) that \\
$P(\min_{j \in \A} |\hat{\beta}_j| > 0) \to 1$.  To sum up, we get $\lim_{n \to \infty} P^*(\mathcal{A}_n^*= \A \mid \lambda_n) = 1$.

We now prove $\lim_{n \to \infty} P^*(\mathcal{A}_n^*= \M_r \mid \lambda_n') < 1$, where $\M_r$ is any $r$-dimensional model, $p_0<r<p_n$, and $\lambda_n'$ is a tuning parameter such that the adaptive LASSO estimator under $\lambda_n'$ is of dimension $r$.  Then $\lambda_n' < \lambda_n$, hence $\lambda_n'/{\sqrt{n}} \to 0$.  If it also satisfies $\lim_{n \to \infty}{\lambda_n'^{2} n^{(1-\varrho)(1+\gamma)-1} \over n} \to \infty$,  we would have $P^*(\mathcal{A}_n^*= \A \mid \lambda_n')=1$ based on previous proof, which contradicts with the definition of $\lambda_n'$.  Therefore,  
$$\lim_{n \to \infty}{\lambda_n'^{2} n^{(1-\varrho)(1+\gamma)-1} \over n} < \infty.$$

To prove $\lim_{n \to \infty} P^*(\mathcal{A}_n^*= \M_r \mid \lambda_n') < 1$, by the KKT regularity conditions it suffices to show 
$$ P^* \left\{ \forall j \notin \M_r, |\X_j^{*T}(\y^* - \X^* \bs{\hat{\beta}}^*)| < \lambda_n' \omega_j \right\} < 1, $$
or equivalently 
\begin{align}
P^* \left\{ \exists j \notin \M_r, |\X_j^{*T}(\y^* - \X^* \bs{\hat{\beta}}^*)| \ge \lambda_n' \omega_j \right\} > 0. 
\label{eq3}
\end{align}

Following previous proof, we get 
\begin{align*}
&P^* \left\{ \exists j \notin \M_r, |\X_j^{*T}(\y^* - \X^* \bs{\hat{\beta}}^*)| \ge \lambda_n' \omega_j \right\} \\
\le &\sum_{j \notin \M_r} P^* \left\{ |\X_j^{*T}(\y^* - \X^* \bs{\hat{\beta}})| \ge (1-\kappa) \lambda_n' \omega_j \right\}  \\
&+ \sum_{j \notin \M_r} P^* \left\{ |\X_j^{*T}\X^*(\bs{\hat{\beta}} - \bs{\hat{\beta}}^*)| \ge \kappa \lambda_n' \omega_j \right\} \\
= &B_1 + B_2. 
\end{align*}
However, 
\begin{align*}
B_1 \le {np_{\M_r^c}\sigma^2 + (n-1)p_{\M_r^c}D\sigma^2 \over (1-\kappa)^2\lambda_n'^2O_p\left({p_n \over n}\right)^{-\gamma}} = O_p\left({n \over \lambda_n'^2 n^{(1-\varrho)(1+\gamma)-1}}\right) \not \to 0,   
\end{align*}
as $n \to \infty$. Similarly, $\lim_{n \to \infty}B_2 \not \to 0$.  Then (\ref{eq3}) holds.  
\end{proof}

\begin{lemma}
Suppose conditions (A1) and (A5) hold and $\lim_{n \to \infty} \lambda_{n2} / \sqrt{n} = 0$ in ridge estimates. Then, 
$$ E^* [\X^T\bs{\v}^*] = \b{0}, \quad \lim_{n \to \infty} \Var^* [\X^T\bs{\v}^*] = \X^T\X \sigma^2 \ \text{with probability 1}. $$
\label{lem2}
\end{lemma}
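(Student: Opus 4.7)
The plan is to compute the conditional mean and variance directly from the definition of the bootstrap residual distribution, then invoke the strong law of large numbers together with the consistency of the ridge pilot estimator to pin down the almost-sure limit. Recall that the bootstrap errors $\v_i^*$ are drawn i.i.d.\ from the empirical distribution putting mass $n^{-1}$ on each centered residual $\hat{\v}_i = \hat{\v}_{0i} - \bar{\v}_0$, where $\hat{\v}_{0i} = y_i - \x_i^T \bs{\hat{\beta}}$ and $\bs{\hat{\beta}} = \bs{\tilde{\beta}}(\lambda_{n2})$ is the ridge estimate.

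First I would handle the mean. Since the residuals have been centered by construction,
$E^*[\v_i^*] = \frac{1}{n}\sum_{i=1}^n \hat{\v}_i = 0$, so $E^*[\X^T \bs{\v}^*] = \X^T E^*[\bs{\v}^*] = \b{0}$ exactly, with no asymptotics required. Next, because the $\v_i^*$ are conditionally i.i.d.\ with variance $\hat{\sigma}_*^2 := \frac{1}{n}\sum_{i=1}^n \hat{\v}_i^2$, one has $\Var^*[\bs{\v}^*] = \hat{\sigma}_*^2 \b{I}_n$ and therefore
$\Var^*[\X^T \bs{\v}^*] = \X^T \X \,\hat{\sigma}_*^2$. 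The problem reduces to showing $\hat{\sigma}_*^2 \to \sigma^2$ almost surely.

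For this I would write $\hat{\sigma}_*^2 = \frac{1}{n}\sum_i \hat{\v}_{0i}^2 - \bar{\v}_0^2$, and substitute $\hat{\v}_{0i} = \v_i - \x_i^T(\bs{\hat{\beta}} - \bs{\beta}_0)$ to obtain
\[
\frac{1}{n}\sum_{i=1}^n \hat{\v}_{0i}^2 \;=\; \frac{1}{n}\|\bs{\v}\|^2 \;-\; \frac{2}{n}\bs{\v}^T \X(\bs{\hat{\beta}} - \bs{\beta}_0) \;+\; \frac{1}{n}(\bs{\hat{\beta}} - \bs{\beta}_0)^T \X^T \X (\bs{\hat{\beta}} - \bs{\beta}_0).
\]
The first term tends to $\sigma^2$ almost surely by Kolmogorov's SLLN applied to the i.i.d.\ sequence $\v_i^2$ (condition (A5)). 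The quadratic form is bounded above by $D\|\bs{\hat{\beta}}-\bs{\beta}_0\|^2$ via (A1), which vanishes by the ridge-risk bound $E\|\bs{\hat{\beta}}-\bs{\beta}_0\|^2 = O(p_n/n) \to 0$ available from (\ref{eq18}) under $\lambda_{n2}/\sqrt{n}\to 0$. A Cauchy--Schwarz split controls the cross term by the product of $\|\bs{\hat{\beta}}-\bs{\beta}_0\|$ and $\|\X^T\bs{\v}/n\|$, the latter itself tending to zero by another application of the SLLN combined with (A1). Finally, $\bar{\v}_0 = \bar{\v} + \bar{\x}^T(\bs{\beta}_0 - \bs{\hat{\beta}})$ is $o(1)$ a.s., so $\bar{\v}_0^2\to 0$.

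The main obstacle is upgrading the ridge error bound from convergence in mean (or probability) to almost sure convergence in the regime where $p_n = n^\varrho$. I would handle this by invoking Markov's inequality together with the Borel--Cantelli lemma along a sufficiently thinned subsequence, or alternatively by replacing the a.s.\ statement by its in-probability analog, which is what condition (\ref{eq18}) directly provides and which suffices for every downstream use of this lemma in the proofs of Theorems \ref{thm1} and \ref{thm2}.
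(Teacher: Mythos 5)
Your proposal is correct and follows essentially the same route as the paper's proof: compute $E^*$ and $\Var^*$ directly from the centered empirical residual distribution, reduce everything to showing $\frac{1}{n}\sum_i(\hat{\v}_{0i}-\bar{\v}_0)^2\to\sigma^2$, and control the discrepancy from $\frac{1}{n}\sum_i\v_i^2$ by the ridge-risk bound $E\|\bs{\hat{\beta}}-\bs{\beta}_0\|^2=O(p_n/n)$ from (\ref{eq18}) together with (A1) and the SLLN. Your closing remark about the mismatch between the in-probability rate supplied by (\ref{eq18}) and the ``with probability 1'' conclusion identifies a real gap that the paper's own proof silently elides (its final displays mix $O_p$ terms with almost-sure claims), so your Borel--Cantelli/subsequence fix is a genuine improvement in rigor rather than a deviation in method.
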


\begin{proof}
Assume $\bs{\hat{\beta}}$ is a ridge estimate of $\bs{\beta}_0$, 
$$ \bs{\hat{\beta}} = \arg \min_{\bs{\beta}} \|\y - \X\bs{\beta}\|^2 + \lambda_{n2}\|\bs{\beta}\|^2. $$
By (\ref{eq18}), $E\|\bs{\hat{\beta}} - \bs{\beta}_0\|^2 \le O_p\big({p_n \over n}\big)$. Calculate centered residuals $\bs{\hat{\v}}$,
$$ \bs{\hat{\v}}_0 = \y - \X\bs{\hat{\beta}}, \quad  \bs{\hat{\v}} = \bs{\hat{\v}}_0 - \bs{\bar{\v}}_0, $$
where each entry of $\bs{\bar{\v}}_0$, marked as $\bar{\v}_0$, is the mean of  $\bs{\hat{\v}}_0$.  Denote $\bs{\v}^* = (\v_1^*, \dots, \v_n^*)^T$ an i.i.d bootstrap sample from the empirical distribution that puts mass $n^{-1}$ on each entry of $\bs{\hat{\v}}$.

By definition, we have 
$$ E^* [\X^T\bs{\v}^*] = \X^T E^*(\bs{\v}^*) = \b{0}, $$
$$ \Var^* [\X^T\bs{\v}^*] = \X^T\X \Var^*(\v_1^*) = \X^T\X E^*(\v_1^{*2}), $$
and 
$$ E^*(\v_1^{*2}) = {1 \over n} \sum_{i=1}^n (\hat{\v}_{0i} - \bar{\v}_{0})^2. $$

In above equation, 
\begin{align*}
\bar{\v}_{0} = {1 \over n} \sum_{i=1}^n \hat{\v}_{0i} = {1 \over n} \sum_{i=1}^n (y_i - \x_i^T\bs{\hat{\beta}}) = {1 \over n} \sum_{i=1}^n \x_i^T(\bs{\beta}_0 - \bs{\hat{\beta}}) + {1 \over n} \sum_{i=1}^n \v_i.
\end{align*}

Moreover, by the sum of squares inequality, 
\begin{align*}
\left| {1 \over n} \sum_{i=1}^n \x_i^T(\bs{\beta}_0 - \bs{\hat{\beta}}) \right| &\le \left\{ {1 \over n} \sum_{i=1}^n \left[ \x_i^T(\bs{\beta}_0 - \bs{\hat{\beta}}) \right]^2 \right\}^{1/2} \\
&= \left\{ {1 \over n} \|\X(\bs{\beta}_0 - \bs{\hat{\beta}})\|^2 \right\}^{1/2} \\
&\le \left\{ {\zeta_{max}(\X^T\X) \over n} \|\bs{\beta}_0 - \bs{\hat{\beta}}\|^2 \right\}^{1/2} \\
&= O_p\Big(\sqrt{{p_n \over n}}\Big). 
\end{align*}

Hence, 
$$ \bar{\v}_{0} = {1 \over n} \sum_{i=1}^n \v_i +  O_p\Big(\sqrt{{p_n \over n}}\Big). $$

Let  
$$ s_n^2 = {1 \over n} \sum_{i=1}^n (\hat{\v}_{0i} - \bar{\v}_{0})^2 \quad \text{and} \quad \sigma_n^2 = {1 \over n} \sum_{i=1}^n (\v_i - \bar{\v})^2, $$
where $\bar{\v} = {1 \over n} \sum_{i=1}^n \v_i$.  We now prove $s_n \to \sigma_n$ asymptotically. 

Note that 
$$ \lim_{n \to \infty} \sigma_n^2 = \lim_{n \to \infty} {1 \over n} \sum_{i=1}^n \v_i^2 - \Big({1 \over n} \sum_{i=1}^n \v_i\Big)^2 = E(\v_i^2) - \left( E(\v_i) \right)^2 = \sigma^2 $$
with probability 1. 

And by the sum of squares inequality,

\begin{align*}
&(s_n - \sigma_n)^2 = \left\{ \left[ {1 \over n} \sum_{i=1}^n (\hat{\v}_{0i} - \bar{\v}_{0})^2 \right]^{1/2} - \left[ {1 \over n} \sum_{i=1}^n (\v_i - \bar{\v})^2 \right]^{1/2} \right\}^2 \\
&= {1 \over n} \sum_{i=1}^n (\hat{\v}_{0i} - \bar{\v}_{0})^2 + {1 \over n} \sum_{i=1}^n (\v_i - \bar{\v})^2 - 2\left[ {1 \over n} \sum_{i=1}^n (\hat{\v}_{0i} - \bar{\v}_{0})^2 \right]^{1 \over 2} \left[ {1 \over n} \sum_{i=1}^n (\v_i - \bar{\v})^2 \right]^{1 \over 2} \\
&\le {1 \over n} \sum_{i=1}^n (\hat{\v}_{0i} - \bar{\v}_{0})^2 + {1 \over n} \sum_{i=1}^n (\v_i - \bar{\v})^2 - {2 \over n}\sum_{i=1}^n (\hat{\v}_{0i} - \bar{\v}_{0})(\v_i - \bar{\v}) \\
&= {1 \over n} \sum_{i=1}^n \left[ (\hat{\v}_{0i} - \bar{\v}_{0}) - (\v_i - \bar{\v}) \right]^2 \\
&= {1 \over n} \sum_{i=1}^n \left[ \hat{\v}_{0i} - \v_i - O_p\Big(\sqrt{{p_n \over n}}\Big) \right]^2 \\
&\le {1 \over n} \|\X(\bs{\beta}_0 - \bs{\hat{\beta}})\|^2 + O_p\Big(\sqrt{{p_n \over n}}\Big) {1 \over \sqrt{n}} \|\X(\bs{\beta}_0 - \bs{\hat{\beta}})\| + O_p\Big({p_n \over n}\Big) \\
&\le {\zeta_{max}(\X^T\X) \over n} \|\bs{\beta}_0 - \bs{\hat{\beta}}\|^2 + O_p\Big(\sqrt{{p_n \over n}}\Big) \sqrt{{\zeta_{max}(\X^T\X) \over n}} \|\bs{\beta}_0 - \bs{\hat{\beta}}\| + O_p({p_n \over n}) \\
&= O_p\Big({p_n \over n}\Big). 
\end{align*}
Then $\lim_{n \to \infty} s_n^2 = \sigma^2$ with probability 1. 
\end{proof}

\begin{proof}[Proof of Theorem 2]
Let $(\X, \y^*)$ be a residual bootstrap sample, where $\y^* = \X\bs{\hat{\beta}} + \bs{\v}^*$ and $\bs{\hat{\beta}}$ is the ridge estimator.  Define  
\begin{align}
\bs{\tilde{\beta}}^* = (1+{\lambda_{n2} \over n}) \left\{ \arg \min_{\bs{\beta}} \|\y^* - \X\bs{\beta}\|^2 + \lambda_{n2}\sum_{j=1}^{p_n} |\beta_j|^2 + \lambda_{n1}^+\sum_{j=1}^{p_n} \omega_j|\beta_j| \right\},
\label{eq4}
\end{align}
where we dropped the subscript `ae' in $\bs{\tilde{\beta}}_{ae}^*$ for simplicity. 

Let 
$$ \bs{\tilde{\beta}}_{\A}^* = \arg \min_{\bs{\beta}} \|\y^* - \X_{\A}\bs{\beta}\| + \lambda_{n2} \sum_{j \in \A}|\beta_j|^2 + \lambda_{n1}^+\sum_{j \in \A} \omega_j|\beta_j|, $$
we prove $((1+{\lambda_{n2} \over n})\bs{\tilde{\beta}}_{\A}^*, \b{0})$ is the solution to (\ref{eq4}) with probability tending to 1.  By the KKT regularity conditions, this suffices to show 
$$ P^* \left\{ \forall j \notin \A, |\X_j^{T}(\y^* - \X_{\A} \bs{\tilde{\beta}}_{\A}^*)| < \lambda_{n1}^+ \omega_j \right\} \to 1, $$
or equivalently 
\begin{align}
P^* \left\{ \exists j \notin \A, |\X_j^{T}(\y^* - \X_{\A} \bs{\tilde{\beta}}_{\A}^*)| \ge \lambda_{n1}^+ \omega_j \right\} \to 0. 
\label{eq5}
\end{align}

Note that $\omega_j = |\hat{\beta}_{ej}|^{-\gamma}$ where $\bs{\hat{\beta}}_e = (\hat{\beta}_{e1}, \dots, \hat{\beta}_{ep_n})^T$ is the Elastic-Net estimator defined in (\ref{eqt4}).  By Theorem 3.1 in \cite{zou2009adaptive},  
\begin{align}
E\|\bs{\hat{\beta}}_e - \bs{\beta}_0\|^2 \le 4{\lambda_{n2}^2\|\bs{\beta}_0\|^2 + np_nD\sigma^2 + \lambda_{n1}^2p_n \over (nd+\lambda_{n2})^2} = O_p\left({p_n \over n}\right)
\label{eq19}
\end{align}
under condition (A4).

Let $\psi = \min_{j \in \A} |\beta_{0j}|$ and $\tilde{\psi} = \min_{j \in \A} |\hat{\beta}_{ej}|$.  Then 
\begin{align*}
&P^* \left\{ \exists j \notin \A, |\X_j^{T}(\y^* - \X_{\A} \bs{\tilde{\beta}}_{\A}^*)| \ge \lambda_{n1}^+ \omega_j \right\} \\
\le &P^* \left\{ \exists j \notin \A, |\X_j^{T}(\y^* - \X_{\A} \bs{\tilde{\beta}}_{\A}^*)| \ge \lambda_{n1}^+ \omega_j,  \tilde{\psi}>\psi/2 \right\} + P\{\tilde{\psi} \le \psi/2\} \\
\le &\sum_{j \notin \A} P^* \left\{ |\X_j^{T}(\y^* - \X_{\A} \bs{\tilde{\beta}}_{\A}^*)| \ge \lambda_{n1}^+ \omega_j,  \tilde{\psi}>\psi/2 \right\} + P\{\tilde{\psi} \le \psi/2\} \\
= &B_1 + B_2.
\end{align*}

By (\ref{eq19}) under condition (A4),
\begin{align*}
B_2 &= P\{\tilde{\psi} \le \psi/2\} \le P\{\|\bs{\hat{\beta}}_e - \bs{\beta}_0\| \ge \psi/2\} \\
&\le {4E\|\bs{\hat{\beta}}_e - \bs{\beta}_0\|^2 \over \psi^2} \le O_p({p_n \over n\psi^2}) \to 0. 
\end{align*}

Also by (\ref{eq19}) $\forall j \in \A^c$, $E|\hat{\beta}_{ej}|^2 \le E\|\bs{\hat{\beta}}_e - \bs{\beta}_0\|^2 = O_p\left({p_n \over n}\right)$, which indicates $|\hat{\beta}_{ej}| \le O_p\left({p_n \over n}\right)^{1/2}$.  Hence 
$$ B_1 \le {O_p\left({p_n \over n}\right)^\gamma \over \lambda_{n1}^{+2}} E^*\bigg\{ \sum_{j \notin \A}|\X_j^{T}(\y^* - \X_{\A} \bs{\tilde{\beta}}_{\A}^*)|^2 I(\tilde{\psi}>\psi/2) \bigg\}. $$

Note that 
\begin{align*}
&E^*\bigg\{\sum_{j \notin \A}|\X_j^{T}(\y^* - \X_{\A} \bs{\tilde{\beta}}_{\A}^*)|^2\bigg\} \\
= &E^*\bigg\{\sum_{j \notin \A} |\X_j^{T}(\X_{\A} \bs{\hat{\beta}}_{\A} + \X_{\A^c} \bs{\hat{\beta}}_{\A^c} + \bs{\v}^* - \X_{\A} \bs{\tilde{\beta}}_{\A}^*)|^2\bigg\} \\
\le &3E^*\|\X_{\A^c}^T\X_{\A}(\bs{\hat{\beta}}_{\A} - \bs{\tilde{\beta}}_{\A}^*)\|^2 + 3\|\X_{\A^c}^T\X_{\A^c}\bs{\hat{\beta}}_{\A^c}\|^2 + 3E^*\|\X_{\A^c}^T\bs{\v}^*\|^2 \\
\le &3(nD)^2E^*\|\bs{\hat{\beta}}_{\A} - \bs{\tilde{\beta}}_{\A}^*\|^2 + 3(nD)^2\|\bs{\hat{\beta}}_{\A^c}\|^2 + 3E^*\|\X_{\A^c}^T\bs{\v}^*\|^2. 
\end{align*}

By (\ref{eq18}),  
\begin{align}
\|\bs{\hat{\beta}}_{\A^c}\|^2 \le \|\bs{\hat{\beta}} - \bs{\beta}_0\|^2 \le O_p({p_n \over n}). 
\label{eq6}
\end{align}

We now study $E^*\|\bs{\hat{\beta}}_{\A} - \bs{\tilde{\beta}}_{\A}^*\|^2$.  Let
$$ \bs{\tilde{\beta}}_{\A}^*(\lambda_{n2}, 0) = \arg\min_{\bs{\beta}} \|\y^* - \X_{\A}\bs{\beta}\| + \lambda_{n2}\sum_{j \in \A}\beta_j^2. $$

By using the same arguments for deriving (6.3) in \cite{zou2009adaptive}, we can easily show
\begin{align}
\|\bs{\tilde{\beta}}_{\A}^* - \bs{\tilde{\beta}}_{\A}^*(\lambda_{n2}, 0)\| \le {\lambda_{n1}^+\|\bs{\omega}_{\A}\| \over \zeta_{min}(\X_{\A}^T\X_{\A}) + \lambda_{n2}}. 
\label{eq7}
\end{align}

On the other hand, 
\begin{align*}
\bs{\tilde{\beta}}_{\A}^*(\lambda_{n2}, 0) - \bs{\hat{\beta}}_{\A} = (\X_{\A}^T\X_{\A} + \lambda_{n2}I)^{-1} (-\lambda_{n2}\bs{\hat{\beta}}_{\A} + \X_{\A}^T\X_{\A^c}\bs{\hat{\beta}}_{\A^c} + \X_{\A}^T\bs{\v}^*), 
\end{align*}
by Lemma \ref{lem2},
\begin{align}
E^*\|\bs{\tilde{\beta}}_{\A}^*(\lambda_{n2}, 0) - \bs{\hat{\beta}}_{\A}\|^2 &\le 3{\lambda_{n2}^2\|\bs{\hat{\beta}}_{\A}\|^2 + \|\X_{\A}^T\X_{\A^c}\bs{\hat{\beta}}_{\A^c}\|^2 + E^*\|\X_{\A}^T\bs{\v}^*\|^2 \over \left(\zeta_{min}(\X_{\A}^T\X_{\A}) + \lambda_{n2}\right)^2} \nonumber \\
&\le 3{\lambda_{n2}^2\|\bs{\hat{\beta}}_{\A}\|^2 + (nD)^2\|\bs{\hat{\beta}}_{\A^c}\|^2 + np_0D\sigma^2 \over (nd+\lambda_{n2})^2}. 
\label{eq8}
\end{align}

By assembling (\ref{eq6})--(\ref{eq8}), we get
\begin{align*}
E^*\|\bs{\hat{\beta}}_{\A} - \bs{\tilde{\beta}}_{\A}^*\|^2 \le 2E^*\|\bs{\tilde{\beta}}_{\A}^* - \bs{\tilde{\beta}}_{\A}^*(\lambda_{n2}, 0)\|^2 + 2E^*\|\bs{\tilde{\beta}}_{\A}^*(\lambda_{n2}, 0) - \bs{\hat{\beta}}_{\A}\|^2 \\
\le 6{\lambda_{n1}^{+2}\|\bs{\omega}_{\A}\|^2 + \lambda_{n2}^2\|\bs{\hat{\beta}}_{\A}\|^2 + O_p(np_nD^2) + np_0D\sigma^2 \over (nd+\lambda_{n2})^2}.
\end{align*}

And
\begin{align*}
&E^*\left\{ \sum_{j \notin \A}|\X_j^{T}(\y^* - \X_{\A} \bs{\tilde{\beta}}_{\A}^*)|^2 I(\tilde{\psi}>\psi/2) \right\} \le 3O_p(np_nD^2) + 3np_{\A^c}D\sigma^2 \\
&+ 18n^2D^2{\lambda_{n1}^{+2}p_0(\psi/2)^{-2\gamma} +  \lambda_{n2}^2\|\bs{\hat{\beta}}_{\A}\|^2 + O_p(np_nD^2) + np_0D\sigma^2 \over (nd+\lambda_{n2})^2} \\
&= O_p(np_n) + O_p(\psi^{-2\gamma}\lambda_{n1}^{+2}p_0).  
\end{align*}

Then under conditions (A1)--(A2) and (A4)--(A5), 
\begin{align*}
B_1 &\le {O_p\left({p_n \over n}\right)^\gamma \over \lambda_{n1}^{+2}} [O_p(np_n) + O_p(\psi^{-2\gamma}\lambda_{n1}^{+2}p_0)] \\
&\le O_p\left({n \over \lambda_{n1}^{+2} n^{(1-\varrho)(1+\gamma)-1}}\right) + O_p\left({1 \over \psi^{2\gamma} n^{(1-\varrho)(1+\gamma)-1}}\right) \\
&\to 0. 
\end{align*}

Hence (\ref{eq5}) is proved.  So far we have shown that $\bs{\tilde{\beta}}^* = ((1+{\lambda_{n2} \over n})\bs{\tilde{\beta}}_{\A}^*, \b{0})$  with probability tending to 1, where $\bs{\tilde{\beta}}^*$ is the adaptive Elastic-Net estimate using residual bootstrap data.   To prove $\lim_{n \to \infty} P^*(\mathcal{T}_n^*= \A \mid \lambda_{n1}^+ ) = 1$, we still need to show that $P(\min_{j \in \A} |\tilde{\beta}_j^*| > 0) \to 1$.

Let $\hat{\psi} = \min_{j \in \A} |\hat{\beta}_j|$ and $\tilde{\psi}^* = \min_{j \in \A} |\tilde{\beta}_j^*|$.  By (\ref{eq18}), 
$$ P(\hat{\psi} \le \psi/2) \le P(\|\bs{\hat{\beta}} - \bs{\beta}_0\| \ge \psi/2) \le O_p\big({p_n \over n\psi^2}\big) \to 0. $$
Hence $P(\hat{\psi} > \psi/2) \to 1$ as $n \to \infty$ where $\psi > 0$.  Under condition (A4), 
\begin{align*}
P(\tilde{\psi}^* \le \hat{\psi}/2) &\le P(\tilde{\psi}^* \le \hat{\psi}/2, \tilde{\psi} > \psi/2) + P(\tilde{\psi} \le \psi/2) \\
&\le P(\|\bs{\hat{\beta}}_{\A} - \bs{\tilde{\beta}}_{\A}^*\| \ge \hat{\psi}/2, \tilde{\psi} > \psi/2) + B_2  \\
&\le {16 \over \psi^2} E^*\big(\|\bs{\hat{\beta}}_{\A} - \bs{\tilde{\beta}}_{\A}^*\|^2I(\tilde{\psi} > \psi/2)\big) + B_2 \\
&\le {96 \over \psi^2} {\lambda_{n1}^{+2}p_0(\psi/2)^{-2\gamma} +  \lambda_{n2}^2\|\bs{\hat{\beta}}_{\A}\|^2 + O_p(np_nD^2) + np_0D\sigma^2 \over (nd+\lambda_{n2})^2} + B_2 \\
&= O_p\bigg(\Big({\lambda_{n1}^+ \over \sqrt{n}\psi^\gamma}\Big)^2{p_0 \over n\psi^2}\bigg) + O_p\big({p_n \over n\psi^2}\big) \\
&\to 0, 
\end{align*}
which indicates $P(\tilde{\psi}^* > \hat{\psi}/2) \to 1$ as $n \to \infty$.  To sum up, $\lim_{n \to \infty} P(\tilde{\psi}^* > \psi/4) = 1$.  Thus $\lim_{n \to \infty} P^*(\mathcal{T}_n^*= \A \mid \lambda_{n1}^+ ) = 1$ is proved.

We now prove $\lim_{n \to \infty} P^*(\mathcal{T}_n^*= \M_r \mid \lambda_{n1}' ) < 1$, where $\M_r$ is any $r$-dimensional model, $p_0<r<p_n$, and $\lambda_{n1}'$ is a tuning parameter such that the adaptive Elastic-Net estimator under $\lambda_{n1}'$ is of dimension $r$.  Then $\lambda_{n1}' < \lambda_{n1}^+$, hence $\lambda_{n1}'/\sqrt{n} \to 0$. If it also satisfies $\lim_{n \to \infty} {\lambda_{n1}'^2 n^{(1-\varrho)(1+\gamma)-1} \over n} \to \infty$, we would have $P^*(\mathcal{T}_n^*= \A \mid \lambda_{n1}')=1$ based on previous proof, which contradicts with the definition of $\lambda_{n1}'$.  Therefore,  
$$\lim_{n \to \infty} {\lambda_{n1}'^2 n^{(1-\varrho)(1+\gamma)-1} \over n} < \infty.$$

To prove $\lim_{n \to \infty} P^*(\mathcal{T}_n^*= \M_r \mid \lambda_{n1}' ) < 1$, by the KKT regularity conditions it suffices to show  
$$ P^* \left\{ \forall j \notin \M_r, |\X_j^{*T}(\y^* - \X_{\M_r} \bs{\hat{\beta}}_{\M_r}^*)| < \lambda_{n1}' \omega_j \right\} < 1, $$
or equivalently 
\begin{align*}
P^* \left\{ \exists j \notin \M_r, |\X_j^{*T}(\y^* - \X_{\M_r} \bs{\hat{\beta}}_{\M_r}^*)| \ge \lambda_{n1}' \omega_j \right\} > 0. 
\end{align*}

By following the same arguments for showing (\ref{eq5}), we get
\begin{align*}
&P^* \left\{ \exists j \notin \M_r, |\X_j^{*T}(\y^* - \X_{\M_r} \bs{\hat{\beta}}_{\M_r}^*)| \ge \lambda_{n1}' \omega_j \right\} \\
&\le  {O_p\left({p_n \over n}\right)^\gamma \over \lambda_{n1}'^2} \bigg\{3O_p(np_nD^2) + 3np_{\M_r^c}D\sigma^2 \\
&+18n^2D^2{\lambda_{n1}'^2\|\bs{\omega}_{\M_r}\|^2 +  \lambda_{n2}^2\|\bs{\hat{\beta}}_{\M_r}\|^2 + O_p(np_nD^2) + np_{\M_r}D\sigma^2 \over (nd+\lambda_{n2})^2}\bigg\} \\
&= O_p\left({n \over \lambda_{n1}'^2 n^{(1-\varrho)(1+\gamma)-1}}\right) + O_p\left(\|\bs{\omega}_{\M_r}\|^2\Big({p_n \over n}\Big)^\gamma\right) \\
&\not \to 0. 
\end{align*}

\end{proof}

\begin{lemma}
Suppose conditions (A1), (A5) and (A6) hold.  Denote $\alpha$ an overfit model including the true model, the adaptive Elastic-Net estimate $\bs{\hat{\beta}}_{s_i^c, \alpha}$ from the multi-fold CV then satisfies 
\begin{align*}
E\|\bs{\hat{\beta}}_{s_i^c, \alpha} - \bs{\beta}_{0\alpha}\|^2 &\le 4{\lambda_{n2}^2\|\bs{\beta}_{0\alpha}\|^2 + (n-t)p_{\alpha}D\sigma^2(1+o_p(1)) + \lambda_{n1}^{'2}E\|\bs{\omega}_{\alpha}\|^2 \over [(n-t)d(1+o_p(1))+\lambda_{n2}]^2} \\
&= O_p({p_{\alpha} \over n}), 
\end{align*}
where the adaptive LASSO estimate is a special case with $\lambda_{n2}=0$. 
\label{lem3}
\end{lemma}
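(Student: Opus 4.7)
The plan is to mimic the proof of Theorem 3.1 in \cite{zou2009adaptive} (which produced the analogous bound in equation (\ref{eq19}) and which was reused inside the proof of Theorem \ref{thm2}), now carried out on the subsample $s_i^c$ of size $n-t$ rather than on the full sample of size $n$. The only genuinely new ingredient is condition (A6), which lets us replace $\zeta_{\min}(\X^T\X)\ge nd$ by $\zeta_{\min}(\X_{s_i^c,\alpha}^T\X_{s_i^c,\alpha})\ge (n-t)d(1+o_p(1))$ uniformly in $s_i^c$ and $\alpha$, and similarly for the upper eigenvalue bound used to control quadratic forms with noise.

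First I would introduce the intermediate ridge-type estimator
$$
\bs{\hat{\beta}}_{s_i^c,\alpha}(\lambda_{n2},0)=\arg\min_{\bs{\beta}_\alpha}\|\y_{s_i^c}-\X_{s_i^c,\alpha}\bs{\beta}_\alpha\|^2+\lambda_{n2}\|\bs{\beta}_\alpha\|^2
=(\X_{s_i^c,\alpha}^T\X_{s_i^c,\alpha}+\lambda_{n2}\b{I})^{-1}\X_{s_i^c,\alpha}^T\y_{s_i^c},
$$
with the $L_1$ piece switched off. Because $\alpha$ is an overfit model containing the true model, $\y_{s_i^c}=\X_{s_i^c,\alpha}\bs{\beta}_{0\alpha}+\bs{\v}_{s_i^c}$, so a direct identity gives
$$
\bs{\hat{\beta}}_{s_i^c,\alpha}(\lambda_{n2},0)-\bs{\beta}_{0\alpha}
=(\X_{s_i^c,\alpha}^T\X_{s_i^c,\alpha}+\lambda_{n2}\b{I})^{-1}\bigl(-\lambda_{n2}\bs{\beta}_{0\alpha}+\X_{s_i^c,\alpha}^T\bs{\v}_{s_i^c}\bigr).
$$
Taking expected squared norms and using $E\|\X_{s_i^c,\alpha}^T\bs{\v}_{s_i^c}\|^2=\sigma^2\,\mathrm{tr}(\X_{s_i^c,\alpha}^T\X_{s_i^c,\alpha})\le (n-t)p_\alpha D\sigma^2(1+o_p(1))$ from (A5) and (A6), together with $\zeta_{\min}(\X_{s_i^c,\alpha}^T\X_{s_i^c,\alpha})+\lambda_{n2}\ge(n-t)d(1+o_p(1))+\lambda_{n2}$, produces the first two summands of the target bound (up to the constant factor $2$ from the elementary $\|a+b\|^2\le 2\|a\|^2+2\|b\|^2$).

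Next I would quantify the perturbation caused by switching the $L_1$ piece back on. Exactly as in the derivation of (\ref{eq7}), comparing the KKT (or equivalently subgradient) characterisations of $\bs{\hat{\beta}}_{s_i^c,\alpha}$ and $\bs{\hat{\beta}}_{s_i^c,\alpha}(\lambda_{n2},0)$ and using strong convexity of the quadratic part (whose modulus is at least $\zeta_{\min}(\X_{s_i^c,\alpha}^T\X_{s_i^c,\alpha})+\lambda_{n2}$) yields
$$
\|\bs{\hat{\beta}}_{s_i^c,\alpha}-\bs{\hat{\beta}}_{s_i^c,\alpha}(\lambda_{n2},0)\|\le\frac{\lambda_{n1}^+\|\bs{\omega}_\alpha\|}{\zeta_{\min}(\X_{s_i^c,\alpha}^T\X_{s_i^c,\alpha})+\lambda_{n2}}.
$$
Squaring, taking expectations (the weights $\bs{\omega}_\alpha$ are random through $\bs{\hat{\beta}}_e$), and combining with the ridge bound via $\|a+b\|^2\le 2\|a\|^2+2\|b\|^2$ gives the stated inequality with the numerator $\lambda_{n2}^2\|\bs{\beta}_{0\alpha}\|^2+(n-t)p_\alpha D\sigma^2(1+o_p(1))+\lambda_{n1}^{+2}E\|\bs{\omega}_\alpha\|^2$.

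Finally I would verify that the displayed bound collapses to $O_p(p_\alpha/n)$. Under (A2) and (A4), $\lambda_{n2}^2/n\to 0$, $\lambda_{n1}^{+2}/n\to 0$, $n-t\asymp n$ (since $K$ is fixed and $n=Kt$), and $E\|\bs{\omega}_\alpha\|^2=O_p(p_\alpha)$ on the event $\{\min_{j\in\A}|\hat\beta_{ej}|>\psi/2\}$ whose complement already has probability tending to zero (as handled in the proof of Theorem \ref{thm2}), so each of the three terms in the numerator is either $o(n)$ or $O(np_\alpha)$, while the denominator is of order $n^2$. The adaptive LASSO case follows by setting $\lambda_{n2}=0$ throughout. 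The main obstacle is bookkeeping rather than substance: one must ensure that the $o_p(1)$ term from (A6) is uniform in $s_i^c$ so that it may be pulled out of the bound without disturbing the overall order, which is guaranteed by the $\sup_{s_i}$ in (A6).
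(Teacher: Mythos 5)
Your overall route is the same as the paper's: a KKT/ridge-plus-$L_1$-perturbation decomposition in the spirit of Theorem 3.1 of Zou and Zhang, carried out on the subsample $s_i^c$ with (A6) supplying the uniform eigenvalue control $\zeta_{\min}(\X_{s_i^c,\alpha}^T\X_{s_i^c,\alpha})\ge (n-t)d(1+o_p(1))$. The paper writes out only the adaptive LASSO case, via the one-step identity $\bs{\hat{\beta}}_{s_i^c,\alpha}-\bs{\beta}_{0\alpha}=(\X_{s_i^c,\alpha}^T\X_{s_i^c,\alpha})^{-1}(\X_{s_i^c,\alpha}^T\bs{\v}_{s_i^c}-\lambda_{n1}'\bs{\omega}_\alpha\otimes sgn(\bs{\hat{\beta}}_{s_i^c,\alpha}))$, and defers the Elastic-Net case to ``the same arguments''; your explicit ridge-intermediate construction is exactly what that deferral means, so the structure and the resulting three-term numerator with the factor $4$ match.

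There is, however, one concrete gap in your final rate verification: the claim that $E\|\bs{\omega}_\alpha\|^2=O_p(p_\alpha)$ on the event $\{\min_{j\in\A}|\hat\beta_{ej}|>\psi/2\}$. That event only controls the weights indexed by the true model $\A$. For a \emph{strictly} overfit $\alpha$, the vector $\bs{\omega}_\alpha$ also contains $\omega_j=|\hat\beta_{ej}|^{-\gamma}$ for $j\in\alpha\setminus\A$, where $\beta_{0j}=0$ and hence $|\hat\beta_{ej}|\le O_p\big(\sqrt{p_n/n}\big)$; these weights diverge (at rate at least $(n/p_n)^{\gamma/2}$, and can be infinite if $\hat\beta_{ej}=0$), so $E\|\bs{\omega}_\alpha\|^2$ is not $O_p(p_\alpha)$ and the term $\lambda_{n1}^{'2}E\|\bs{\omega}_\alpha\|^2/n^2$ cannot be dispatched the way you propose. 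The paper closes this by exploiting that $\lambda_{n1}'$ is not a free parameter but the tuning value at which the solution path has dimension $p_\alpha$, and it decreases continuously from $\lambda_{n1}^+$ to $0$ as $\alpha$ grows from the true model to the full model, so the \emph{product} $\lambda_{n1}^{'2}E\|\bs{\omega}_\alpha\|^2$ stays $O_p(np_\alpha)$ (equivalently, for any coordinate active at that point of the path the KKT condition forces $\lambda_{n1}'\omega_j$ to be of order $|\X_j^T(\y-\X\bs{\hat{\beta}})|=O_p(\sqrt{n})$). You need some version of this coupling between $\lambda_{n1}'$ and the divergent weights to finish; with it, the rest of your argument goes through.
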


\begin{proof}
Here we provide a proof for the adaptive LASSO estimator. The adaptive Elastic-Net estimator can be proved by using the same arguments for deriving Theorem 3.1 in \cite{zou2009adaptive} and the strategies in below. 

The adaptive LASSO estimator from the multi-fold CV is 
$$ \bs{\hat{\beta}}_{s_i^c, \alpha} = \arg\min_{\bs{\beta}} \|\b{Y}_{s_i^c} - \X_{s_i^c, \alpha}\bs{\beta}\|^2 + 2\lambda_{n1}'\sum_{j \in \alpha} \omega_j|\beta_j|, $$
which satisfies
$$ \bs{\hat{\beta}}_{s_i^c, \alpha} - \bs{\beta}_{0\alpha} = (\X_{s_i^c, \alpha}^T\X_{s_i^c, \alpha})^{-1} \left(\X_{s_i^c, \alpha}^T\bs{\v}_{s_i^c} - \lambda_{n1}'\bs{\omega}_{\alpha}\otimes sgn(\bs{\hat{\beta}}_{s_i^c, \alpha})\right). $$ 

Hence,
\begin{align*}
E\|\bs{\hat{\beta}}_{s_i^c, \alpha} - \bs{\beta}_{0\alpha}\|^2 &\le {2E\|\X_{s_i^c, \alpha}^T\bs{\v}_{s_i^c}\|^2 + 2\lambda_{n1}^{'2}E\|\bs{\omega}_{\alpha}\|^2 \over \zeta_{min}^2(\X_{s_i^c, \alpha}^T\X_{s_i^c, \alpha})} \\
&\le {2\zeta_{max}(\X_{s_i^c, \alpha}^T\X_{s_i^c, \alpha})p_{\alpha}\sigma^2 + 2\lambda_{n1}^{'2}E\|\bs{\omega}_{\alpha}\|^2 \over \zeta_{min}^2(\X_{s_i^c, \alpha}^T\X_{s_i^c, \alpha})} \\
&\le {2(n-t)p_{\alpha}D\sigma^2(1+o_p(1)) + 2\lambda_{n1}^{'2}E\|\bs{\omega}_{\alpha}\|^2 \over (n-t)^2d^2(1+o_p(1))} \\
&= O_p\left({p_{\alpha} \over n}\right).
\end{align*}
The last equation holds because $\lambda_{n1}'$ continuously decreases from $\lambda_{n1}^+$ to 0 as $\alpha$ changes from the true model to full model. 
\end{proof}

\begin{proof}[Proof of Theorem 3]
We integrate the proof for adaptive Elastic-Net and adaptive LASSO.  Denote $\a$ an overfit model including the true model.  The $\MCV_{\a}$ is 
\begin{align}
\MCV_{\a} &= {1 \over n} \sum_{i=1}^K \|\X_{s_i,\a}\bs{\beta}_{0\a} + \bs{\v}_{s_i} - \X_{s_i, \a} \bs{\hat{\beta}}_{s_i^c, \a} \|^2 \nonumber \\
&= {1 \over n} \bs{\v}^T \bs{\v} + {1 \over n} \sum_{i=1}^K \| \X_{s_i, \a}(\bs{\beta}_{0\a} - \bs{\hat{\beta}}_{s_i^c, \a}) \|^2 \nonumber \\
&+ {2 \over n} \sum_{i=1}^K (\bs \beta_{0\a} - \bs{\hat{\beta}}_{s_i^c, \a})^T \X_{s_i, \alpha}^T \bs{\v}_{s_i}.
\label{eq10}
\end{align}

By Lemma \ref{lem3}, the second term in (\ref{eq10}) satisfies
\begin{align*}
E\| \X_{s_i, \a}(\bs{\beta}_{0\a} - \bs{\hat{\beta}}_{s_i^c, \a}) \|^2 &\le \zeta_{max}(\X_{s_i, \a}^T\X_{s_i, \a}) E\|\bs{\beta}_{0\a} - \bs{\hat{\beta}}_{s_i^c, \a}\|^2 \\
&\le tDO_p\left({p_{\a} \over n}\right) = O_p\left({tp_{\a} \over n}\right), \\
\Var\| \X_{s_i, \a}(\bs{\beta}_{0\a} - \bs{\hat{\beta}}_{s_i^c, \a}) \|^2 &\le E\| \X_{s_i, \a}(\bs{\beta}_{0\a} - \bs{\hat{\beta}}_{s_i^c, \a}) \|^4 \\
&\le O_p\left({tp_{\a} \over n}\right)^2. 
\end{align*}

Hence,
\begin{align}
\| \X_{s_i, \a}(\bs{\beta}_{0\a} - \bs{\hat{\beta}}_{s_i^c, \a}) \|^2 &\le  O_p\left({tp_{\a} \over n}\right), \nonumber \\
{1 \over n} \sum_{i=1}^K \| \X_{s_i, \a}(\bs{\beta}_{0\a} - \bs{\hat{\beta}}_{s_i^c, \a}) \|^2 &\le {K \over n} O_p\left({tp_{\a} \over n}\right) = O_p\left({p_{\a} \over n}\right). 
\label{eq11}
\end{align}

The third term in (\ref{eq10}) fulfills
\begin{align*}
E[(\bs \beta_{0\a} - \bs{\hat{\beta}}_{s_i^c, \a})^T \X_{s_i, \alpha}^T \bs{\v}_{s_i}] &= 0, \\
E|(\bs \beta_{0\a} - \bs{\hat{\beta}}_{s_i^c, \a})^T \X_{s_i, \alpha}^T \bs{\v}_{s_i}|^2 &\le E\|\bs \beta_{0\a} - \bs{\hat{\beta}}_{s_i^c, \a}\|^2 E\|\X_{s_i, \alpha}^T\bs{\v}_{s_i}\|^2 \\
&\le O_p\left({p_{\a} \over n}\right) tp_{\a}D\sigma^2 \\
&= O_p\left({tp_{\a}^2 \over n}\right) . 
\end{align*}

Hence,
\begin{align}
(\bs \beta_{0\a} - \bs{\hat{\beta}}_{s_i^c, \a})^T \X_{s_i, \alpha}^T \bs{\v}_{s_i} &\le O_p\left(\sqrt{{tp_{\a}^2 \over n}}\right), \nonumber \\
{2 \over n} \sum_{i=1}^K (\bs \beta_{0\a} - \bs{\hat{\beta}}_{s_i^c, \a})^T \X_{s_i, \alpha}^T \bs{\v}_{s_i} &\le {2K \over n} O_p\left(\sqrt{{tp_{\a}^2 \over n}}\right) = O_p\left({p_{\a} \over n}\right). 
\label{eq12}
\end{align}

By substituting (\ref{eq11})--(\ref{eq12}) to (\ref{eq10}), we obtain
$$ \MCV_{\a} = {1 \over n} \bs{\v}^T \bs{\v} + O_p\left({p_{\a} \over n}\right). $$

Let $\alpha$ and $\alpha'$ be two overfit models including the true model, then
$$ \lim_{n \to \infty} |\MCV_{\a} - \MCV_{\a'}| = \lim_{n \to \infty} \Big|O_p\Big({p_\alpha - p_{\alpha'} \over n}\Big)\Big| = 0. $$

We now consider an underfit model $\nu$. The $\MCV_{\nu}$ is
\begin{align}
\MCV_{\nu} &= {1 \over n} \sum_{i=1}^K \|\X_{s_i}\bs{\beta}_0 + \bs{\v}_{s_i} - \X_{s_i, \nu} \bs{\hat{\beta}}_{s_i^c, \nu} \|^2 \nonumber \\
&= {1 \over n} \bs{\v}^T \bs{\v} + {1 \over n} \sum_{i=1}^K \| \X_{s_i}\bs{\beta}_0 - \X_{s_i,\nu}\bs{\hat{\beta}}_{s_i^c, \nu} \|^2 \nonumber \\
&+ {2 \over n} \sum_{i=1}^K (\X_{s_i}\bs{\beta}_0 - \X_{s_i,\nu}\bs{\hat{\beta}}_{s_i^c, \nu})^T \bs{\v}_{s_i}.
\label{eq13}
\end{align}

Let $\bs{\hat{\beta}}_\nu$ be an adaptive LASSO or adaptive Elastic-Net estimator under $\nu$. The second term in (\ref{eq13}) satisfies
\begin{align}
&{1 \over n}\sum_{i=1}^k \|\X_{s_i} \bs{\beta}_0 - \X_{s_i, \nu} \bs{\hat{\beta}}_{s_i^c, \nu} \|^2 \nonumber \\  
\ge &{1 \over 2n}\sum_{i=1}^k  \|\X_{s_i}[\bs{\beta}_0 - 
\big(\begin{smallmatrix}
\bs{\hat{\beta}}_\nu \\
\b{0}_{\nu^c}
\end{smallmatrix}\big)
]\|^2 - {1 \over n}\sum_{i=1}^k  \|\X_{s_i, \nu}(\bs{\hat{\beta}}_{s_i^c, \nu} - \bs{\hat{\beta}}_\nu)\|^2 \nonumber \\
\ge &{1 \over 2n}\sum_{i=1}^k \zeta_{min}(\X_{s_i}^T\X_{s_i})\|\bs{\beta}_0 - \big(\begin{smallmatrix}
\bs{\hat{\beta}}_\nu \\
\b{0}_{\nu^c}
\end{smallmatrix}\big)
\|^2 - {1 \over n}\sum_{i=1}^k \zeta_{max}(\X_{s_i, \nu}^T \X_{s_i, \nu}) \|\bs{\hat{\beta}}_{s_i^c, \nu} - \bs{\hat{\beta}}_\nu\|^2 \nonumber \\
\ge &{d\|\bs{\beta}_{0\nu^c}\|^2 \over 2} - o_p(1). 
\label{eq14}
\end{align}

For the third term in (\ref{eq13}),
\begin{align*}
&E[(\X_{s_i}\bs{\beta}_0 - \X_{s_i,\nu}\bs{\hat{\beta}}_{s_i^c, \nu})^T \bs{\v}_{s_i}] = 0, \\
&E|(\X_{s_i}\bs{\beta}_0 - \X_{s_i,\nu}\bs{\hat{\beta}}_{s_i^c, \nu})^T \bs{\v}_{s_i}|^2 \\
\le &2E|[\bs{\beta}_0 - 
\big(\begin{smallmatrix}
\bs{\hat{\beta}}_\nu \\
\b{0}_{\nu^c}
\end{smallmatrix}\big)
]^T\X_{s_i}^T\bs{\v}_{s_i}|^2 +2E|(\bs{\hat{\beta}}_\nu - \bs{\hat{\beta}}_{s_i^c, \nu})^T\X_{s_i, \nu}^T\bs{\v}_{s_i}|^2 \\
\le &2\left(\|\bs{\beta}_{0\nu^c}\|^2 + o_p(1)\right) E\|\X_{s_i}^T\bs{\v}_{s_i}\|^2 + 2o_p(1)E\|\X_{s_i, \nu}^T\bs{\v}_{s_i}\|^2 \\
\le &2\left(\|\bs{\beta}_{0\nu^c}\|^2 + o_p(1)\right)tp_nD\sigma^2 + 2tp_{\nu}D\sigma^2o_p(1) \\
= &O_p\left(\|\bs{\beta}_{0\nu^c}\|^2tp_n\right). 
\end{align*}

Hence, 
\begin{align}
(\X_{s_i}\bs{\beta}_0 - \X_{s_i,\nu}\bs{\hat{\beta}}_{s_i^c, \nu})^T \bs{\v}_{s_i} &\le O_p\left(\|\bs{\beta}_{0\nu^c}\|\sqrt{tp_n}\right), \nonumber \\
{2 \over n} \sum_{i=1}^K (\X_{s_i}\bs{\beta}_0 - \X_{s_i,\nu}\bs{\hat{\beta}}_{s_i^c, \nu})^T \bs{\v}_{s_i} &\le {2K \over n}O_p\left(\|\bs{\beta}_{0\nu^c}\|\sqrt{tp_n}\right) \nonumber\\
&= O_p \left( \|\bs{\beta}_{0\nu^c}\| \sqrt{{p_n \over n}} \right). 
\label{eq15}
\end{align}

By substituting (\ref{eq14})--(\ref{eq15}) to (\ref{eq13}), we get
$$ \MCV_{\nu} \ge {1 \over n} \bs{\v}^T \bs{\v} + {d\|\bs{\beta}_{0\nu^c}\|^2 \over 2} + O_p \left(\|\bs{\beta}_{0\nu^c}\| \sqrt{{p_n \over n}}\right). $$

If $\alpha$ is an overfit model and $\nu$ is an underfit model, we have 
\begin{align}
&\lim_{n \to \infty} \MCV_{\nu} - \MCV_{\alpha} \nonumber \\
\ge &{d\|\bs{\beta}_{0\nu^c}\|^2 \over 2} + O_p \left(\|\bs{\beta}_{0\nu^c}\| \sqrt{{p_n \over n}}\right) - O_p\left({p_{\a} \over n}\right) > 0.
\end{align}

So the first part is proved.  We then combine it with Corollaries \ref{cor1}--\ref{cor4}.  For any $r$, $p_0 < r < p_n$, 
\begin{align}
\lim_{n \to \infty}{\WMF_{p_0} \over \WMF_r} = &\lim_{n \to \infty}{P^*(\mathcal{A}_n^* = \mathcal{A} \mid p_0) \exp[-\MCV_{\mathcal{A}}/{c\sigma^2}]  \over P^*(\mathcal{A}_n^* = M_r \mid r) \exp[-\MCV_{M_r}/{c\sigma^2}] } \nonumber \\
= &\lim_{n \to \infty}{P^*(\mathcal{A}_n^* = \mathcal{A} \mid p_0) \over P^*(\mathcal{A}_n^* = M_r \mid r)} \exp\Big[{\MCV_{M_r} - \MCV_{\mathcal{A}} \over c\sigma^2}\Big] \nonumber \\
= &\lim_{n \to \infty}{P^*(\mathcal{A}_n^* = \mathcal{A} \mid p_0) \over P^*(\mathcal{A}_n^* = M_r \mid r)} \exp\Big[O_p\big({r-p_0 \over n}\big)\Big] \nonumber \\
> &1.
\label{eq16}
\end{align}
And for any $r'$, $0 < r' <p_0$,
\begin{align}
&\lim_{n \to \infty}{\WMF_{p_0} \over \WMF_{r'}} = \lim_{n \to \infty}{P^*(\mathcal{A}_n^* = \mathcal{A} \mid p_0) \exp[-\MCV_{\mathcal{A}}/{c\sigma^2}]  \over P^*(\mathcal{A}_n^* = M_{r'} \mid r') \exp[-\MCV_{M_{r'}}/{c\sigma^2}] } \nonumber \\
= &\lim_{n \to \infty}{P^*(\mathcal{A}_n^* = \mathcal{A} \mid p_0) \over P^*(\mathcal{A}_n^* = M_{r'} \mid r')} \exp\Big[{\MCV_{M_{r'}} - \MCV_{\mathcal{A}} \over c\sigma^2}\Big] \nonumber \\
\ge &\lim_{n \to \infty}{P^*(\mathcal{A}_n^* = \mathcal{A} \mid p_0) \over P^*(\mathcal{A}_n^* = M_{r'} \mid r')} \exp\Big[{{d \over 2}\|\bs{\beta}_{0M_{r'}^c}\|^2 + O_p \big(\|\bs{\beta}_{0 M_{r'}^c}\| \sqrt{{p_n \over n}}\big) - O_p\big({p_0 \over n}\big) \over c\sigma^2}\Big] \nonumber \\
> &1.
\label{eq17}
\end{align} 
Then model selection consistency of the WMF procedure can be deduced from (\ref{eq16})--(\ref{eq17}).  
\end{proof}

\section{Additional simulation results}
\label{sec:asr}

\begin{figure}
\centering
\includegraphics[scale=0.42]{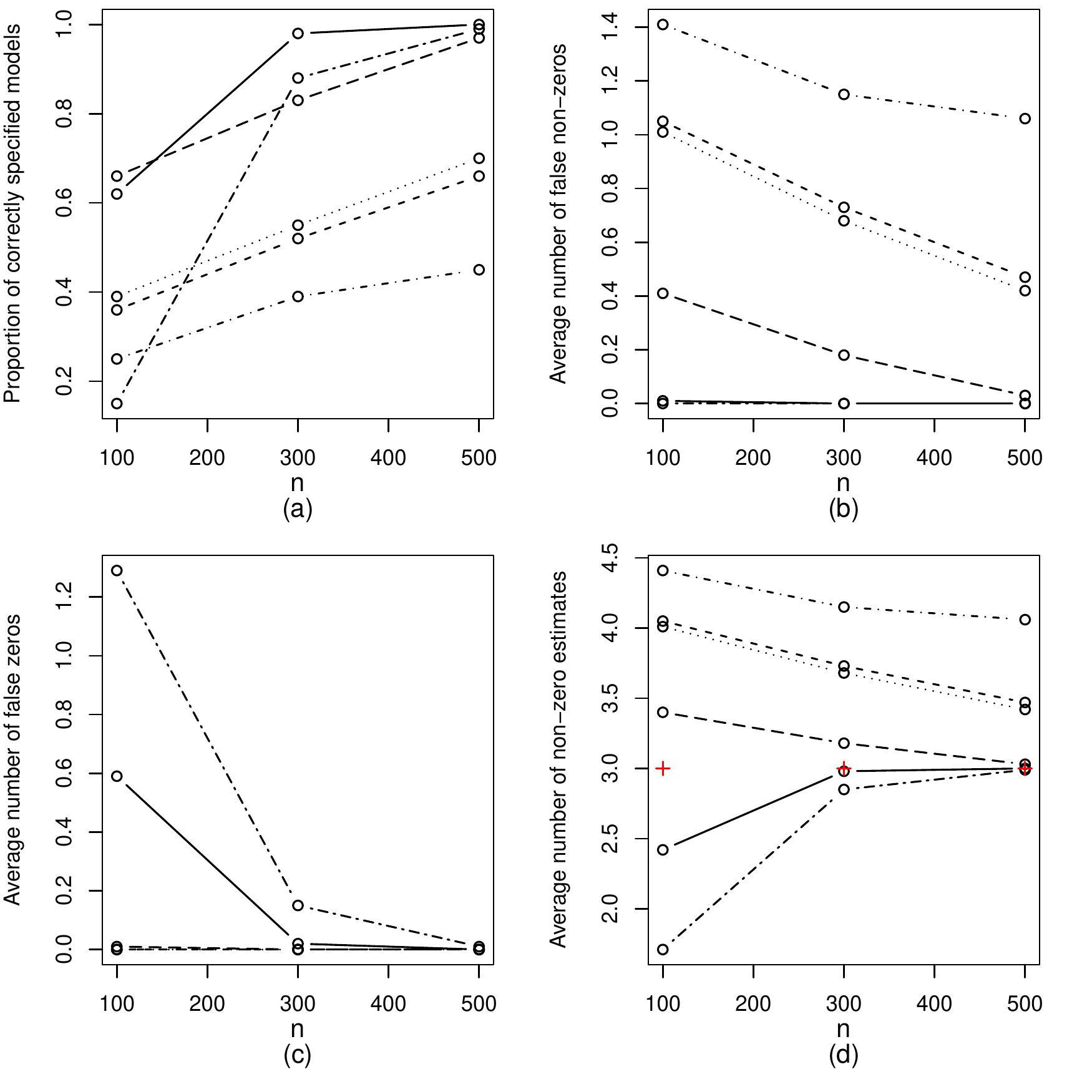}
\caption{Results of scenario 1 using residual bootstrap data: (a) proportion of correctly specified models; (b) average number of false non-zeros; (c) average number of false zeros; (d) average value of estimated model sizes.}
\end{figure}

\begin{figure}
\centering
\includegraphics[scale=0.42]{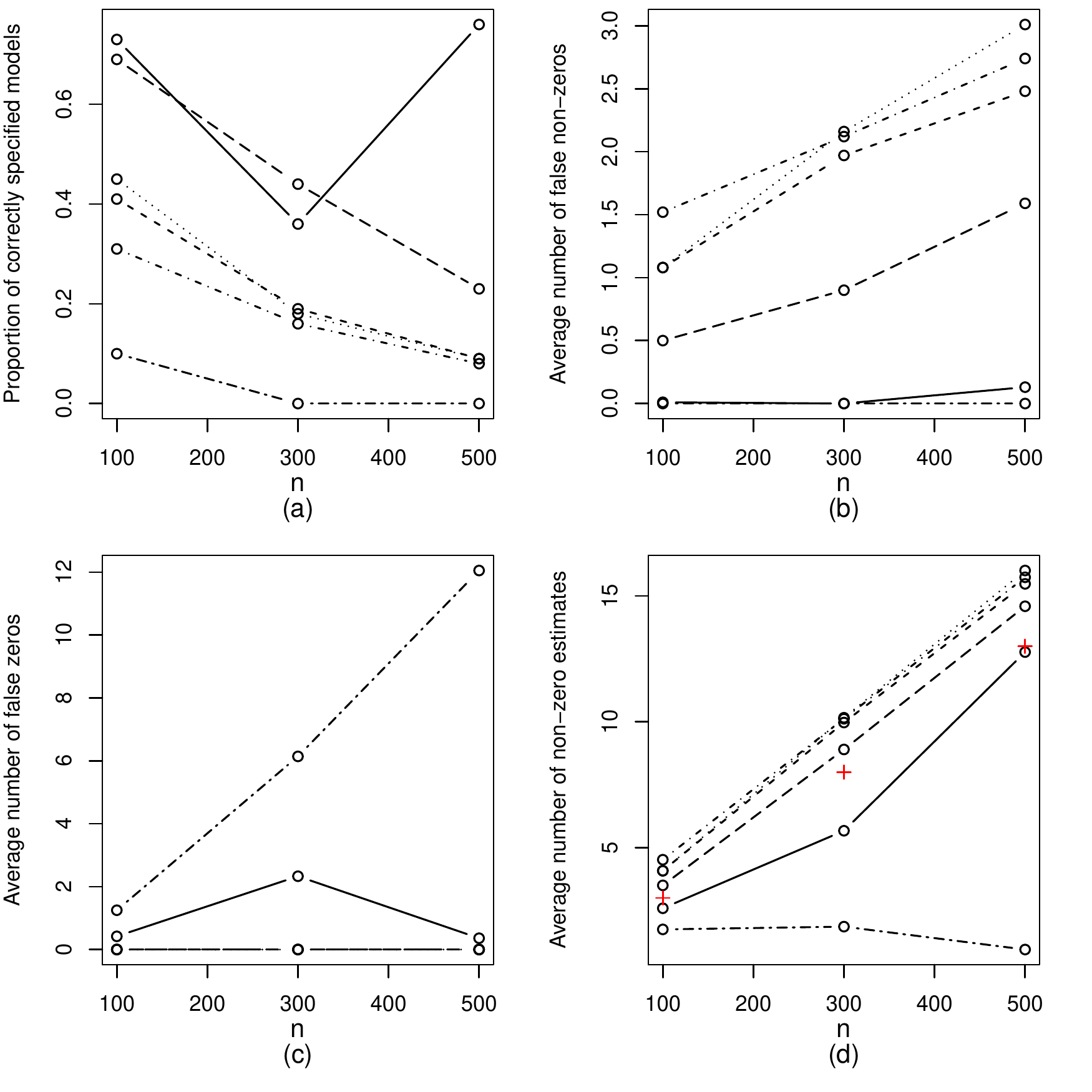}
\caption{Results of scenario 2 using residual bootstrap data: (a) proportion of correctly specified models; (b) average number of false non-zeros; (c) average number of false zeros; (d) average value of estimated model sizes.}
\end{figure}

\begin{figure}
\centering
\includegraphics[scale=0.42]{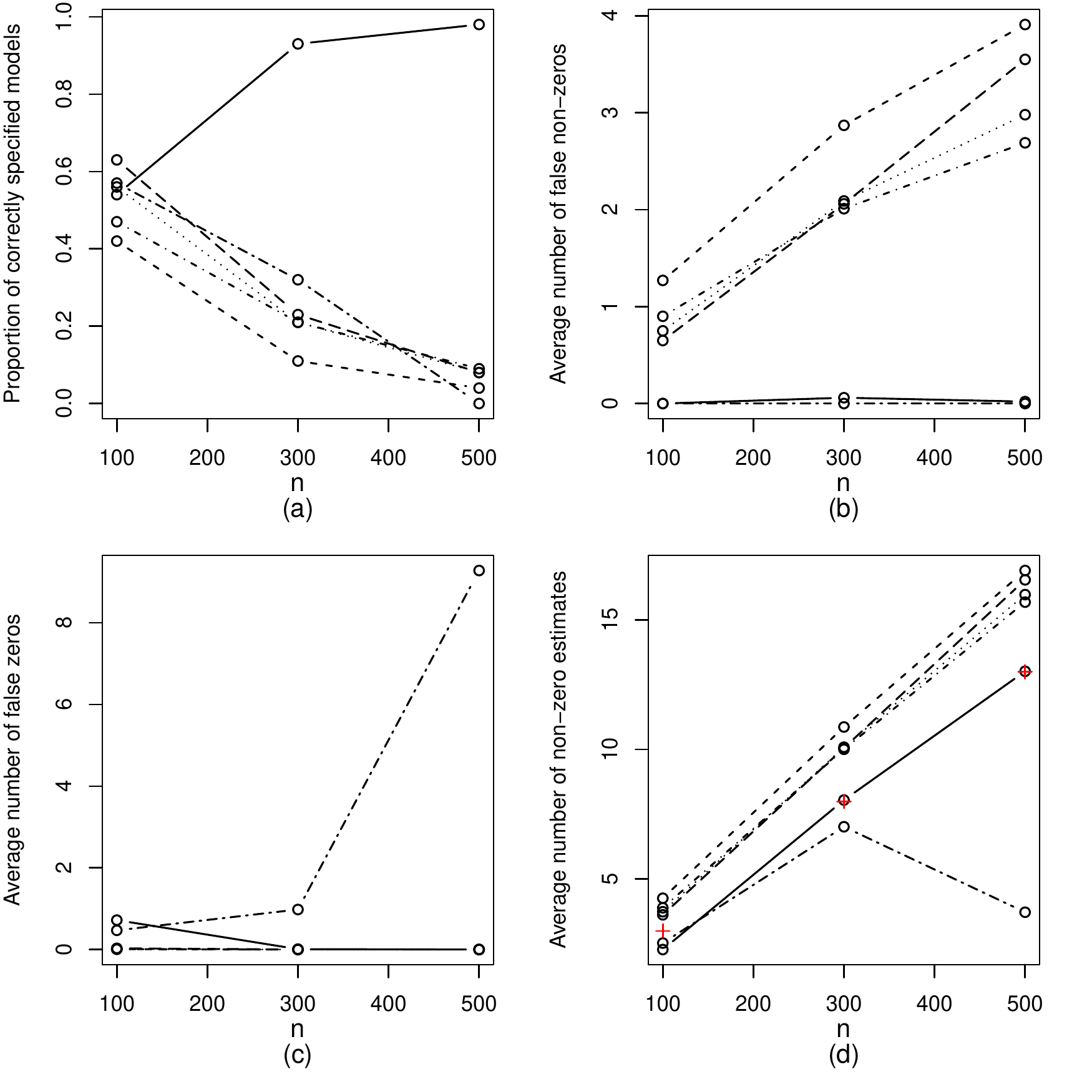}
\caption{Results of scenario 3 using residual bootstrap data: (a) proportion of correctly specified models; (b) average number of false non-zeros; (c) average number of false zeros; (d) average value of estimated model sizes.}
\end{figure}

\begin{figure}
\centering
\includegraphics[scale=0.42]{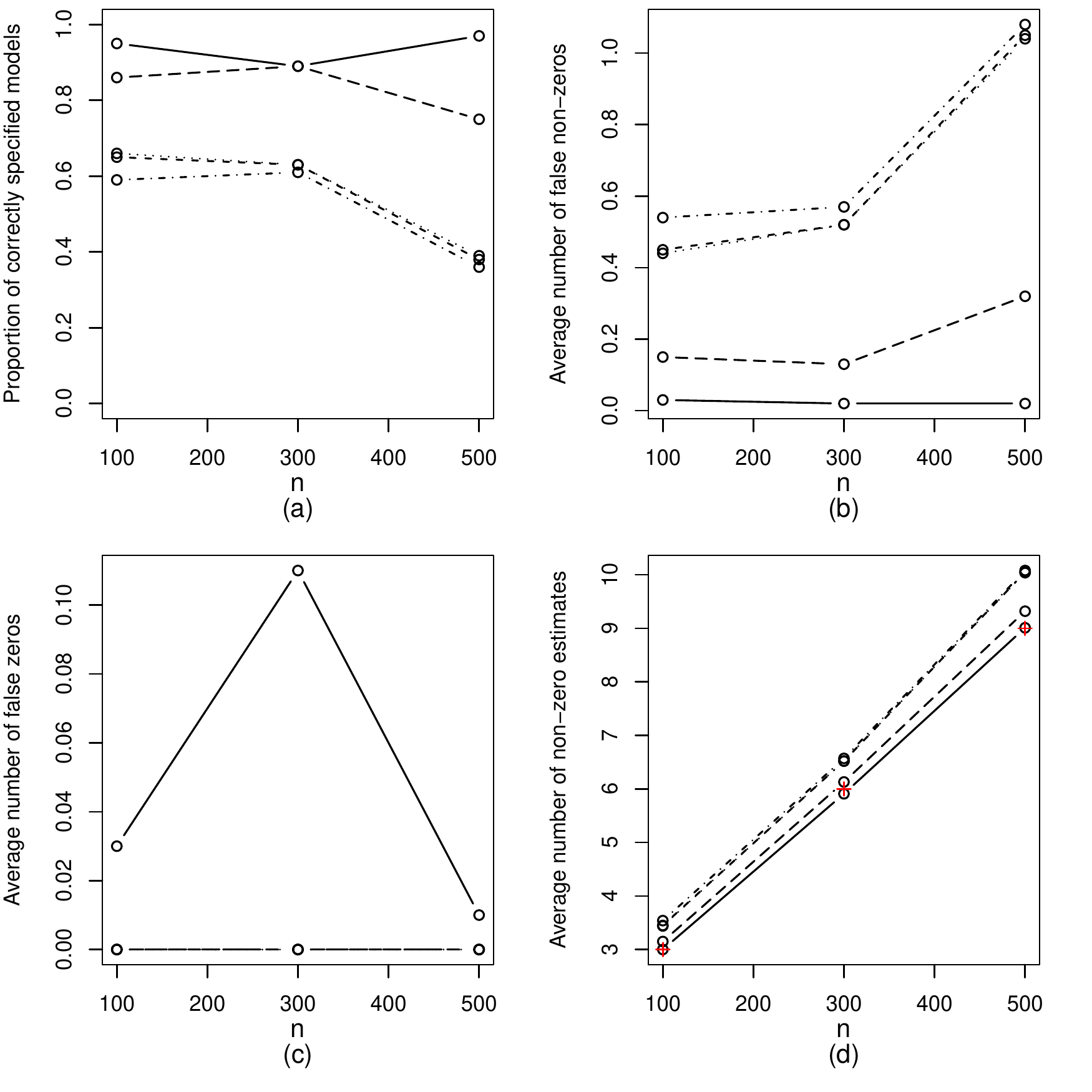}
\caption{Results of scenario 4 using paired bootstrap data: (a) proportion of correctly specified models; (b) average number of false non-zeros; (c) average number of false zeros; (d) average value of estimated model sizes.}
\end{figure}

\begin{figure}
\centering
\includegraphics[scale=0.42]{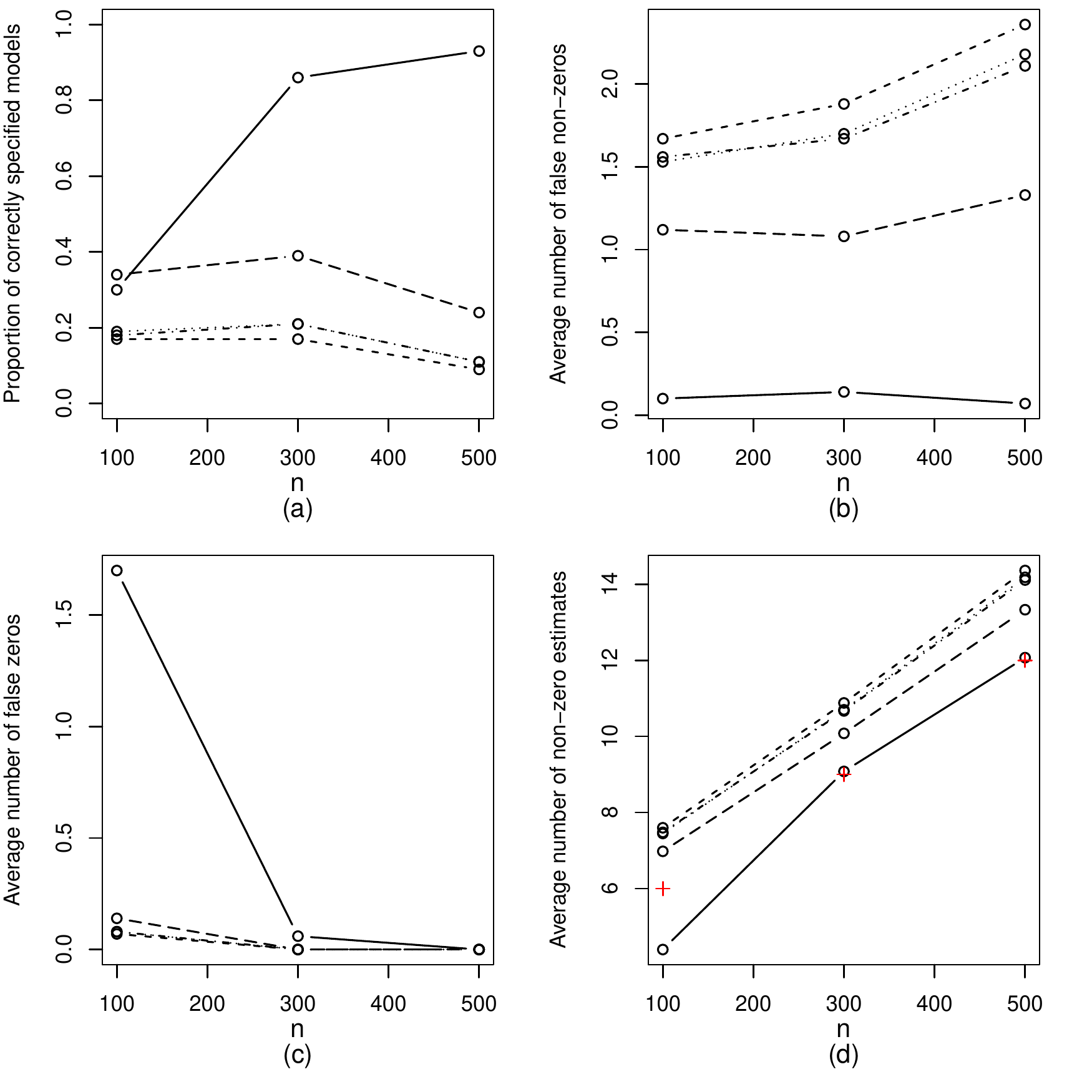}
\caption{Results of scenario 5 using paired bootstrap data: (a) proportion of correctly specified models; (b) average number of false non-zeros; (c) average number of false zeros; (d) average value of estimated model sizes.}
\end{figure}

\begin{figure}
\centering
\includegraphics[scale=0.42]{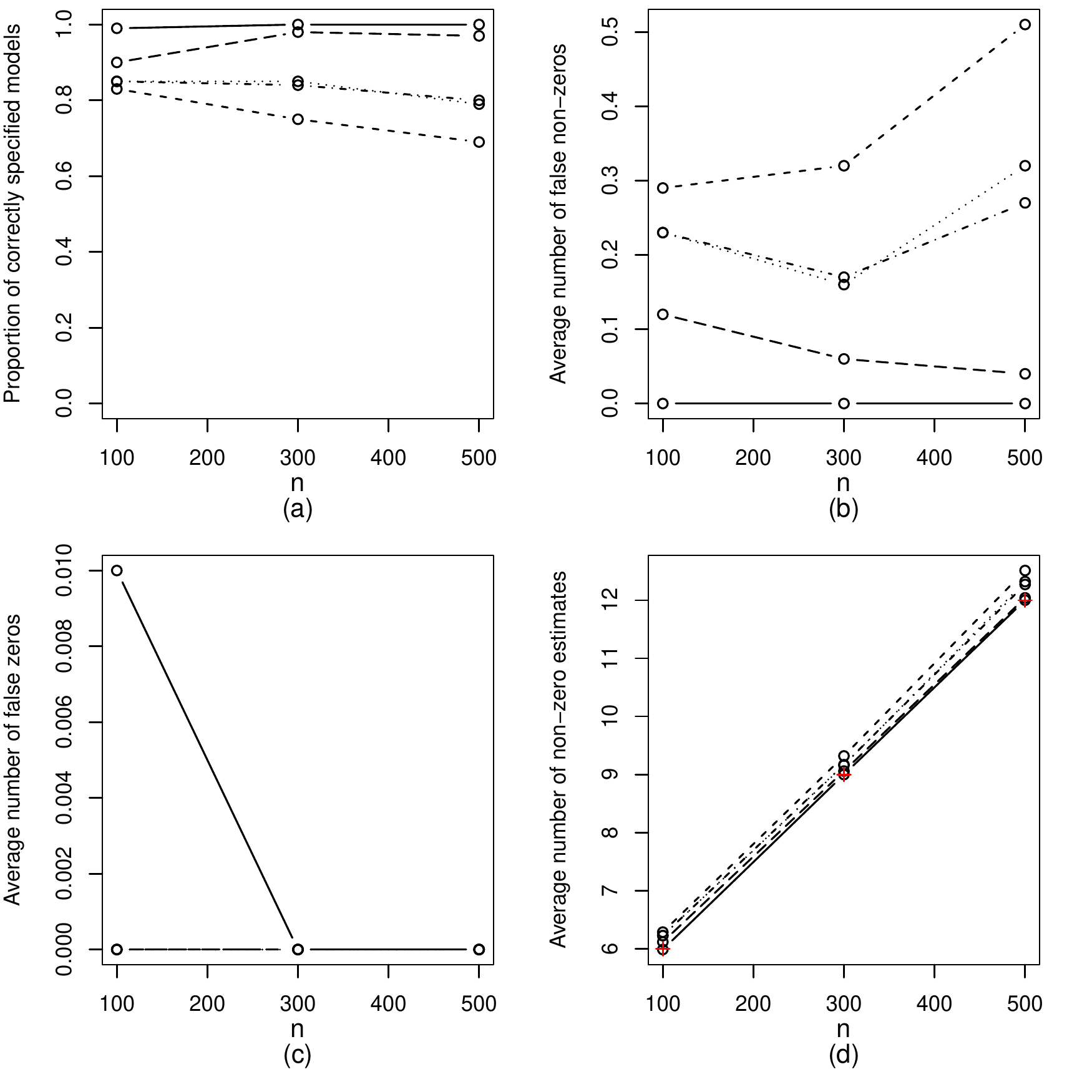}
\caption{Results of scenario 6 using paired bootstrap data: (a) proportion of correctly specified models; (b) average number of false non-zeros; (c) average number of false zeros; (d) average value of estimated model sizes.}
\end{figure}

\newpage
\bibliographystyle{unsrt}
\bibliography{example}

\end{document}